\newtheorem{theorem}{Theorem}[section]
\newtheorem{lemma}[theorem]{Lemma}
\newtheorem{remark}[theorem]{Remark}
\newtheorem*{definition}{Definition}
\newtheorem{corollary}[theorem]{Corollary}
\newtheorem{fact}[theorem]{Fact}
\newcommand{\Z}{\mathbb Z}
\newcommand{\R}{\mathbb R}
\newcommand{\C}{\mathbb C}
\newcommand{\T}{\mathbb T}
\newcommand{\Q}{\mathbb Q}
\definecolor{deepgreen}{cmyk}{1,0,1,0.5}
\title[Avila's acceleration, zeros, localization]{Avila's acceleration via zeros of determinants, and applications to Schr\"odinger cocycles}
\author[R.\ Han]{Rui Han}
\address{Department of Mathematics \\ Louisiana State University  \\  Baton Rouge, LA 70803, USA}
\email{rhan@lsu.edu}
\author[W.\ Schlag]{Wilhelm Schlag}
\address{Department of Mathematics \\ Yale University \\ New Haven, CT 06511, USA}
\email{wilhelm.schlag@yale.edu}
\thanks{
R.\ Han is partially supported by NSF DMS-2143369. 
W.\ Schlag is partially supported by NSF grant DMS-1902691.
We would like to thank Michael Goldstein for his interest and encouragement,
and for suggesting Remark~\ref{rm:DC}}
\begin{document}
\begin{abstract}
In this paper we give a characterization of Avila's quantized acceleration of the Lyapunov exponent via the number of zeros of the Dirichlet determinants in finite volume.
As applications, we prove $\beta$-H\"older continuity of the integrated density of states for supercritical quasi-periodic Schr\"odinger operators restricted to the $\ell$-th stratum, for any $\beta<(2(\ell-1))^{-1}$ and $\ell\ge2$.  We establish Anderson localization for all Diophantine frequencies for the operator with even analytic potential function on the first supercritical stratum, which has positive measure if it is nonempty.
\end{abstract}

\thanks{}

\maketitle

\section{Introduction}
This paper studies the one-dimensional quasi-periodic Schr\"odinger operator:
\begin{align}\label{def:H_operator}
(H_{\alpha,\theta}\phi)_n=\phi_{n+1}+\phi_{n-1}+f(\theta+n \alpha)\phi_n,
\end{align}
with real-valued analytic potential $f(\theta)\in C_{\eta}^{\omega}(\T)$, $\eta>0$.
Here $\alpha$ is called the frequency and $\theta$ is called the phase, and $\T:=\R/\Z$ is the one-dimensional torus, 
and we shall use $\|\cdot \|_{\T}$ to denote the torus norm.

We study the operator in the positive Lyapunov exponent regime.
The goal of this paper is to build a bridge between the large deviation estimates and avalanche principle developed in a series of papers by Bourgain-Goldstein \cite{BG}, Goldstein-Schlag \cites{GS1,GS2} on the one hand, with Avila's quantized acceleration of the Lyapunov exponent \cite{Global} on the other hand.
In particular, we obtain a sharp characterization of the acceleration in terms of the number of zeros of the Dirichlet determinants, see Theorem \ref{thm:acc=zeros}. 
As applications, we improve the H\"older exponent of the integrated density of states in \cite{GS2} (see Theorem \ref{thm:IDS}), and we prove Anderson localization of the operator on the first supercritical stratum, on which the acceleration equals~$1$, and  for all Diophantine frequencies (see Theorem \ref{thm:loc}).

To be more specific, let 
\begin{align*}
D_n(\theta,E):=\det(H_{\alpha,\theta}-E)|_{[0,n-1]}.
\end{align*}
Since $f$ is analytic, we can write $D_n(\theta, E)$ as a function of $e^{2\pi i \theta}$. Replacing $e^{2\pi i \theta}$ with a complex variable $z$, we shall also denote $D_n(\theta,E)$ by $D_n(z,E)$.
We assume throughout that $\alpha$ satisfies the Diophantine condition
\begin{align}\label{def:DC}
\mathrm{DC}_{c,a}:=\Big\{\alpha\in \T:\, \|n\alpha\|_{\T}\geq \frac{c}{n (\log n)^a}, \text{ for all } |n|\geq 1,\Big\}
\end{align}
where $c>0$ and $a>1$.
It is well-known that for any $a>1$, $\cup_{c>0}\mathrm{DC}_{c,a}$ is a full measure set.

For an energy $E$, let the Lyapunov exponent $L(E,\varepsilon)$ of the complexified operator $H_{\alpha,\theta+i\varepsilon}$ be defined as usual, see~\eqref{def:Ln_E} below.
As a corner stone of his global cocycle theory~\cite{Global}, Avila exhibited quantization of the slope of the Lyapunov exponent as a convex function of $\varepsilon$: the right derivative $\kappa(E,\varepsilon)$ of $L(E,\varepsilon)/(2\pi)$ in $\varepsilon$ is always an integer, see  Section~\ref{sec:LE}.
Using the quantized acceleration, Avila stratified the spectrum $\sigma(H_{\alpha,\theta})$ into 
\begin{align*}
\sigma(H_{\alpha,\theta})=\bigcup_{\ell \geq 1} \mathcal{S}_{\ell},
\end{align*}
where 
\begin{align*}
\mathcal{S}_{\ell}:=\sigma(H_{\alpha,\theta})\cap \{E:\, \kappa(E,0)=\ell-1\}.
\end{align*}
The set $\mathcal{S}_{\ell}$ is called the $\ell$-th stratum of the spectrum.

For $E\in \mathcal{S}_1$, it is necessary that $L(E,0)=0$ since otherwise the corresponding Schr\"odinger cocycle is uniformly hyperbolic which leads to a contradiction with $E\in \sigma(H_{\alpha,\theta})$.
In fact, $\mathcal{S}_1$ is refereed to as the subcritical regime.
This stratum is usually studied by KAM and reducibility methods, see e.g. \cites{DS,Eliasson,Puig, A_ac, AJ, A_ar_ac}.

For each $\ell\geq 2$, we further divide
\begin{align*}
\mathcal{S}_{\ell}=\mathcal{S}_{\ell}^+\cup \mathcal{S}_{\ell}^0,
\end{align*}
where $\mathcal{S}_{\ell}^+:=\mathcal{S}_{\ell}\cap \{E:\, L(E,0)>0\}$ and $\mathcal{S}_{\ell}^0:=\mathcal{S}_{\ell}\cap \{E:\, L(E,0)=0\}$.
The energies in $\cup_{\ell\geq 2} \mathcal{S}_{\ell}^0$ are referred to as the critical energies, and those in $\cup_{\ell\geq 2} \mathcal{S}_{\ell}^+$ are supercritical.

One of the central examples of quasi-periodic Schr\"odinger operator is the almost Mathieu operator:
\begin{align}\label{def:AMO}
(H^{\mathrm{AMO}}_{\lambda,\alpha,\theta}\phi)_n=\phi_{n+1}+\phi_{n-1}+2\lambda \cos2\pi(\theta+n\alpha)\phi_n.
\end{align}
The supercritical, critical and subcritical regime corresponds to the regions where the coupling constant satisfies $|\lambda|>1$, $|\lambda|=1$, and $0<|\lambda|<1$, respectively.
A special feature for the almost Mathieu operator is that for $|\lambda|>1$, $\sigma(H^{\mathrm{AMO}}_{\lambda,\alpha,\theta})=\mathcal{S}_2^+$, hence the entire spectrum is contained in the first supercritical stratum.

Our analysis focuses on general analytic potentials in the supercritical regime. 
We begin with an immediate corollary of \cite{Global} and~\cite{DGSV} on the Lebesgue measure of strata. 

\begin{theorem}\label{thm:posmeas}
Let $\alpha\in \mathrm{DC}_{c,a}$ for some $c>0$ and $a>1$.
For any $k\ge2$, if 
\begin{equation}\label{eq:unionS}
\bigcup_{j=2}^k \mathcal{S}_j^+\ne\emptyset,\text{\ \  then }\Big| \bigcup_{j=2}^k \mathcal{S}_j^+\Big|>0
\end{equation}
In particular, if $  \mathcal{S}_2^+\ne\emptyset$, then $|  \mathcal{S}_2^+|>0$.
\end{theorem}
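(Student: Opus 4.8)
The plan is to prove that $\bigcup_{j=2}^{k}\mathcal{S}_j^+$ is relatively open inside the supercritical part of the spectrum, and then to feed this into the homogeneity of the supercritical spectrum from \cite{DGSV}, which automatically turns ``nonempty'' into ``positive Lebesgue measure.'' The two imported ingredients are Avila's convexity and integer quantization of the acceleration from \cite{Global}, and the homogeneity (Carleson thickness) of the piece of the spectrum on which $L(\cdot,0)$ is bounded below, from \cite{DGSV}.

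First I would record that
\[
\bigcup_{j=2}^{k}\mathcal{S}_j^+=\sigma(H_{\alpha,\theta})\cap\big\{E\in\R:\ L(E,0)>0,\ \kappa(E,0)\le k-1\big\},
\]
where the bound $\kappa(E,0)\le k-1$ may be sharpened to $1\le\kappa(E,0)\le k-1$ for free, since an $E\in\sigma(H_{\alpha,\theta})$ with $\kappa(E,0)=0$ lies in $\mathcal{S}_1$, hence satisfies $L(E,0)=0$ and is excluded by the condition $L(E,0)>0$. The structural point is that $U_k:=\{E\in\R:\ L(E,0)>0,\ \kappa(E,0)\le k-1\}$ is \emph{open}. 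Openness of $\{E:L(E,0)>0\}$ comes from continuity of the Lyapunov exponent in the energy (valid since $\alpha$ is irrational). For the acceleration I would use convexity of $\varepsilon\mapsto L(E,\varepsilon)$: the difference quotient $\tfrac1{2\pi\varepsilon}\big(L(E,\varepsilon)-L(E,0)\big)$ decreases, as $\varepsilon\downarrow0$, to the right slope $\kappa(E,0)$, so
\[
\kappa(E,0)=\inf_{\varepsilon>0}\ \frac{L(E,\varepsilon)-L(E,0)}{2\pi\varepsilon}.
\]
For each fixed $\varepsilon\ge0$ the map $E\mapsto L(E,\varepsilon)$ is upper semicontinuous (being an infimum over $n$ of continuous functions of $E$), while $E\mapsto L(E,0)$ is continuous; hence each quotient above is upper semicontinuous in $E$, and so is their infimum $\kappa(\cdot,0)$. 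As $\kappa(\cdot,0)$ is integer valued, $\{E:\kappa(E,0)\le k-1\}=\{E:\kappa(E,0)<k\}$ is open, so $U_k$ is open.

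Now assume $\bigcup_{j=2}^{k}\mathcal{S}_j^+\ne\emptyset$ and fix $E_0$ in it; then $E_0\in\sigma(H_{\alpha,\theta})$, $\rho_0:=L(E_0,0)>0$, and $E_0\in U_k$. Choose $\delta>0$ so small that $I:=(E_0-\delta,E_0+\delta)$ is contained in the open set $U_k\cap\{E:L(E,0)>\rho_0/2\}$. Then every $E\in\sigma(H_{\alpha,\theta})\cap I$ has $L(E,0)>\rho_0/2>0$ and $\kappa(E,0)\le k-1$, so $\sigma(H_{\alpha,\theta})\cap I\subseteq\bigcup_{j=2}^{k}\mathcal{S}_j^+$, and it is enough to show $|\sigma(H_{\alpha,\theta})\cap I|>0$. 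Here \cite{DGSV} enters: for any $\rho>0$ the closed set $K_\rho:=\sigma(H_{\alpha,\theta})\cap\{E:L(E,0)\ge\rho\}$ is homogeneous with a constant $\tau(\rho)>0$ depending only on $\rho,f,\alpha$. Since $L(\cdot,0)>\rho_0/2$ on $I$ we have $\sigma(H_{\alpha,\theta})\cap I=K_{\rho_0/2}\cap I$, and applying homogeneity of $K_{\rho_0/2}$ at its point $E_0$ gives $|\sigma(H_{\alpha,\theta})\cap I|\ge\tau(\rho_0/2)\,\delta>0$ once $\delta<\operatorname{diam}K_{\rho_0/2}$ (positive, as the spectrum has no isolated points). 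This yields \eqref{eq:unionS}, and the final assertion is the case $k=2$.

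The step I expect to be the delicate one is the upper semicontinuity of $E\mapsto\kappa(E,0)$, i.e.\ justifying the representation of $\kappa(E,0)$ as an infimum and the interchange with continuity in $E$; this relies on having, at each fixed height $\varepsilon\ge0$, the needed semicontinuity/continuity of the complexified Lyapunov exponent in the energy, which is precisely where one invokes the analytic machinery behind \cite{Global}. Everything else is bookkeeping around the two cited inputs.
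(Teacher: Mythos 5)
Your proposal is correct and follows essentially the same route as the paper: pick $E_0$ in the set, use upper semicontinuity of the (quantized) acceleration together with continuity of the Lyapunov exponent to isolate a small interval $I$ around $E_0$ on which $L>0$ and $\kappa\le k-1$, observe that $\kappa\ge1$ on the positive-Lyapunov-exponent part of the spectrum so that $\sigma\cap I\subseteq\bigcup_{j=2}^{k}\mathcal{S}_j^+$, and then invoke the homogeneity result of \cite{DGSV} to get $|\sigma\cap I|>0$. The only cosmetic difference is that you re-derive the upper semicontinuity of $\kappa(\cdot,0)$ from the convexity of $\varepsilon\mapsto L(E,\varepsilon)$ rather than simply citing it from \cite[Theorem 5]{Global} as the paper does.
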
 

Indeed, let $E_0$ be in the set in~\eqref{eq:unionS}. Then by the upper semicontinuity and quantization of the acceleration~\cite[Theorem 5]{Global}, as well as  the continuity of the Lyapunov exponent \[I_0\subset \Big\{E\in\R \::\: \kappa(E,0) < k+\frac12, \; L(E,0)>0\Big\}\]
for some interval $I_0=(E_0-\delta, E_0+\delta)$. By~\cite[Theorem H]{DGSV}, we see that $| \sigma(H_{\alpha,\theta})\cap I_0|>\frac12\delta$ if $\delta$ is small enough. By~\cite[Theorem 6]{Global} one has  $\kappa(E,0)\geq 1$ on $\sigma(H_{\alpha,\theta})\cap I_0$, hence $\sigma(H_{\alpha,\theta})\cap I_0= \cup_{j=2}^k \mathcal{S}_j^+$, whence the theorem. 

\medskip

We now turn to the main theorem of our work, which characterizes Avila's acceleration via the number of zeros. Throughout we denote annuli by 
$A_r:=\{z\in \C:\, 1/r<|z|<r\}$ and $$N_n(E, \varepsilon):=\#\{z\in \overline{A_{e^{2\pi \varepsilon}}}:\, D_n(z,E)=0\}$$ is the number of zeros relative to $z$ of a determinant in finite volume. 

\begin{theorem}\label{thm:acc=zeros}
Let $L(E)\geq \tau>0$ and let $\varepsilon\in (0,\eta)$ be such that 
$$L(E,\varepsilon)=L(E,0)+2\pi \kappa(E,0)\varepsilon.$$
Then  
for $n>N(\tau,E,f,\eta,\alpha,\gamma)$,
\begin{align}
\left| \frac{1}{2n} N_n (E, \varepsilon/2)-\kappa(E,0)\right|\leq \varepsilon^{-1} n^{-\gamma}.
\end{align}
\end{theorem}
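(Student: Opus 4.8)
The key identity is Jensen's formula applied to the entire function $z \mapsto D_n(z,E)$ on the annulus $\overline{A_{e^{2\pi\varepsilon/2}}}$. Write $u_n(\theta,\varepsilon) := \frac{1}{n}\log|D_n(e^{2\pi i\theta}e^{-2\pi\varepsilon}, E)|$; then by subadditivity and upper-semicontinuity, $\frac{1}{n}\int_{\T}\log|D_n(e^{2\pi i\theta - 2\pi\varepsilon},E)|\,d\theta \to L(E,\varepsilon)$ as $n\to\infty$, and in fact the large deviation estimate of Goldstein--Schlag gives a polynomial rate: $\bigl|\frac{1}{n}\int_{\T}\log|D_n|\,d\theta - L_n(E,\varepsilon)\bigr| \le Cn^{-\sigma}$ for the finite-volume Lyapunov exponents $L_n$, and the avalanche principle upgrades this to closeness to $L(E,\varepsilon)$ at a polynomial rate, uniformly for $\varepsilon$ in a compact subinterval of $(0,\eta)$. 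Jensen's formula for the annulus $A_R$ with $R = e^{2\pi\varepsilon/2}$ relates the counting function $N_n(E,\varepsilon/2)$ to the difference of the circle averages of $\log|D_n|$ on the two boundary circles $|z| = R$ and $|z| = 1/R$: precisely, the number of zeros in $A_R$ equals, up to the contribution of zeros exactly on the boundary, $\frac{n}{2\pi}\bigl(\frac{\partial}{\partial s}\big|_{s=\varepsilon/2^+}$ minus the left derivative$\bigr)$ of the function $s \mapsto \int_\T \frac{1}{n}\log|D_n(e^{2\pi i\theta - 2\pi s}, E)|\,d\theta$ — i.e. the jump in the slope, which is the discrete/finite-volume analogue of the acceleration jump. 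Concretely, $\frac{1}{2n}N_n(E,\varepsilon/2)$ is comparable to $\frac{1}{2\pi}$ times the slope of the finite-volume average near $s=\varepsilon/2$, which in the limit is $\kappa(E,0)$.

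**Key steps in order.** First, I would record the exact annular Jensen formula: for the entire function $D_n(\cdot, E)$, with $m_n(s) := \int_\T \frac{1}{n}\log|D_n(e^{2\pi i(\theta+is)},E)|\,d\theta$, one has that $m_n$ is convex in $s$, and for $0 < s_1 < s_2 < \eta$,
\[
\frac{1}{n}\#\{z : 1 < |z| \cdot e^{2\pi s_1}, \ |z| \cdot e^{-2\pi s_2} < 1,\ D_n(z,E)=0\} = \frac{1}{2\pi}\bigl(m_n'(s_2^-) - m_n'(s_1^+)\bigr),
\]
more cleanly: the number of zeros $z$ with $e^{-2\pi s} < |z| < e^{2\pi s}$ is, by Jensen on each ``ring,'' the jump $\frac{n}{2\pi}[m_n'(s^+) - m_n'((-s)^+)]$. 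Second, I would invoke the large deviation estimate plus avalanche principle (the technology of \cite{GS1,GS2,BG} assumed available via the hypotheses $L(E)\ge\tau>0$ and $\alpha$ Diophantine) to conclude $|m_n(s) - L(E,s)| \le C(\tau,f,\eta,\alpha)\, n^{-\sigma_0}$ for $s$ in a neighbourhood of $\varepsilon/2$ and of $-\varepsilon/2$, and likewise $|m_n(-s) - L(E,s)| = |m_n(-s) - L(E,-s)|\le Cn^{-\sigma_0}$ using evenness $L(E,-s) = L(E,s)$ and the corresponding symmetry of $m_n$. Third — the analytic heart — I would convert $C^0$-closeness of the convex functions $m_n$ to $L(E,\cdot)$ into closeness of one-sided derivatives at an interior point: since $L(E,s) = L(E,0) + 2\pi\kappa(E,0)|s|$ is affine with slope $2\pi\kappa(E,0)$ on $[0,\varepsilon)$ and slope $-2\pi\kappa(E,0)$ on $(-\varepsilon,0]$, and $m_n$ is convex with $\|m_n - L(E,\cdot)\|_{L^\infty(-\varepsilon,\varepsilon)} \le Cn^{-\sigma_0}$, a standard convexity lemma (compare difference quotients over a window of length $h$) gives $|m_n'(s^\pm) - 2\pi\kappa(E,0)| \le Cn^{-\sigma_0}/h + (\text{modulus of affineness})$, and optimizing $h \sim n^{-\sigma_0/2}$ at the point $s = \varepsilon/2$ (which is at distance $\varepsilon/2$ from the kink at $0$ and from the endpoint $\varepsilon$, so the window stays inside the linear regime once $h < \varepsilon/2$) yields $|m_n'(\varepsilon/2^\pm) - 2\pi\kappa| \le C\varepsilon^{-1}n^{-\sigma_0/2}$; the $\varepsilon^{-1}$ enters precisely because the usable window size is capped by $\varepsilon/2$. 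Finally, combine with step one: $\frac{1}{2n}N_n(E,\varepsilon/2) = \frac{1}{4\pi}(m_n'(\varepsilon/2^+) - m_n'(-\varepsilon/2^-))$ — wait, more carefully the total zero count in the full annulus involves both sides, giving $\frac{1}{2n}N_n = \frac{1}{2\pi}m_n'(\varepsilon/2)$ up to symmetry — and conclude $|\frac{1}{2n}N_n(E,\varepsilon/2) - \kappa(E,0)| \le C\varepsilon^{-1}n^{-\sigma_0/2}$, relabelling $\gamma = \sigma_0/2$ and absorbing constants into $N(\tau,E,f,\eta,\alpha,\gamma)$.

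**Main obstacle.** The delicate point is step three: upgrading uniform closeness of $m_n$ to $L(E,\cdot)$ into closeness of the \emph{one-sided slopes} of $m_n$ (equivalently, the boundary circle-average slopes, equivalently $N_n$) with the correct $\varepsilon^{-1}n^{-\gamma}$ rate. This requires (a) the large deviation estimate to hold with a \emph{uniform} polynomial rate across the relevant range of $s$ near $\pm\varepsilon/2$ — not just pointwise — which is where the Diophantine condition and the $L(E)\ge\tau$ lower bound are essential, and (b) handling the fact that $m_n$ is only convex, not affine, so its one-sided derivatives can a priori oscillate; convexity is exactly what tames this, since a convex function $n^{-\sigma_0}$-close to an affine function on an interval has derivative within $Cn^{-\sigma_0}/(\text{dist to endpoints})$ of that slope. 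A secondary technical nuisance is zeros of $D_n$ lying exactly on the circles $|z| = e^{\pm\pi\varepsilon}$ (hence the closed annulus $\overline{A_{e^{2\pi\varepsilon/2}}}$ in the definition of $N_n$), but these contribute at most $o(n)$ — in fact the bound is lossless after optimization — or can be absorbed by perturbing $\varepsilon/2$ within the window, so they do not affect the stated estimate.
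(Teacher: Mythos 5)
Your proposal is correct and, at the conceptual level, takes the same route as the paper's proof of Theorem~\ref{thm:zero_count} (from which Theorem~\ref{thm:acc=zeros} follows): compare the circle averages $m_n(s)=\int_\T u_n(e^{2\pi(s+i\theta)})\,d\theta$ to the affine function $L(E,\cdot)$ using Lemmas~\ref{lem:vn_upper}, \ref{lem:un_lower} and Theorem~\ref{thm:Ln_L}, and convert this into a zero count via a Jensen-type identity on the annulus plus convexity. The one genuinely different choice you make is to invoke the classical annular Jensen/argument-principle identity directly (together with the reflection symmetry $m_n(s)=m_n(-s)$ of Fact~\ref{fact:zero}, which forces $\mathrm{ind}_{e^{-2\pi s}}=-\mathrm{ind}_{e^{2\pi s}}$ and lets you write $\frac{1}{2n}N_n(s)=\frac{1}{2\pi}m_n'(s^+)$); the paper instead derives the same identity from the explicit Riesz representation and Green's function on the annulus developed in Section~\ref{sec:Riesz} (compare your identity with \eqref{eq:int_final} and \eqref{eq:N_0'}), because that machinery is reused later in Theorems~\ref{thm:Riesz_un} and~\ref{thm:Riesz_vn}. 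Your ``sandwich the derivative of a convex function close to an affine one'' step is the same computation as the paper's \eqref{eq:N_3'}--\eqref{eq:N_6'}.

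Two minor points worth fixing. First, your conventions are inconsistent: you set $m_n(s)$ at radius $e^{-2\pi s}$ but then write the zero-count identity as if $m_n(s)$ were evaluated at radius $e^{2\pi s}$; and $m_n'(s_2^-)-m_n'(s_1^+)$ does not match the annulus $\{e^{-2\pi s_1}<|z|<e^{2\pi s_2}\}$ you describe (the inner circle corresponds to the argument $-s_1$, not $s_1$). Pick one convention and carry it through; the symmetry $m_n(s)=m_n(-s)$ is what lets you reduce to a single one-sided derivative. Second, the ``optimization $h\sim n^{-\sigma_0/2}$'' is a red herring: since $L(E,\cdot)$ is exactly affine on $[0,\varepsilon]$ by hypothesis, there is no competing error term to balance, and you should simply take $h=\varepsilon/2$, which yields the sharper rate $\varepsilon^{-1}n^{-\sigma_0}$ rather than $\varepsilon^{-1}n^{-\sigma_0/2}$ (either suffices for the theorem after relabelling $\gamma$, but the halving is unnecessary and signals a misreading of your own estimate).
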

This theorem gives a characterization of the acceleration $\kappa(E,0)$ in terms of the number of zeros of $D_n(z,E)$ in a neighborhood of the unit circle.
Theorem \ref{thm:acc=zeros} is part of the more general Theorem \ref{thm:zero_count} (see also Theorem \ref{thm:Riesz_un}), where the characterization of $\kappa(E,\varepsilon)-\kappa(E,\varepsilon^-)$, for non-zero $\varepsilon$, is given in terms of zeros of $D_n(z,E)$ near the circle with radius $e^{2\pi \varepsilon}$.
An analogous result for the Riesz mass of $\log \|M_n(z,E)\|$, where $M_n$ is the transfer matrix~\eqref{def:Mn}, is given in Theorem \ref{thm:Riesz_vn}.

A key new ingredient in the proof is a Riesz representation of subharmonic function on the annulus, compared to the one on disks that were used to cover the annulus in \cite{GS1,GS2}.
It turns out the annulus version of the representation is more naturally connected to the Lyapunov exponent, hence to the acceleration. 
As one of the applications of Theorem \ref{thm:acc=zeros}, we improve on the known regularity of the integrated density of states (IDS) as a function of the energy. 

\begin{theorem}\label{thm:IDS}
Assume the potential function $f$ is analytic and fix $\alpha\in \mathrm{DC}_{c,a}$ for some $c>0$ and $a>1$.  Then for any $\ell\ge2$, on the open set $\bigcup_{j=1}^\ell \mathcal{S}_j^+$ the IDS of $H_{\alpha,\theta}$ is $\beta$-H\"older  continuous with any $0<\beta<\frac{1}{2(\ell-1)}$.
\end{theorem}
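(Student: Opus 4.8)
The plan is to deduce Hölder regularity of the IDS from the zero-counting characterization in Theorem~\ref{thm:acc=zeros}, combined with the classical Thouless-type formula relating the IDS to the logarithmic potential of the Lyapunov exponent (equivalently, to averages of $\frac1n\log|D_n(z,E)|$). Write $N(E)$ for the IDS. The starting point is that the difference $N(E')-N(E)$ can be controlled by the oscillation of the finite-volume eigenvalue counting functions, and hence by the number of zeros in $E$ of $D_n(\cdot,E')$-type quantities; more precisely, one uses the standard identity expressing $N(E)$ via $\lim_n \frac1n \int_{\T} \log|D_n(e^{2\pi i\theta},E)|\,d\theta$ up to the fixed constant $\int\log|f|$ (or the Thouless formula $L(E,0)=\int\log|E-E'|\,dN(E')$), so that controlling $|N(E')-N(E)|$ reduces to controlling $|L(E',0)-L(E,0)|$ plus a Hilbert-transform/error term — but the sharp route is instead to count, for fixed large $n$, how many eigenvalues of $H_{\alpha,\theta}|_{[0,n-1]}$ lie in an interval $[E,E']$, uniformly in $\theta$, and then average in $\theta$.

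The key steps, in order: (1) Fix $E_0\in \bigcup_{j=1}^\ell \mathcal S_j^+$; by upper semicontinuity and quantization of $\kappa$ together with continuity of $L$, there is $\delta>0$ and $\varepsilon\in(0,\eta)$ such that on $I_0:=(E_0-\delta,E_0+\delta)$ one has $L(E,0)\ge\tau>0$, $\kappa(E,0)\le \ell-1$, and the linearity $L(E,\varepsilon)=L(E,0)+2\pi\kappa(E,0)\varepsilon$ holds uniformly for $E\in I_0$ (shrinking $\delta$ if necessary, using that the set of $\varepsilon$ where the slope has not yet jumped is open and the jump set is discrete). (2) Apply Theorem~\ref{thm:acc=zeros} on $I_0$: for $n$ large (uniformly over $I_0$, since the threshold $N(\tau,E,f,\eta,\alpha,\gamma)$ depends on $E$ only through $\tau$ and local analytic bounds), $N_n(E,\varepsilon/2)\le 2n\big((\ell-1)+\varepsilon^{-1}n^{-\gamma}\big)\le 2(\ell-1)n + o(n)$, i.e. $D_n(\cdot,E)$ has at most $\approx 2(\ell-1)n$ zeros in the annulus $\overline{A_{e^{\pi\varepsilon}}}$. (3) Translate a bound on the number of zeros of $D_n(z,E)$ in $z$ into a bound on the number of eigenvalues of the truncated operator in an energy window: since $E\mapsto D_n(z,E)$ is, for each fixed $z=e^{2\pi i\theta}$, a real polynomial of degree $n$ in $E$ whose roots are the eigenvalues of $H_{\alpha,\theta}|_{[0,n-1]}$, and the zero set of $D_n$ as a function of $(\theta,E)$ is an analytic variety, a standard argument (Jensen's formula / the Cartan–type estimate on the annulus already developed for Theorem~\ref{thm:acc=zeros}, or a Weierstrass-preparation normalization near the circle) shows that the number of $\theta\in\T$ for which $H_{\alpha,\theta}|_{[0,n-1]}$ has an eigenvalue in an interval of length $h$ is $O\big(N_n\cdot h^{1/(2(\ell-1))} \cdot (\text{polynomial factors})\big)$ — the exponent $1/(2(\ell-1))$ arising exactly because a zero of multiplicity (in a suitable sense) up to $2(\ell-1)$ can move distance $h^{1/(2(\ell-1))}$ under an $O(h)$ perturbation. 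Integrating in $\theta$ and letting $n\to\infty$ gives $N(E')-N(E)\lesssim |E'-E|^{\beta}$ for any $\beta<1/(2(\ell-1))$, uniformly on a slightly smaller interval; a covering argument then yields the bound on all of $\bigcup_{j=1}^\ell \mathcal S_j^+$.

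I expect the main obstacle to be step~(3): converting the $z$-zero count on the annulus into a uniform-in-$\theta$ modulus-of-continuity estimate for the finite-volume spectral counting function with the \emph{sharp} exponent $\frac1{2(\ell-1)}$. Naively, one zero of $D_n(z,E)$ on $|z|=1$ corresponds to one eigenvalue, which would give exponent $1$, not $\frac1{2(\ell-1)}$; the loss to $\frac1{2(\ell-1)}$ must come from the fact that what is actually controlled is the count of zeros of $D_n(z,E)$ in the \emph{full annulus} $\overline{A_{e^{\pi\varepsilon}}}$ of radial width $\sim\varepsilon$, together with the geometry of how zeros of the two-variable analytic function $D_n(z,E)$ cross the unit circle as $E$ varies — a cluster of up to $2(\ell-1)$ zeros off the circle can coalesce onto it, forcing the $E$-root to be Hölder-$\frac1{2(\ell-1)}$ rather than Lipschitz. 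Making this precise requires a careful local factorization of $D_n$ (à la Weierstrass, or via the resultant of $D_n$ with $\partial_z D_n$) with quantitative control traceable back to the large-deviation and avalanche-principle estimates underlying Theorem~\ref{thm:acc=zeros}, and then a Chebyshev/Remez-type argument to bound the sublevel-set measure $|\{\theta: |D_n(e^{2\pi i\theta},E)|\text{ small}\}|$. The rest — the reduction via the Thouless formula and the covering/compactness argument on $\bigcup_{j=1}^\ell\mathcal S_j^+$ — is routine and follows the template of \cite{GS2}.
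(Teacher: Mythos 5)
Your plan is, at its core, the same as the paper's: the one new ingredient is that Theorem~\ref{thm:acc=zeros} gives $N_n(E,\varepsilon/2)\le 2\kappa(E,0)n+o(n)\le 2(\ell-1)n+o(n)$, and this replaces the $2k_0 n$ zero count (with $k_0=\deg f$ for a trigonometric polynomial potential) that drives the H\"older exponent $1/(2k_0)$ in \cite[Theorem~1.1]{GS2}. The paper's proof is then literally just a pointer: it identifies exactly where the $2k_0 n$ count enters in \cite{GS2} --- namely Corollary~14.14 there, which yields the local multiplicity bound $k(\widehat{A}_{j_0},\zeta_{j_0},r^{(2)})\le 2\kappa(E,0)\le 2(\ell-1)$, hence an improved \cite[Theorem~1.4]{GS2} and \cite[Lemma~17.5]{GS2}, hence the exponent claimed via \cite[Section~18]{GS2} --- and makes no attempt to re-derive the modulus-of-continuity mechanism from scratch. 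Where your write-up differs is that in step~(3) you sketch a more direct route (Jensen/Cartan on the annulus, Weierstrass preparation in the two variables $(z,E)$, a coalescing-cluster heuristic, then integrate in $\theta$). That is a reasonable intuition for why the exponent is $1/(2(\ell-1))$, but it is not the route \cite{GS2} actually takes; the actual argument goes through the quantitative separation of Dirichlet eigenvalues and a multi-scale factorization of $D_n$, with the zero count entering only as a bound on local multiplicity. Since you ultimately defer to the GS2 template anyway (as the paper explicitly does --- it warns that the proof is ``not self-contained''), this is not a gap in the logic, but your speculative step~(3) should be read as motivation rather than proof, and in a polished version you should simply cite the specific chain Cor.~14.14 $\to$ Thm.~1.4 $\to$ Lem.~17.5 $\to$ Sec.~18 of \cite{GS2} rather than attempt an independent Weierstrass argument. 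Your step~(1) remark about choosing $\varepsilon$ and $\delta$ uniformly on $I_0$ via upper semicontinuity of $\kappa$ is correct and does need to be said.
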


This result proves the conjecture by You in \cite{You} that the H\"older exponent for IDS is at least $1/(2\kappa(E,0))$ (but we cannot decide equality of the exponent to this ratio). Note that $\bigcup_{j=1}^\ell \mathcal{S}_j^+$ includes gaps in the spectrum on which the IDS is constant. But by Theorem~\ref{thm:posmeas}, if this set of energies is nonempty, then it contains a positive measure subset of the spectrum. 
The proof of Theorem \ref{thm:IDS} is not self-contained. In fact, Theorem~\ref{thm:IDS} follows by combining the proof of \cite[Theorem 1.1]{GS2}  with the characterization of the acceleration (and thus the stratum) in Theorem~\ref{thm:acc=zeros}. Also, recall from~\cite{GS2} that the IDS is Lipschitz off a zero-measure set of energies. 

\begin{remark}
In \cite[Theorem 1.1]{GS2}, it was proved that for $f(\theta) = \sum_{n=-k_0}^{k_0}  c_n e^{2\pi i n\theta}$, a trigonometric polynomial of degree $k_0\ge1$ and assuming $\omega\in {DC}_{c,a}$ and 
$L(E,0)>0$  the IDS is $\beta$-H\"older continuous for any $\beta<\frac{1}{2k_0}$. 
Since for trigonometric $f$ one has $\kappa(E,0)\leq k_0$, Theorem \ref{thm:IDS} is a refinement of \cite[Theorem 1.1]{GS2}.
\end{remark}

\begin{remark}
For the almost Mathieu operator and $\alpha\in \mathrm{DC}$ (see \eqref{def:DC_1}), the IDS was proved to be $1/2$-H\"older if $|\lambda|\neq 0,1$ \cite{AJ}.
The $1/2$-H\"older exponent was also proved for Diophantine $\alpha$ and $f=\lambda\, g$ with $g$ being analytic and the coupling constant $|\lambda|$ being small, first in the perturbative regime (smallness depends on $\alpha, g$) in \cite{Amor}, and then in the non-perturbative regime \cite{AJ} (with dependence on $\alpha$ removed).
For quasi-periodic long-range operators with large trigonometric polynomial potentials and Diophantine frequencies, the H\"older exponent in \cite{GS2} was improved recently in the perturbative regime in \cite{GYZ}.
These results are proved using the reducibility method.
\end{remark}

Our next application concerns Anderson localization (pure point spectrum with exponentially decaying eigenfunctions).
Anderson localization for quasi-periodic operators was first studied perturbatively \cite{FSW,Sinai,E97} where $f=\lambda\, g$ with large $\lambda$ whose largeness depends on both $\alpha$ and $g$.
The first non-perturbative Anderson localization was obtained by Jitomirskaya for the almost Mathieu operator \cite{J94,J99} for any $|\lambda|>1$ and all Diophantine frequencies.
In \cite{BG}, Bourgain and Goldstein proved Anderson localization for the general analytic potential in the supercritical regime for an implicit full measure set of $\alpha$.
In this paper we show that quasi-periodic operators with even potentials in the first supercritical stratum $\mathcal{S}_2^+$ is always Anderson localized for all Diophantine frequencies. This result thus includes that of the supercricitical almost Mathieu operator as a special case.
Let 
\begin{align}\label{def:Theta}
\Theta_{c',b}:=\bigcap_{k\geq 1}\bigcup_{|n|\geq k} \Big\{\theta\in \T:\, \|2\theta+n\alpha\|_{\T}<\frac{c'}{|n|^b}\Big\},
\end{align}
where $c'>0$ and $b>1$.
Clearly the complement $(\Theta_{c',b})^c$ is a full measure set.
\begin{theorem}\label{thm:loc}
Assume the potential function $f$ is even. For $\alpha\in \mathrm{DC}_{c,a}$ for some $c>0$ and $a>1$, and $\theta\in (\Theta_{c',b})^c$ for some $c'>0$ and $b>1$, $H_{\alpha,\theta}$ has pure point spectrum with exponentially decaying eigenfunctions in $\mathcal{S}_2^+$.
\end{theorem}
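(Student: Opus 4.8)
The plan is to follow the non-perturbative route to Anderson localization through large deviation estimates and the avalanche principle of Bourgain--Goldstein and Goldstein--Schlag \cites{BG,GS1,GS2}, combined with the reflection-symmetry argument of Jitomirskaya \cites{J94,J99}. The essential new input is that on $\mathcal{S}_2^+$ the acceleration equals $1$, so Theorem~\ref{thm:acc=zeros} forces the Dirichlet determinants on an interval of length $n$ to have only $(2+o(1))n$ zeros in a fixed neighborhood of the unit circle; this is exactly the quantity that controls the set of resonant phases.

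\emph{Large deviations and the zero count.} For $E\in\mathcal{S}_2^+$ one has $L(E,0)>0$, $\kappa(E,0)=1$, and $L(E,\varepsilon)=L(E,0)+2\pi\varepsilon$ for all small $\varepsilon$. I would first combine the annulus Riesz representation of $\frac1n\log|D_n(\cdot,E)|$ with Theorem~\ref{thm:acc=zeros} to obtain, uniformly for $E\in\mathcal{S}_2^+$ with $L(E,0)\ge\tau$ (any fixed $\tau>0$, letting $\tau\to0$ at the end): a large deviation bound $|\{\theta\in\T:\ \log|D_n(\theta,E)|<nL(E,0)-n^{\sigma}\}|<e^{-n^{\sigma'}}$ for some $0<\sigma'<\sigma<1$, together with the fact that this ``bad set'' lies in at most $Cn$ arcs of length $<e^{-n^{\sigma'}}$, because $D_n(\cdot,E)$ has at most $(2+o(1))n$ zeros near the unit circle. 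Writing $D_{[a,b]}(\theta,E):=\det(H_{\alpha,\theta}-E)|_{[a,b]}$, evenness of $f$ gives the reflection identity $D_{[-n,n]}(\theta,E)=D_{[-n,n]}(-\theta,E)$, so for symmetric windows the bad arcs occur in $\pm$ pairs, i.e.\ the zero set $Z_n$ of $D_{[-n,n]}(\cdot,E)$ (which has $O(n)$ points near $\T$) satisfies $Z_n=-Z_n$.

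\emph{Regularity and the eigenfunction.} Call a block $\Lambda=[a,b]$ of length $\sim n$ \emph{regular} when the determinants of $H_{\alpha,\theta}-E$ on $\Lambda$ and on its sub-windows are $\ge e^{|\Lambda|L(E,0)-n^{\sigma}}$; the avalanche principle then gives $|G_\Lambda(E)(x,y)|\le e^{-\frac12 L(E,0)|x-y|}$ whenever $|x-y|\ge\frac1{10}|\Lambda|$. Let $\phi$ be a polynomially bounded solution of $H_{\alpha,\theta}\phi=E\phi$ with $E\in\mathcal{S}_2^+$; using the even/odd decomposition afforded by evenness of $f$ we may assume $\phi$ does not vanish on $\{0,1\}$. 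The Poisson formula $\phi_x=-G_\Lambda(E)(x,a)\phi_{a-1}-G_\Lambda(E)(x,b)\phi_{b+1}$ shows that $0$ cannot sit deep inside a regular block, so at every large scale $n$ the symmetric block $[-n,n]$ is irregular, forcing $\theta$ into one of the $O(n)$ resonant arcs, i.e.\ within $e^{-n^{\sigma'}}$ of $Z_n$. To prove exponential decay at a site $m$ with $|m|$ large, I would instead produce a regular block $\Lambda\ni m$ with $0\notin\Lambda$ and $\mathrm{dist}(m,\partial\Lambda)\gtrsim|\Lambda|$; the only obstruction is that $\theta+m\alpha$, like $\theta$, lies within $e^{-n^{\sigma'}}$ of $Z_n$ at every scale.

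\emph{No double resonance --- the main obstacle.} It remains to exclude this. If at scale $n$ both $\theta$ and $\theta+m\alpha$ are within $e^{-n^{\sigma'}}$ of $Z_n$, then since $Z_n=-Z_n$ the point $-\theta$ is also within $e^{-n^{\sigma'}}$ of $Z_n$, and \emph{if} the zero of $Z_n$ approached by $\theta+m\alpha$ coincides with the one approached by $-\theta$ we get $\|2\theta+m\alpha\|_{\T}<2e^{-n^{\sigma'}}$, contradicting $\theta\in(\Theta_{c',b})^c$ as soon as $e^{-n^{\sigma'}}<\frac{c'}{2}|m|^{-b}$, i.e.\ for $n\gtrsim(\log|m|)^{1/\sigma'}$. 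To force that coincidence I would compare over a band of consecutive scales: the translation relation $D_{[k-n,k+n]}(\theta,E)=D_{[-n,n]}(\theta+k\alpha,E)$ makes the zeros that $\theta$ approaches at successive scales an orbit segment of $\psi\mapsto\psi+\alpha$ (and likewise for $\theta+m\alpha$), while the bound ``$O(n)$ zeros near $\T$'' limits the length of such a segment; combined with $Z_n=-Z_n$ and the Diophantine lower bound $\|j\alpha\|_{\T}\ge c(|j|(\log|j|)^a)^{-1}$, this matches the reflected zero near $\theta+m\alpha$ with the zero near $-\theta$. Consequently every $m$ with $|m|$ large is regular; the Poisson formula gives decay of $\phi_m$ at scale $(\log|m|)^{1/\sigma'}$, and the standard bootstrap upgrades this to $|\phi_m|\le e^{-\frac14 L(E,0)|m|}$. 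Since generalized eigenfunctions exist for spectrally a.e.\ $E$ (Schnol) and the set of $E$ for which the uniform estimates fail carries no spectral mass, $H_{\alpha,\theta}$ has pure point spectrum with exponentially decaying eigenfunctions on $\mathcal{S}_2^+$. The hard part is this last step: turning the sharp count $2\kappa(E,0)=2$ from Theorem~\ref{thm:acc=zeros} on $\mathcal{S}_2^+$ into the rigidity statement that a fixed $\theta\notin\Theta_{c',b}$ cannot be doubly resonant; a secondary point is making the large deviation and avalanche-principle estimates uniform enough in $E$.
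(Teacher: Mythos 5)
Your proposal correctly identifies every ingredient the paper uses: the acceleration $\kappa(E,0)=1$ on $\mathcal S_2^+$, Theorem~\ref{thm:acc=zeros} giving $\sim 2n$ zeros of $D_n$ near the unit circle, Cartan-type estimates converting this into a bound on the number of arcs comprising the large deviation set $\mathcal B_n$, the reflection symmetry from evenness of $f$ pairing these arcs, and the arithmetic conditions on $\alpha$ and $\theta$ controlling the spacing of the trajectory. You also correctly flag where the real difficulty lies.

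The gap is precisely in the ``no double resonance'' step, and your proposed mechanism is not the one that works. You aim to show that if both $\theta$ and $\theta+m\alpha$ are near $Z_n$ then the zero near $\theta+m\alpha$ \emph{coincides} with the reflection of the zero near $\theta$, by tracking orbits of zeros over consecutive scales --- but there is no reason for that coincidence, and the sketch you give (limiting ``orbit segments'' of zeros via the count $O(n)$) does not force it; you acknowledge this yourself. The paper bypasses coincidence entirely with a \emph{pigeonhole count}. It writes $\mathcal B_n\subset\bigcup_{j=1}^{N''}\bigl(U_j\cup(-(n-1)\alpha-U_j)\bigr)$ with $N''\le n+o(n)$ (evenness halves the $\sim 2n$ intervals into $\sim n$ reflection pairs), and then shows via Lemma~\ref{lem:I1_I2} that \emph{any} single pair can contain at most one trajectory point $\theta+\ell\alpha$ with $\ell\in I_1\cup I_2$: two points of the same $U_j$ contradict $\alpha\in\mathrm{DC}_{c,a}$, two points of $-(n-1)\alpha-U_j$ likewise, and one in each forces $\|2\theta+(\ell_1+\ell_2+n-1)\alpha\|_{\T}\le|U_j|$, contradicting $\theta\in(\Theta_{c',b})^c$. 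The windows $I_1=[-[\tfrac78n],-[\tfrac18n]]$ and $I_2=[y-[\tfrac78n],y-[\tfrac18n]]$ together have $\gtrsim\tfrac32 n>N''$ points, Lemma~\ref{lem:I1_small} shows the entire window $I_1$ must lie in $\mathcal B_n$ (else $|\phi_0|<1/2$ via the Green's function expansion~\eqref{eq:Green_exp}), and pigeonhole then produces a good $\ell_3\in I_2$, after which~\eqref{eq:Green_exp} gives the exponential decay of $\phi_y$. In particular the pair that $\theta+\ell_1\alpha$ falls into and the pair that $\theta+\ell_2\alpha$ falls into need not be related; the argument never needs to match a specific zero near $\theta+m\alpha$ to a reflected partner. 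This counting mechanism --- not zero-matching across scales --- is how the sharp bound $2\kappa=2$ is exploited, and it is the missing idea in your proposal.
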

Recall that by Theorem \ref{thm:posmeas}  if $\mathcal{S}_2^+$ is non-empty, then this set has to be of positive measure.
As a corollary of Theorem \ref{thm:posmeas}, we can prove Anderson localization for 
analytic perturbations of the supercritical almost Mathieu operator, which has recently been studied by Ge-Jitomirskaya-Zhao in \cite{GJZ}.

\begin{corollary}\label{cor:loc}
Let the perturbed almost Mathieu operator be defined as
\begin{align}\label{def:pert_AMO}
(H_{\lambda,\alpha,\theta,g,\varepsilon} \phi)_n=\phi_{n+1}+\phi_{n-1}+(2\lambda\cos 2\pi(\theta+n\alpha)+\varepsilon g(\theta+n\alpha)) \phi_n,
\end{align}
where $g \in C^{\omega}_{\T_{\eta}}$ is even and real-valued.
For $|\lambda|>1$, there exists $\varepsilon_1=\varepsilon_1(\lambda, \|g\|_{\T_{\eta}})>0$ such that if $|\varepsilon|<\varepsilon_1$, and $\alpha\in \mathrm{DC}_{c,a}$ for some $c>0$ and $a>1$, and $\theta\in (\Theta_{c',b})^c$ for some $c'>0$ and $b>1$,
$H_{\lambda,\alpha,\theta,g,\varepsilon}$ has pure point spectrum with exponentially decaying eigenfunctions.
\end{corollary}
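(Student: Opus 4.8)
We plan to deduce Corollary~\ref{cor:loc} from Theorem~\ref{thm:loc}. The potential of the perturbed operator, $f(\theta)=2\lambda\cos 2\pi\theta+\varepsilon g(\theta)$, is real analytic and \emph{even} (since $\cos 2\pi\theta$ is even and $g$ is even by hypothesis), while $\alpha\in\mathrm{DC}_{c,a}$ and $\theta\in(\Theta_{c',b})^c$. Hence, once we show that for $|\varepsilon|$ small, depending only on $|\lambda|$ and $\|g\|_{\T_{\eta}}$, the entire spectrum of $H_{\lambda,\alpha,\theta,g,\varepsilon}$ lies in its first supercritical stratum $\mathcal{S}_2^+$, Theorem~\ref{thm:loc} yields pure point spectrum with exponentially decaying eigenfunctions on $\mathcal{S}_2^+=\sigma(H_{\lambda,\alpha,\theta,g,\varepsilon})$, which is exactly the assertion. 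So the whole task reduces to the inclusion $\sigma(H_{\lambda,\alpha,\theta,g,\varepsilon})\subseteq\mathcal{S}_2^+$, which I would establish in the spirit of the ``perturbation of the energy'' argument behind Theorem~\ref{thm:posmeas}, but now perturbing the potential and anchored at the unperturbed almost Mathieu operator.

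Recall Avila's computation \cite{Global}: for $|\lambda|>1$, $\sigma(H^{\mathrm{AMO}}_{\lambda,\alpha,\theta})=\mathcal{S}_2^+$, and for every $E$ in this compact set and every $\varepsilon'\ge0$ one has $L^{\mathrm{AMO}}(E,\varepsilon')=\log|\lambda|+2\pi\varepsilon'$ (the acceleration of a degree-one trigonometric potential is at most $1$, it equals $1$ on the spectrum as $\sigma^{\mathrm{AMO}}=\mathcal{S}_2$, and $L^{\mathrm{AMO}}(E,\cdot)$ is convex); in particular $L^{\mathrm{AMO}}(E,0)=\log|\lambda|>0$, $\kappa^{\mathrm{AMO}}(E,0)=1$, and none of these values depend on $\alpha$ or $\theta$. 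Fix $0<\varepsilon'<\varepsilon''<\eta$. Since $\|H_{\lambda,\alpha,\theta,g,\varepsilon}-H^{\mathrm{AMO}}_{\lambda,\alpha,\theta}\|\le|\varepsilon|\,\|g\|_{\T_{\eta}}$, spectral stability under norm perturbations places $\sigma(H_{\lambda,\alpha,\theta,g,\varepsilon})$ in the neighborhood $I_\delta:=\{E:\ \mathrm{dist}(E,\sigma(H^{\mathrm{AMO}}_{\lambda,\alpha,\theta}))\le\delta\}$, where $\delta:=|\varepsilon|\,\|g\|_{\T_{\eta}}$.

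Next I would pin down the stratum on $I_\delta$. By continuity of the Lyapunov exponent of analytic one-frequency cocycles in the energy and the potential, we may choose $|\varepsilon|$ (hence $\delta$) small, depending only on $|\lambda|$ and $\|g\|_{\T_{\eta}}$, so that for all $E\in I_\delta$:
\[
L(E,\varepsilon')>\tfrac34\log|\lambda|+2\pi\varepsilon'\qquad\text{and}\qquad \frac{L(E,\varepsilon'')-L(E,\varepsilon')}{\varepsilon''-\varepsilon'}<3\pi ,
\]
the unperturbed reference values of these two quantities being $\log|\lambda|+2\pi\varepsilon'$ and $2\pi$. Since $L(E,\cdot)$ is convex with right derivative $2\pi\kappa(E,\cdot)$, the secant slope dominates $2\pi\kappa(E,\varepsilon')$, so $\kappa(E,\varepsilon')<\tfrac32$, i.e.\ $\kappa(E,\varepsilon')\le1$; and by monotonicity of the right derivative, $\kappa(E,0)\le\kappa(E,\varepsilon')\le1$. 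Convexity also gives
\[
L(E,0)\ \ge\ L(E,\varepsilon')-2\pi\varepsilon'\,\kappa(E,\varepsilon')\ \ge\ L(E,\varepsilon')-2\pi\varepsilon'\ >\ \tfrac34\log|\lambda|\ >\ 0 .
\]
Thus for $E\in\sigma(H_{\lambda,\alpha,\theta,g,\varepsilon})\subseteq I_\delta$ we have $L(E,0)>0$, so \cite[Theorem~6]{Global} forces $\kappa(E,0)\ge1$; combined with $\kappa(E,0)\le1$ this gives $\kappa(E,0)=1$, i.e.\ $E\in\mathcal{S}_2^+$. Hence $\sigma(H_{\lambda,\alpha,\theta,g,\varepsilon})\subseteq\mathcal{S}_2^+$, which completes the reduction and the proof.

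The main obstacle is to make the threshold $\varepsilon_1$ depend only on $|\lambda|$ and $\|g\|_{\T_{\eta}}$, and not on $\alpha$ or $\theta$. The $\theta$-independence is immediate, since neither $L(E,\varepsilon)$ nor $\kappa(E,\varepsilon)$ sees $\theta$ and the norm bound on the perturbation is $\theta$-free. For the $\alpha$-independence one uses that the reference values $L^{\mathrm{AMO}}(E,\varepsilon')=\log|\lambda|+2\pi\varepsilon'$ and $\kappa^{\mathrm{AMO}}(E,0)=1$ are $\alpha$-free; that Herman's subharmonicity argument gives the $\alpha$-uniform lower bound $L(E,\varepsilon')\ge\log|\lambda|+2\pi\varepsilon'-C(|\lambda|)\,|\varepsilon|\,\|g\|_{\T_{\eta}}$ directly; and that the continuity of the Lyapunov exponent for analytic cocycles holds, in a fixed neighborhood of the almost Mathieu operator $H^{\mathrm{AMO}}_{\lambda,\alpha,\theta}$, with a modulus governed by $|\lambda|$ and $\|g\|_{\T_{\eta}}$ alone. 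Carefully organizing this uniformity (and the choice of $\varepsilon',\varepsilon''$) is the only delicate point; everything else is a direct combination of Avila's global theory \cite{Global} with Theorem~\ref{thm:loc}.
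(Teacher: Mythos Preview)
Your overall strategy coincides with the paper's: verify that the potential is even and analytic, then reduce everything to the inclusion $\sigma(H_{\lambda,\alpha,\theta,g,\varepsilon})\subseteq\mathcal S_2^+$, after which Theorem~\ref{thm:loc} finishes the job. The paper, however, does not argue this inclusion from scratch; it simply invokes \cite[Lemma~25]{Global}, which asserts exactly that for $|\lambda|>1$ there is $\varepsilon_1=\varepsilon_1(\lambda,\|g\|_{\T_\eta})$ with $\sigma(H_{\lambda,\alpha,\theta,g,\varepsilon})=\mathcal S_2^+$ for all irrational $\alpha$ and $|\varepsilon|<\varepsilon_1$. So the paper's proof is a one-line citation.

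You instead attempt to reprove that lemma via continuity of the Lyapunov exponent and convexity/quantization. The convexity part is fine, and for any \emph{fixed} irrational $\alpha$ your secant-slope argument does give $\kappa(E,0)\le1$ and $L(E,0)>0$ for $|\varepsilon|$ small. The gap is precisely the one you flag: the $\alpha$-independence of the threshold $\varepsilon_1$. Your justification for it is the assertion that ``the continuity of the Lyapunov exponent for analytic cocycles holds \ldots\ with a modulus governed by $|\lambda|$ and $\|g\|_{\T_{\eta}}$ alone.'' That is not a consequence of the standard continuity theorems you have available (Bourgain--Jitomirskaya type continuity is for fixed $\alpha$, with no control on the modulus as $\alpha$ varies), and it is essentially the nontrivial content of \cite[Lemma~25]{Global} itself. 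Likewise, the Herman-type lower bound you claim, $L(E,\varepsilon')\ge\log|\lambda|+2\pi\varepsilon'-C(|\lambda|)\,|\varepsilon|\,\|g\|_{\T_\eta}$, is not immediate for general analytic $g$: Herman's subharmonicity argument isolates the top Fourier coefficient, and higher Fourier modes of $g$ spoil the direct computation. In short, ``carefully organizing this uniformity'' is not a clean-up step but the whole difficulty; the simplest repair is to cite \cite[Lemma~25]{Global} as the paper does.
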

This corollary follows from combining Theorem \ref{thm:loc} with the following lemma of Avila.
\begin{lemma}[Lemma 25, \cite{Global}]
For $|\lambda|>1$ and $\alpha\in \R\setminus \Q$, there exists $\varepsilon_1= \varepsilon_1(\lambda,\|g\|_{\T_{\eta}})>0$ such that for $|\varepsilon|<\varepsilon_1$, $\sigma(H_{\lambda,\alpha,\theta,g,\varepsilon})=\mathcal{S}_2^+$.
\end{lemma}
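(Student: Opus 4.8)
The plan is to deduce the statement from three classical ingredients — Herman's subharmonicity estimate, the known value of the Lyapunov exponent on the spectrum of the supercritical almost Mathieu operator, and the stability of the Lyapunov exponent and of the spectrum under the $O(\varepsilon)$ analytic perturbation $\varepsilon g$ — combined with Avila's quantization of the acceleration (\cite[Theorems 5 and 6]{Global}). Since $\mathcal S_2^+\subseteq\sigma(H_{\lambda,\alpha,\theta,g,\varepsilon})$ by the very definition of the stratum, it suffices to prove the reverse inclusion: that for $|\varepsilon|$ small \emph{every} $E\in\sigma(H_{\lambda,\alpha,\theta,g,\varepsilon})$ satisfies $L(E,0)>0$ and $\kappa(E,0)=1$. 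All the spectra in play lie in the fixed compact interval $[-2|\lambda|-2-|\varepsilon|\,\|g\|_{\T_\eta},\,2|\lambda|+2+|\varepsilon|\,\|g\|_{\T_\eta}]$, so I will work throughout with statements that are uniform in $E$.

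First I treat $\varepsilon=0$. Writing the almost Mathieu transfer matrix $A^{(0)}(z,E)$ in the variable $z=e^{2\pi i\theta}$, the matrix $zA^{(0)}(z,E)$ is a matrix-valued polynomial of degree $2$ in $z$ whose degree-$2$ and degree-$0$ coefficients both have norm $|\lambda|$; hence $z^nA^{(0)}_n(z,E)$ is a matrix polynomial of degree $2n$ whose extreme coefficients have norm $|\lambda|^n$. Herman's argument — the circle averages of the subharmonic function $\tfrac1n\log\|z^nA^{(0)}_n(z,E)\|$ are nondecreasing in $\log|z|$ — then yields, uniformly in $E\in\R$, both $L_0(E,0)\ge\log|\lambda|$ (letting the radius tend to $0$) and $\partial_t^+L_0(E,t)\le2\pi$ for $t\ge0$, so $L_0(E,t)\le L_0(E,0)+2\pi t$. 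Combining this with the classical identity $L_0(E,0)=\log|\lambda|$ for $E\in\sigma(H^{\mathrm{AMO}}_{\lambda,\alpha,\theta})$ (Bourgain--Jitomirskaya; equivalently Aubry duality with the subcritical coupling $1/|\lambda|$) and the continuity in $E$ of $L_0(\cdot,0)$ and of $L_0(\cdot,t)$, we obtain: as $E\to\sigma(H^{\mathrm{AMO}}_{\lambda,\alpha,\theta})$ one has $L_0(E,0)\to\log|\lambda|$, and for each fixed $t\in(0,\eta)$ one has $\limsup L_0(E,t)\le\log|\lambda|+2\pi t$.

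Now the perturbation. Fix $t_0\in(0,\eta)$; on the strip $|\mathrm{Im}\,\theta|\le t_0$ one has $\|\varepsilon g\|_\infty\le|\varepsilon|\,\|g\|_{\T_\eta}$, so the complexified cocycles at heights $0$ and $t_0$ are $O(\varepsilon)$-close, and by the Hausdorff continuity of the spectrum together with the joint continuity of the Lyapunov exponent for one-frequency analytic cocycles over an irrational rotation, $\sigma(H_{\lambda,\alpha,\theta,g,\varepsilon})\to\sigma(H^{\mathrm{AMO}}_{\lambda,\alpha,\theta})$ in Hausdorff distance and $L(E,0)\to L_0(E,0)$, $L(E,t_0)\to L_0(E,t_0)$ uniformly on the relevant compact $E$-set as $\varepsilon\to0$. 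Fix $E\in\sigma(H_{\lambda,\alpha,\theta,g,\varepsilon})$. Then $L(E,0)\to L_0(E,0)\ge\log|\lambda|>0$, so $L(E,0)>0$ for $|\varepsilon|<\varepsilon_1$; hence $\kappa(E,0)\ge1$ by \cite[Theorem 6]{Global}, and convexity of $t\mapsto L(E,t)$ gives
\[
L(E,t_0)\ \ge\ L(E,0)+2\pi\,\kappa(E,0)\,t_0\ \ge\ \log|\lambda|-o(1)+2\pi\,\kappa(E,0)\,t_0 .
\]
On the other hand $E$ is within $o(1)$ of $\sigma(H^{\mathrm{AMO}}_{\lambda,\alpha,\theta})$, so by the previous paragraph $L(E,t_0)=L_0(E,t_0)+o(1)\le\log|\lambda|+2\pi t_0+o(1)$. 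Subtracting, $2\pi(\kappa(E,0)-1)t_0\le o(1)$; since $\kappa(E,0)$ is an integer $\ge1$ and $t_0$ is fixed, choosing $\varepsilon_1=\varepsilon_1(\lambda,\|g\|_{\T_\eta})$ small enough (through $t_0$) forces $\kappa(E,0)=1$. Thus $E\in\mathcal S_2^+$, completing the reverse inclusion and the proof.

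The main obstacle is not a new difficulty: the lemma is essentially an instance of Avila's global theory together with the rigidity of the supercritical almost Mathieu operator, and the delicate inputs are the uniform-in-$E$ continuity of the Lyapunov exponent as the analytic potential varies and the Hausdorff continuity of the spectrum, both classical for one-frequency analytic cocycles over an irrational rotation. What makes the quantization argument go through quantitatively is that Herman's estimate pins the right $t$-slope of $L_0(E,\cdot)$ at $0$ to be exactly $2\pi$, uniformly for $E$ near the spectrum, so a fixed positive gap survives the $O(\varepsilon)$ perturbation and excludes $\kappa(E,0)\ge2$.
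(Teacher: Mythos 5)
The paper does not supply its own argument for this statement: it is cited verbatim from Avila's global theory \cite{Global} and used as a black box to deduce Corollary~\ref{cor:loc} from Theorem~\ref{thm:loc}. So there is no in-paper proof to compare against; I can only assess your proposal on its merits, and it is essentially Avila's route. Your plan — Herman's subharmonicity trick to get $L_0(E,0)\ge\log|\lambda|$ and $\partial_t^+L_0(E,t)\le 2\pi$ uniformly in $E$, the Bourgain--Jitomirskaya identity $L_0(E,0)=\log|\lambda|$ on $\sigma(H^{\mathrm{AMO}}_{\lambda,\alpha,\theta})$, Hausdorff continuity of the spectrum in the potential, convexity of $t\mapsto L(E,t)$ combined with \cite[Theorems 5,~6]{Global} to force $\kappa(E,0)=1$ — is the right skeleton, and the computation $2\pi(\kappa(E,0)-1)t_0\le o(1)$ with integer $\kappa(E,0)\ge1$ correctly excludes $\kappa\ge 2$.

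There is, however, one point you should not gloss over: the dependence structure of $\varepsilon_1$. The lemma asserts $\varepsilon_1=\varepsilon_1(\lambda,\|g\|_{\T_\eta})$, with no dependence on $\alpha$. Both of the limits you invoke — $L(E,t)\to L_0(E,t)$ as the cocycle varies, and the passage $L_0(E,0)\to\log|\lambda|$ as $E$ approaches $\sigma(H^{\mathrm{AMO}}_{\lambda,\alpha,\theta})$ — come from the continuity theorem for Lyapunov exponents of analytic one-frequency $SL(2,\C)$-cocycles (Bourgain--Jitomirskaya, Jitomirskaya--Koslover--Schulteis, Avila--Jitomirskaya--Sadel). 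That theorem gives continuity at each fixed irrational $\alpha$, but its modulus of continuity is not uniform over $\alpha$; in particular the $o(1)$'s in your argument, and hence your $\varepsilon_1$, a priori depend on $\alpha$. As written, your argument proves the lemma with $\varepsilon_1=\varepsilon_1(\lambda,\|g\|_{\T_\eta},\alpha)$, which is weaker than the claim (though still sufficient for the paper's application, since Corollary~\ref{cor:loc} fixes $\alpha$). If you want the $\alpha$-free constant you must avoid the soft continuity theorem, e.g.\ by working at a fixed positive height $\varepsilon_0<\eta$ at which the perturbed cocycle is uniformly dominated by its leading Fourier coefficient (so that $L(E,\varepsilon_0)$ and $\kappa(E,\varepsilon_0)$ are computed by a hands-on uniform-hyperbolicity estimate with explicit, $\alpha$-independent error), and separately establish a Herman-type lower bound $L(E,0)\ge\log|\lambda|-O(\varepsilon)$ for the perturbed cocycle — which itself needs care because $z^n A_n(z,E)$ is no longer a polynomial once $g$ is a general analytic function, so the ``let $r\to 0$'' step of Herman's argument is not directly available. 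This quantitative, $\alpha$-uniform step is the one genuine gap in your proposal.
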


\begin{remark}\label{rm:DC}
It is possible to prove Theorem \ref{thm:loc} and Corollary \ref{cor:loc} for $\alpha$ satisfying a weaker Diophantine condition:
\begin{align}\label{def:DC_1}
\mathrm{DC}:=\bigcup_{c>0}\bigcup_{\rho>1}\left\{\alpha\in \T:\, \|n\alpha\|_{\T}\geq \frac{c}{|n|^{\rho}}, \text{ for any } n\neq 0\right\}.
\end{align}
Replacing our $\mathrm{DC}_{c,a}$ with the weaker condition will lead to less sharp large deviation estimates in Section \ref{sec:LDT}, but these will still suffice for the proof of Anderson localization.
\end{remark}

Next let us comment on the proof of Theorem \ref{thm:loc}.
The key to prove Anderson localization is to eliminate ``double resonance'', which roughly speaking occurs when both $D_n(\theta,E)$ and $D_n(\theta+k\alpha,E)$ are close to~$0$ for $k\simeq n^C$.
Jitomirskaya's \cite{J94,J99} argument for the almost Mathieu operator uses crucially that the potential is cosine.
She proved $D_n(\theta-\frac{n-1}{2}\alpha, E)$ is a polynomial of $\cos(2\pi \theta)$ of degree at most $n$, and used the Lagrange interpolation formula to eliminate double resonances.
Bourgain and Goldstein \cite{BG} developed a large deviation estimate for the set on which the norm of the transfer matrix is small, and then used semi-algebraic tools to study the complexity of this set.
In order to eliminate double resonances, they then devised a ``steep line'' argument, which requires deleting an implicit zero measure of~$\alpha$'s. 
In this paper, on the one hand we appeal to the large deviation estimates in \cite{GS2} for the sub-exponentially small measure of the deviation set $\mathcal{B}_n$ (see \eqref{def:Bn}), on the other hand we combine the $\kappa=1$ case of Theorem~\ref{thm:acc=zeros} with a Cartan set argument to show that $\mathcal{B}_n$ consists of at most $2n+o(n)$ many intervals.
Using that the potential is even, the $2n+o(n)$ intervals form $n+o(n)$ pairs under reflection $\theta\to -\theta$.
The arithmetic conditions on $\alpha,\theta$ prevent that two points  in the trajectory $\{\theta+j\alpha\}$ fall into the same pair (at the appropriate scales).
This eliminates double resonances. 
It is worth mentioning that although the $\kappa=1$ case of Theorem~\ref{thm:acc=zeros} ensures that $D_n(z,E)$ has at most $2n+o(n)$ many zeros, $D_n(\theta-\frac{n-1}{2}\alpha,E)$ is not a polynomial of degree at most $2n+o(n)$ in $\cos(2\pi\theta)$, so the Lagrange interpolation argument as in \cites{J94,J99} does not apply.

The evenness assumption on the potential is crucial for our argument. It is not clear if one can improve on~\cite{BG} without making this symmetry assumption in the sense that there is an explicit condition on the phases~$\theta$ which guarantees localization. 
Recently, Forman and VandenBoom \cite{FV} proved Anderson localization perturbatively for large cosine-like potentials without evenness assumption, 
thus removing the symmetry requirement from the classical work by Fr\"ohlich, Spencer, and Wittwer~\cite{FSW}. However, while~\cite{FSW} eliminate resonant phases via an arithmetic condition, \cite{FV} do so by an implicit procedure.

Finally, let us mention that it is an interesting question as to whether the results in Theorems~\ref{thm:IDS} and \ref{thm:loc} hold for Liouville frequencies. 
Related questions have been addressed for the almost Mathieu operator in e.g. \cite{AD,A_ac,AYZ,JL}.
For general analytic potentials, large deviation estimates have been developed for some Liouville $\alpha$'s in \cite{YZ,HZ}, leading to the proof of H\"older continuity of the IDS, albeit with non-sharp exponent.
It would be interesting to see if one can combine the techniques in \cite{YZ,HZ} with ours.

This paper is organized as follows. In Section~\ref{sec:pre} we collect some preliminary results; in Section~\ref{sec:Riesz} we develop the Riesz representation of subharmonic functions on an annulus; in Sections~\ref{sec:Riesz_un} and \ref{sec:Riesz_vn} we apply the Riesz representation to the functions $u_n$ and $v_n$ and establish the zero count of  Theorem~\ref{thm:acc=zeros}. Finally,  Anderson localization as in Theorem~\ref{thm:loc} is obtained in Section~\ref{sec:AL}, and Theorem~\ref{thm:IDS} on H\"older continuity of the IDS is proved in Section~\ref{sec:IDS}.

\section{Preliminaries}\label{sec:pre}
Notations: Let $A_R:=\{z\in \C:\, 1/R<|z|<R\}$ be the annulus and $\mathcal{C}_R:=\{z\in \C:\, |z|=R\}$ be the circle.
For a set $U\subset \R$, let $|U|$ be its Lebesgue measure. 

\subsection{Determinant and transfer matrix}

Since $f\in C_{\eta}^{\omega}(\T)$,  the determinant $D_n(z,E)$ is holomorphic in the annulus $A_{e^{2\pi \eta}}$.
Moreover, as $f$ is real-valued for $\theta\in \R$, one has $D_n(z,E)=\overline{D_n(1/\overline{z}, E)}$ for $z\in \mathcal{C}_1$, hence
\begin{align}\label{eq:Pn_real}
D_n(z,E)=\overline{D_n(1/\overline{z}, E)}, \text{ for } z\in A_{e^{2\pi \eta}}.
\end{align}
In particular, we note the following fact that will be used many time throughout the paper:
\begin{fact}\label{fact:zero}
If $w\notin \mathcal{C}_1$ is a zero of $D_n(z,E)$, then $1/\overline{w}$ is also a zero.
\end{fact}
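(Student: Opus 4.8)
The plan is to derive the statement directly from the reflection symmetry \eqref{eq:Pn_real}, with essentially no additional work; this is the reason the fact is singled out as a \emph{fact} rather than a lemma. First I would observe that $1/\overline{w}$ again lies in the domain of holomorphy $A_{e^{2\pi\eta}}$: since $w$ is a zero of $D_n(\cdot,E)$ it must lie in $A_{e^{2\pi\eta}}$, i.e.\ $e^{-2\pi\eta}<|w|<e^{2\pi\eta}$, and therefore $e^{-2\pi\eta}<|1/\overline{w}|=1/|w|<e^{2\pi\eta}$, so $1/\overline{w}\in A_{e^{2\pi\eta}}$ as well. In particular $D_n(1/\overline{w},E)$ is well defined.

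Then I would apply \eqref{eq:Pn_real} at the point $z=1/\overline{w}\in A_{e^{2\pi\eta}}$. For this choice one has $1/\overline{z}=1/\overline{(1/\overline{w})}=w$, so \eqref{eq:Pn_real} gives
\begin{align*}
D_n(1/\overline{w},E)=\overline{D_n(1/\overline{z},E)}\Big|_{z=1/\overline{w}}=\overline{D_n(w,E)}=\overline{0}=0,
\end{align*}
so $1/\overline{w}$ is a zero of $D_n(\cdot,E)$. Finally, the hypothesis $w\notin\mathcal{C}_1$, i.e.\ $|w|\ne1$, ensures $1/\overline{w}\ne w$, so this is genuinely a second zero, as claimed. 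There is no real obstacle here: the only point requiring a word of care is that the zero and its reflection both lie in the annulus where $D_n$ is defined, which is automatic from $|w|\ne0,\infty$ and the form of the annulus, and (if one wants to track multiplicity) that conjugation in \eqref{eq:Pn_real} preserves the order of vanishing, which follows by differentiating the identity.
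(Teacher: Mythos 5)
Your proof is correct and takes the same route the paper intends: the fact is stated immediately after the identity \eqref{eq:Pn_real} precisely because it is an instant consequence of evaluating that identity at $z=1/\overline{w}$, as you do. The paper offers no further argument, and none is needed.
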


Let 
\begin{align}\label{def:un}
u_n(z,E):=\frac{1}{n}\log |D_n(z,E)|.
\end{align}
and as usual,  
\begin{align}\label{def:Mn}
M(\theta, E)=
\left(\begin{matrix}
E-f(\theta)\ &-1\\
1 &0
\end{matrix}\right)
\end{align}
denotes the transfer matrix, and 
\begin{align*}
M_n(\theta,E):=M(\theta+(n-1)\alpha, E)\cdots M(\theta,E)
\end{align*}
be the n-step transfer matrix.
Let 
\begin{align}\label{def:vn}
v_n(z,E):=\frac{1}{n}\log \|M_n(z,E)\|.
\end{align}
Both $u_n$ and $v_n$ are subharmonic functions in the annulus $A_R$. 
The starting point of our study is an effective Riesz representation of subharmonic functions in $A_R$, which we develop in Section \ref{sec:Riesz}.

Recall the well-known connection between $D_n$ and $M_n$, viz. 
\begin{align}\label{eq:Dn_Mn}
M_n(z,E)=
\left(\begin{matrix}
D_n(z,E)\, &-D_{n-1}(ze^{2\pi i \alpha},E)\\
D_{n-1}(z,E) &-D_{n-2}(ze^{2\pi i\alpha},E)
\end{matrix}\right).
\end{align}
This implies that 
\begin{align}\label{eq:un<vn}
u_n(z,E)\leq v_n(z,E),
\end{align}
which will be used multiple times in this paper.

\subsection{Lyapunov exponent and acceleration}\label{sec:LE}
Let 
\begin{align}\label{def:Ln_E}
L_n(E,\varepsilon):=\int_0^{1} v_n(e^{2\pi i(\theta+i\varepsilon)}, E)\, \mathrm{d}\theta,
\end{align}
and
\begin{align}\label{def:LE}
L(E,\varepsilon):=\lim_{n\to\infty} L_n(E,\varepsilon).
\end{align}
The Lyapunov exponent is even in $\varepsilon$ due to $f$ being real-valued, and is a convex function in $\varepsilon$ by Hadamard's three circle theorem. We define its derivative from the right as 
\begin{align}\label{def:acc}
\kappa(E,\varepsilon):=\lim_{\delta\to 0^+} \frac{L(E, \varepsilon+\delta)-L(E,\varepsilon)}{2\pi \delta}.
\end{align}
The function $\kappa(E,\varepsilon)$ is referred to as the {\em acceleration} of the Lyapunov exponent.

In \cite{Global}, Avila proved the following:
\begin{theorem}\label{thm:acc_N}\cite[Theorem 5]{Global}
For any irrational $\alpha$ and any $E,\varepsilon$, the acceleration $\kappa(E,\varepsilon)$ is always an integer.
\end{theorem}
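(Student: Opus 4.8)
The statement is Avila's \cite[Theorem 5]{Global}; the plan is to reproduce his argument, organized around the following observation: over an interval of heights $\varepsilon$ on which the complexified Schr\"odinger cocycle is uniformly hyperbolic, the slope of $L(E,\cdot)$ is the winding number of a zero-free holomorphic function, hence an integer. Fix $E$. As recalled in Section~\ref{sec:LE}, $L(E,\cdot)$ is convex, even, and (since $\det M_n\equiv1$ forces $\|M_n\|\ge1$) nonnegative, so $\{L(E,\cdot)=0\}$ is a closed interval symmetric about $0$; on its interior $\kappa(E,\cdot)\equiv0\in\Z$, so it remains to treat $\varepsilon$ with $L(E,\varepsilon)>0$.

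\emph{The analytic core.} Suppose $\varepsilon$ lies in an open interval on which the cocycle $z\mapsto M(z,E)$ from \eqref{def:Mn}, over the rotation $z\mapsto ze^{2\pi i\alpha}$ restricted to $\mathcal{C}_{e^{-2\pi\varepsilon}}$, is uniformly hyperbolic. Then hyperbolicity persists for $z$ in a complex annular neighborhood $\mathcal N$ of $\mathcal{C}_{e^{-2\pi\varepsilon}}$, and the unstable direction admits a \emph{holomorphic} section $v^u(z)$ on $\mathcal N$ (openness of uniform hyperbolicity together with holomorphic parameter dependence of the attracting fixed point of the induced graph transform). Write $M(z,E)\,v^u(z)=\gamma(z)\,v^u(ze^{2\pi i\alpha})$ with $\gamma$ holomorphic and zero-free on $\mathcal N$. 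Since for a uniformly hyperbolic cocycle $\|M_n(z,E)\|$ is comparable, uniformly in $z\in\mathcal C_{e^{-2\pi\varepsilon}}$ and in $n$, to the product of the expansion factors $\|M(z_j,E)v^u(z_j)\|/\|v^u(z_j)\|$ along the orbit $z_j=ze^{2\pi ij\alpha}$, averaging \eqref{def:Ln_E} over a rotation gives
\[
L(E,\varepsilon)=\int_0^1\log\frac{\|M(z,E)\,v^u(z)\|}{\|v^u(z)\|}\,d\theta=\int_0^1\log|\gamma(z)|\,d\theta,\qquad z=e^{-2\pi\varepsilon}e^{2\pi i\theta},
\]
the last equality because $\int_0^1\log\|v^u(ze^{2\pi i\alpha})\|\,d\theta=\int_0^1\log\|v^u(z)\|\,d\theta$ by rotation invariance of the circle average. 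As $\gamma$ is holomorphic and nonvanishing on $\mathcal N$, $\theta\mapsto\log|\gamma(z)|$ has circle average that is affine in $\log|z|=-2\pi\varepsilon$ with slope exactly $\tfrac{1}{2\pi i}\oint_{\mathcal C}\gamma'/\gamma\,dz\in\Z$; hence $2\pi\kappa(E,\varepsilon)=\partial_\varepsilon^+L(E,\varepsilon)$ (cf.\ \eqref{def:acc}) equals $-2\pi$ times that integer. Thus $\kappa(E,\cdot)$ is constant and integer-valued on each component of the uniformly hyperbolic locus.

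\emph{Gluing.} The hyperbolic locus is open and, by Avila's dichotomy, \emph{dense} in $\{L(E,\cdot)>0\}$. Granting this, on any compact $K\subset\{L(E,\cdot)>0\}$ the convex function $L(E,\cdot)$ is bounded, so the integer slopes attached to the hyperbolic components meeting $K$ are nondecreasing and hence take only finitely many values; grouping components of equal slope, $L(E,\cdot)|_K$ is piecewise affine with finitely many pieces and integer slopes, so $\kappa(E,\cdot)\in\Z$ on $K$. Exhausting $\{L(E,\cdot)>0\}$ by such $K$, and recalling the trivial case $L(E,\varepsilon)=0$ (including the two endpoints of $\{L=0\}$, where the right-slope is the integer limit of the slopes approaching from the positive side), one obtains $\kappa(E,\varepsilon)\in\Z$ for every $E,\varepsilon$.

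I expect the main obstacle to be precisely the density input used in the gluing step: that a positive-Lyapunov-exponent complexified $\mathrm{SL}(2,\C)$ cocycle fails to be uniformly hyperbolic only on a nowhere-dense set of heights $\varepsilon$, this exceptional set being exactly where the acceleration jumps. This is a genuinely nontrivial component of Avila's global theory (built on subadditivity, upper semicontinuity of $L$, and the classification of $\mathrm{SL}(2,\C)$-cocycle dynamics), which is why Theorem~\ref{thm:acc_N} is quoted rather than proved here. A secondary technical point is the holomorphic dependence of $v^u$ on $z$, which is the standard consequence of the openness of uniform hyperbolicity together with holomorphic parameter dependence. Finally, it is worth noting that Theorem~\ref{thm:acc=zeros} furnishes a complementary, more hands-on picture of $\kappa(E,0)$ as the normalized number of zeros of $D_n(\cdot,E)$ near $\mathcal C_1$; but because that count determines $\kappa(E,0)$ only up to an error $O(n^{-\gamma})$, it does not by itself recover exact integrality, and the winding-number/hyperbolicity mechanism above is what supplies it.
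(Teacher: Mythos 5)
The paper does not actually prove this theorem: it is quoted verbatim from Avila's paper~\cite{Global}, so there is no in-paper proof to compare your argument against. What I can assess is whether your reconstruction is a faithful or at least self-consistent sketch of Avila's argument.

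Your analytic core is correct and is indeed the heart of the matter: on an open set of heights $\varepsilon$ where the complexified cocycle is uniformly hyperbolic, the unstable direction admits a zero-free holomorphic lift, the Lyapunov exponent becomes the circle average of $\log|\gamma|$ for a zero-free holomorphic $\gamma$ on an annulus, and by Jensen's formula this circle average is affine in $\log r$ with slope equal to the winding number of $\gamma$, an integer. The telescoping computation you use to derive $L(E,\varepsilon)=\int_0^1\log|\gamma|\,d\theta$ is also fine.

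The gap is in the gluing step, and it is more serious than the caveat you append suggests. You use as input that the uniformly hyperbolic locus is \emph{dense} in $\{L(E,\cdot)>0\}$. In Avila's development this is not available upstream of quantization; the statement that non-uniform hyperbolicity at height $\varepsilon\neq0$ happens precisely at the (finitely many, by quantization) corners of the convex function $L(E,\cdot)$ is deduced \emph{after} Theorem 5 is in place, not before. Invoking it here is circular. Avila's actual proof avoids this by going through rational frequency approximants: for $\alpha=p/q$ one considers $\hat L^{p/q}(\varepsilon)=\frac1q\int_0^1\log\rho\bigl(M_q(e^{2\pi i(\theta+i\varepsilon)})\bigr)\,d\theta$; the trace $\mathrm{tr}\,M_q(z)$ is invariant under $z\mapsto z e^{2\pi i/q}$, hence a function of $z^q$, so its degree is a multiple of $q$ and the resulting slope of $\hat L^{p/q}$ (read off from the leading asymptotics and the argument principle) is an integer. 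One then passes to irrational $\alpha$ by locally uniform convergence of $\hat L^{p_n/q_n}$ to $L^\alpha$ along continued fraction convergents, using convexity and upper semicontinuity to transfer the integrality of the slopes in the limit. Your winding-number picture is the infinite-volume shadow of that finite-volume degree count, but the density claim cannot substitute for the approximant argument without risking circularity, so the proposal as written does not close.
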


Using the quantized acceleration, Avila further introduced the stratification of the spectrum: $X_1:=\sigma(H_{\alpha,\theta})$, and $X_j$ with $j\geq 2$ is defined as
\begin{align*}
X_j:=\{E\in X_1:\, \kappa(E,0)\geq j-1\}.
\end{align*}
The set 
\begin{align}\label{def:Sj}
\mathcal{S}_j:=X_j\setminus X_{j+1}=\{E\in \sigma(H_{\alpha,\theta}):\, \kappa(E,0)=j-1\}
\end{align}
is called the $j$-th stratum of the stratification.
It is clear that $\{\mathcal{S}_j\}_{j=1}^{\infty}$ are pairwise disjoint and
\begin{align*}
\bigcup_{j\geq 1} \mathcal{S}_j=X_1=\sigma(H_{\alpha,\theta}).
\end{align*}

\subsection{Large deviation estimates}\label{sec:LDT}
Our analysis depends crucially on the large deviation estimates (and their consequences) for $u_n$ and $v_n$ developed in a series of papers \cite{BG,GS1,GS2}, in combination with the avalanche principle. 
We list the results that are crucial to our analysis below.
Recall that the potential $f\in C^{\omega}(\T_{\eta})$.
The first one is a uniform upper bound for $v_n$, the logarithm of the norm of the transfer matrix.
\begin{lemma}\label{lem:vn_upper}\cite[Proposition 4.3]{GS2}
Assume $L(E,0)>\tau>0$. 
Then for all $n\geq 1$, and $0\leq |\varepsilon|<\eta/2$,
\begin{align*}
\sup_{\theta\in \T} v_n(e^{2\pi i(\theta+i\varepsilon)},E)\leq L_n(E,\varepsilon)+C\frac{(\log n)^{C_0}}{n}.
\end{align*}
for some constants $C=C(\|f\|_{\eta},\eta,c,a,\tau,E)$ and $C_0=C_0(a)$, with $c$, $a$ as in \eqref{def:DC}.
\end{lemma}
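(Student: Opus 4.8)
The plan is to obtain the uniform upper bound from the large deviation estimate for $v_n$ on origin-centred circles (established in \cite{GS2}, and in \cite{BG,GS1}, via the avalanche-principle bootstrap --- this is where the Diophantine condition \eqref{def:DC} enters) together with the sub-mean value inequality for the subharmonic function $v_n$. First I record the crude a priori bounds. Since $f\in C^\omega_\eta(\T)$, the entries of $M(\theta+i\varepsilon,E)$ are bounded on $\{|\varepsilon|\le\eta/2\}$, so $\|M_n(z,E)\|\le\Lambda_0^n$ there and hence $v_n\le\log\Lambda_0=:\Lambda_1$; since also $\det M_n=1$ forces $\|M_n\|\ge1$, we have $0\le v_n\le\Lambda_1$, so $|L_n(E,\varepsilon')|\le\Lambda_1$ for $|\varepsilon'|\le\eta/2$. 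By Hadamard's three-circle theorem $L_n(E,\varepsilon')$ is convex in $\varepsilon'$, hence, being bounded, Lipschitz on $\{|\varepsilon'|\le\eta/3\}$ with a constant $\Lambda_2$ independent of $n$. Fix $\varepsilon$ with $|\varepsilon|<\eta/2$ and $\theta_0\in\T$, and set $z_0=e^{2\pi i(\theta_0+i\varepsilon)}$, so $|z_0|=e^{-2\pi\varepsilon}$; it suffices to prove $v_n(z_0,E)\le L_n(E,\varepsilon)+C(\log n)^{C_0}/n$ for $n$ large.

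The large deviation input I would use reads: there are constants $c>0$ and $C_0=C_0(a)$ so that for all $n$, all $\varepsilon'$ with $|\varepsilon'|\le\eta/2$, and all $H\ge C(\log n)^{C_0}/n$,
\begin{align*}
\big|\{\theta\in\T:\ |v_n(e^{2\pi i(\theta+i\varepsilon')},E)-L_n(E,\varepsilon')|>H\}\big|<e^{-cnH/(\log n)^{C_0}}.
\end{align*}
Let $r>0$ be small (to be fixed below) and $B:=\{w\in\C:|w-z_0|<r\}$, which lies in $A_{e^{2\pi\eta}}$ once $r<\eta/4$. Every origin-centred circle $\mathcal{C}_\rho$ meeting $B$ has $\rho=e^{-2\pi\varepsilon(\rho)}$ with $|\varepsilon(\rho)-\varepsilon|\le C_1 r$, hence $|L_n(E,\varepsilon(\rho))-L_n(E,\varepsilon)|\le\Lambda_2 C_1 r$. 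Writing $B$ in polar coordinates, applying Fubini, and invoking the displayed estimate at each level $\varepsilon'=\varepsilon(\rho)$ with threshold $H/2$, one gets, provided $r<H/(2\Lambda_2 C_1)$,
\begin{align*}
\big|\{w\in B:\ v_n(w,E)>L_n(E,\varepsilon)+H\}\big|\le C_2\, r\, e^{-cnH/(2(\log n)^{C_0})}.
\end{align*}
Feeding this into the sub-mean value inequality for $v_n$ on $B$, split over the good set (where $v_n\le L_n(E,\varepsilon)+H$) and the bad set (where $v_n\le\Lambda_1$), to get
\begin{align*}
v_n(z_0,E)\le\frac1{|B|}\int_B v_n\,dA\le L_n(E,\varepsilon)+H+\frac{C_3\Lambda_1}{r}\,e^{-cnH/(2(\log n)^{C_0})}.
\end{align*}

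Finally I would balance the parameters: take $H=K(\log n)^{C_0+1}/n$ with $K=K(c)$ large and $r=H/(2\Lambda_2 C_1)$. Then $nH/(2(\log n)^{C_0})=\tfrac K2\log n$, so $e^{-cnH/(2(\log n)^{C_0})}=n^{-cK/2}$, while $C_3\Lambda_1/r\le C_4 n$; hence the last error term is $\le C_4 n^{1-cK/2}\le 1/n$ for $K$ large, and $v_n(z_0,E)\le L_n(E,\varepsilon)+2H\le L_n(E,\varepsilon)+C(\log n)^{C_0+1}/n$, which is the asserted bound with $C_0$ replaced by $C_0+1$ (harmless, as $C_0$ is only some constant depending on $a$). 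The main obstacle is not this last step but the large deviation estimate it rests on: its proof is the avalanche-principle bootstrap of \cite{BG,GS1,GS2}, run simultaneously over dyadic scales and using the quantitative equidistribution of $\{jn\alpha\}$ furnished by \eqref{def:DC}, and the present argument is really the ``uniform upper bound'' half of that induction, invoked at a fixed base scale. A secondary subtlety is precisely the parameter balance: one must take $r$ comparable to the threshold $H$, which is what forces the extra logarithmic power and is why the \emph{polynomial} decay $n^{-cK/2}$ of the bad set --- rather than merely the borderline $e^{-O(1)}$ --- is needed.
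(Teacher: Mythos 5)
Your proposal is correct and matches the approach of \cite[Proposition 4.3]{GS2}, which the paper simply cites: one combines the sharp large deviation theorem (itself obtained from the avalanche-principle bootstrap under the Diophantine condition) with the sub-mean value inequality for the subharmonic $v_n$ over a small disk centered at $z_0$, and then balances the disk radius against the deviation threshold. The extra factor of $\log n$ you pick up from taking $r\sim H$ is, as you note, absorbed into the unspecified $C_0=C_0(a)$; the restriction to large $n$ is handled by adjusting $C$ using the crude bounds $0\le v_n\le\Lambda_1$ and $L_n\ge\tau$; and one should state the LDT on a slightly wider strip (say $|\varepsilon'|\le 2\eta/3$) so that the circles $\mathcal{C}_\rho$ meeting $B$ are covered when $|\varepsilon|$ is close to $\eta/2$.
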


The next is a quantitative convergence rate of $L_n(E,\varepsilon)$ to $L(E,\varepsilon)$.
\begin{theorem}\label{thm:Ln_L}\cite[Theorem 5.1]{GS1}
If $L(E,0)>\tau>0$, then for $n\geq 1$, one has that for $|\varepsilon|<\eta/2$,
\begin{align*}
0\leq L_n(E,\varepsilon)-L(E,\varepsilon)\leq \frac{C}{n},
\end{align*}
where $C=C(\tau, E, f, \eta, \alpha)$.
\end{theorem}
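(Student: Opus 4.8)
The plan is to establish the two inequalities separately: the lower bound is soft, while the upper bound is the substantive part and follows the circle of ideas of \cite{BG,GS1}. Fix $E$ with $L(E,0)>\tau>0$ and $|\varepsilon|<\eta/2$; throughout, $\theta$ is read as the complexified phase, i.e.\ $M_m(\theta,E)$ and $v_m(\theta,E)$ abbreviate their values at $e^{2\pi i(\theta+i\varepsilon)}$. For the \emph{lower bound}, submultiplicativity of transfer matrices gives pointwise $(n+m)v_{n+m}(\theta,E)=\log\|M_{n+m}(\theta,E)\|\le\log\|M_m(\theta+n\alpha,E)\|+\log\|M_n(\theta,E)\|$; integrating over $\theta\in[0,1)$ and using $\int_0^1 g(\theta+n\alpha)\,d\theta=\int_0^1 g(\theta)\,d\theta$, the sequence $n\mapsto nL_n(E,\varepsilon)$ is subadditive, so by Fekete's lemma $L(E,\varepsilon)=\lim_n L_n(E,\varepsilon)=\inf_n L_n(E,\varepsilon)$. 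In particular $L_n(E,\varepsilon)\ge L(E,\varepsilon)$, and also $L_m(E,\varepsilon)\le L_1(E,\varepsilon)$ for all $m$.

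\emph{Reduction of the upper bound.} Since $L(E,\varepsilon)\le L_{2^{k}n}(E,\varepsilon)$ for all $k$ and $L_{2^{k}n}(E,\varepsilon)\to L(E,\varepsilon)$, one has the telescoping identity $L_n(E,\varepsilon)-L(E,\varepsilon)=\sum_{k\ge0}\bigl(L_{2^{k}n}(E,\varepsilon)-L_{2^{k+1}n}(E,\varepsilon)\bigr)$ with nonnegative summands. Hence it suffices to prove a \emph{scale-independent} bound $0\le L_m(E,\varepsilon)-L_{2m}(E,\varepsilon)\le C/m$, for then the series is $\le\sum_k C2^{-k}/n=2C/n$. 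Setting
\[
\Delta_m(\theta):=\log\|M_m(\theta+m\alpha,E)\|+\log\|M_m(\theta,E)\|-\log\|M_{2m}(\theta,E)\|\ \ge0,
\]
integration and periodicity yield $\int_0^1\Delta_m(\theta)\,d\theta=2m\bigl(L_m(E,\varepsilon)-L_{2m}(E,\varepsilon)\bigr)$, so everything reduces to showing $\int_0^1\Delta_m(\theta)\,d\theta\le C$ uniformly in $m$.

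\emph{Bounding the concatenation defect.} The quantity $\Delta_m(\theta)$ measures the cancellation in composing the length-$m$ blocks $M_m(\theta,E)$ and $M_m(\theta+m\alpha,E)$. Via the polar decomposition of $M_m$ (whose singular values are $\|M_m\|^{\pm1}$ since $\det M_m=1$), one checks that off an exceptional set $\mathcal E_m$ of $\theta$'s one has $\Delta_m(\theta)\le C-\log\operatorname{dist}_{\mathbb P^1}\!\bigl(s_m^{+}(\theta),s_m^{-}(\theta+m\alpha)\bigr)$, where $s_m^{+}(\theta)$ is the most expanded output line of $M_m(\theta,E)$ and $s_m^{-}(\theta')$ the most expanded input line of $M_m(\theta',E)$: thus $\Delta_m$ is large only where the ``unstable'' direction at $\theta$ nearly resonates with the ``stable'' direction at $\theta+m\alpha$. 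Two inputs then bound $\int_0^1\Delta_m$ uniformly. First, the large deviation estimates of \cite{BG,GS1} together with the uniform upper bound Lemma~\ref{lem:vn_upper} give an $\mathcal E_m$ with $|\mathcal E_m|\le e^{-\sqrt{m}}$ (for large $m$) off which $v_m$ is within $o(1)$ of $L_m$ and $s_m^{\pm}$ are $e^{-cm}$-close to the analytic, $m$-independent Oseledets directions of the limiting cocycle; on $\mathcal E_m$ the crude bound $\Delta_m(\theta)\le2m\sup_\theta v_m(\theta,E)\le Cm$ (using Lemma~\ref{lem:vn_upper}, $L_m\le L_1$, and $\|M_m\|\ge1$) makes the contribution $\le Cm\,e^{-\sqrt{m}}\to0$. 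Second, on the complement the Diophantine condition \eqref{def:DC} prevents the $\alpha$-orbit from carrying the unstable direction onto the stable one, yielding $\bigl|\{\theta\notin\mathcal E_m:\operatorname{dist}(s_m^{+}(\theta),s_m^{-}(\theta+m\alpha))<t\}\bigr|\le Ct^{1/k_0}$ for a fixed $k_0$, uniformly in $m$, whence $\int_{[0,1)\setminus\mathcal E_m}\log(1/\operatorname{dist})\,d\theta\le C$. Adding the two estimates gives $\int_0^1\Delta_m\,d\theta\le C$, completing the proof.

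\emph{Main obstacle.} The delicate step is the last one. A direct appeal to the large deviation bound only gives $\operatorname{dist}\ge e^{-cm}$ off an exponentially small set, hence merely $\int_0^1\Delta_m\le Cm$ and $L_m-L_{2m}=O(1)$ — useless for the telescoping. Upgrading this to a bound uniform in $m$ with a power-type tail in the threshold is exactly where the sharp large deviation estimates and the Diophantine hypothesis are essential, through the effective Riesz-representation/BMO analysis of $\log\|M_m(\cdot,E)\|$ in \cite{BG,GS1} (equivalently, through the avalanche principle applied to many length-$m$ blocks, whose ``no-cancellation'' hypothesis is the same resonance control). This is the technical heart of the estimate, and the reason the statement is quoted from \cite{GS1} rather than reproved here.
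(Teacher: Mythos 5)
The paper does not prove this statement; it is quoted verbatim from \cite[Theorem 5.1]{GS1}, whose proof proceeds by combining large deviation estimates with the avalanche principle in a multiscale iteration (compare $L_{n}$, $L_{2n}$, and $L_{n^2}$ via the avalanche principle applied to $\sim n$ blocks of length $n$, then iterate over a superexponentially growing sequence of scales). Your proposal takes a genuinely different route — a direct doubling/telescoping argument reducing everything to the uniform bound $\int_0^1\Delta_m\,d\theta\le C$ — and the structure is plausible, but it has a real gap at precisely the step you flag.

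The gap is the assertion $\bigl|\{\theta\notin\mathcal E_m:\operatorname{dist}(s_m^{+}(\theta),s_m^{-}(\theta+m\alpha))<t\}\bigr|\le Ct^{1/k_0}$ uniformly in $m$. As you note, nothing cited in the paper (nor the Diophantine condition by itself) yields this: the large deviation estimates give only $\operatorname{dist}\gtrsim e^{-cm}$ off an exponentially small set, and hence only $\int\Delta_m\lesssim m$, which you correctly diagnose as useless. To upgrade this to a power-type sublevel bound with an $m$-independent exponent one would need a BMO/exp-integrability estimate for $\Delta_m=mv_m(\cdot+m\alpha)+mv_m-2mv_{2m}$ with an $m$-independent BMO norm. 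The natural route (the Riesz decomposition of $mv_m$ on an annulus with Riesz mass $O(1)$, as in Theorem~\ref{thm:Riesz_vn}, plus the John--Nirenberg inequality for subharmonic functions) would give at best a BMO norm $O((\log m)^{C_0})$, because of the $(\log m)^{C_0}$ loss already present in Lemma~\ref{lem:vn_upper}; this would lead to $\int\Delta_m\lesssim(\log m)^{C_0}$ and, after telescoping, $L_n-L\lesssim(\log n)^{C_0}/n$, which is not quite the claimed $C/n$ and in any case is not what you wrote. So the asserted $Ct^{1/k_0}$ bound is neither proved nor cited, and it is exactly where the whole content of the theorem resides. A secondary error: for $E$ in the spectrum with $\varepsilon=0$ the Schr\"odinger cocycle is nonuniformly hyperbolic, so the limiting Oseledets directions are only measurable, not analytic; the claim that $s_m^{\pm}$ are $e^{-cm}$-close to ``analytic, $m$-independent Oseledets directions'' is false in general (though fortunately your argument, properly patched, would not really need analyticity of the limit, only the sublevel-set estimate for the $m$-dependent distance function).
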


The next is an average lower bound for $u_n$.
\begin{lemma}\label{lem:un_lower}\cite[Lemma 2.10]{GS2}
If $L(E)>\tau>0$, there exists a positive constant $\gamma_1>0$ such that for $|\varepsilon|<\eta/2$, and all $n>N(\tau,E,f,\eta,\alpha)$,
\begin{align*}
\int_0^{1} u_n(e^{2\pi i(\theta+i\varepsilon)},E)\, \mathrm{d}\theta>L_n(E,\varepsilon)-n^{-\gamma_1}.
\end{align*}
\end{lemma}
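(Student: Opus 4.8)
\textbf{Proof proposal for Lemma~\ref{lem:un_lower}.}

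The plan is to derive the lower bound for the average of $u_n$ from the known lower bound for the average of $v_n$ (namely $L_n(E,\varepsilon)$ itself, by definition) together with the pointwise upper bound in Lemma~\ref{lem:vn_upper}, using the relation~\eqref{eq:un<vn} and~\eqref{eq:Dn_Mn}. The key observation is that while $u_n\leq v_n$ pointwise by~\eqref{eq:un<vn}, the matrix identity~\eqref{eq:Dn_Mn} shows that $\|M_n(z,E)\|$ is controlled by the four determinant entries $D_n(z,E)$, $D_{n-1}(ze^{2\pi i\alpha},E)$, $D_{n-1}(z,E)$, $D_{n-2}(ze^{2\pi i\alpha},E)$, so that
\[
v_n(z,E)\leq \frac{1}{n}\log\Big(4\max\{|D_n(z,E)|, |D_{n-1}(ze^{2\pi i\alpha},E)|, |D_{n-1}(z,E)|, |D_{n-2}(ze^{2\pi i\alpha},E)|\}\Big).
\]
Thus at every point $z$ on the circle $|z|=e^{-2\pi\varepsilon}$ (or its translate), $v_n$ is bounded by the largest of a few $u_m$-type quantities at nearby scales, up to $O(1/n)$ errors. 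Combined with the telescoping bound $L_m(E,\varepsilon)\geq L(E,\varepsilon)$ and the convergence rate $L_m(E,\varepsilon)-L(E,\varepsilon)=O(1/m)$ from Theorem~\ref{thm:Ln_L}, the averages $\int_0^1 v_m$ and $\int_0^1 u_m$ differ by at most $O((\log n)^{C_0}/n)$ at any fixed scale.

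The main step is the following averaging argument. Integrating the matrix-norm bound above over $\theta\in[0,1]$ and using Jensen's inequality (or the subadditivity of the logarithm of a max) gives
\[
L_n(E,\varepsilon)=\int_0^1 v_n\,d\theta \leq \int_0^1 \frac{1}{n}\log\big(|D_n|+|D_{n-1}(\cdot e^{2\pi i\alpha})|+|D_{n-1}|+|D_{n-2}(\cdot e^{2\pi i\alpha})|\big)\,d\theta + \frac{C}{n}.
\]
Now one wants to conclude that the right side is at most $\max_{m\in\{n-2,n-1,n\}}\int_0^1 u_m\,d\theta + O((\log n)^{C_0}/n)$, and then that each $\int_0^1 u_m\,d\theta$ is within $O((\log n)^{C_0}/n)$ of $L_m(E,\varepsilon)$, hence of $L_n(E,\varepsilon)$. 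This last chain is where the real content lies: one needs a companion upper bound for $\int_0^1 u_m\,d\theta$ in terms of $L_m$, which is immediate from~\eqref{eq:un<vn} and Lemma~\ref{lem:vn_upper}, and a lower bound, which is exactly what we are proving — so the argument must instead proceed by using the pointwise upper bound on $v_m$ (Lemma~\ref{lem:vn_upper}) to bound the maximum in the displayed inequality from above pointwise by $L_m(E,\varepsilon)+C(\log n)^{C_0}/n$, yielding an upper bound for the $v_n$-average that collapses to $L_n(E,\varepsilon)+O((\log n)^{C_0}/n)$ — which is consistent but not yet the $u_n$ lower bound. The correct route is: split the integral $\int_0^1 v_n$ over the (small) set where $v_n$ exceeds $u_n$ by a definite amount and its complement; on the complement $u_n\geq v_n - n^{-\gamma_1}$ trivially gives what we want, and the exceptional set must be shown to have small measure and small contribution. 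The measure bound on this exceptional set is supplied by the large deviation estimates of~\cite{GS2} (the sub-exponentially small measure of $\mathcal{B}_n$ referenced after~\eqref{def:Bn}), and on that bad set the pointwise upper bound on $v_n$ from Lemma~\ref{lem:vn_upper} keeps the integrand bounded, so its contribution is $e^{-n^{c}}\cdot (L_n+C(\log n)^{C_0}/n)=o(n^{-\gamma_1})$.

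The hard part will be the bookkeeping of the exceptional set: one must invoke the large deviation estimate from~\cite{GS2} that $\big|\{\theta: u_n(e^{2\pi i(\theta+i\varepsilon)},E) < L_n(E,\varepsilon) - \delta\}\big|$ is sub-exponentially small in $n$ for an appropriate $\delta=\delta(n)$ going to zero slowly (like a negative power of $n$), and then balance the three quantities: the deficit $\delta$ on the good set, the measure $e^{-n^c}$ of the bad set, and the $O((\log n)^{C_0}/n)$ error budget. Choosing $\gamma_1$ smaller than the exponent in $\delta$ and using that $e^{-n^c}$ beats any polynomial makes all error terms $\ll n^{-\gamma_1}$ for $n$ large, which is exactly the claimed conclusion $\int_0^1 u_n\,d\theta > L_n(E,\varepsilon) - n^{-\gamma_1}$. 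I expect no genuine obstruction beyond correctly citing the large deviation input and verifying that the constants $\gamma_1$, $C_0$, and the threshold $N(\tau,E,f,\eta,\alpha)$ can be chosen uniformly in $|\varepsilon|<\eta/2$, which follows since all the cited estimates (Lemma~\ref{lem:vn_upper}, Theorem~\ref{thm:Ln_L}) are uniform on that range.
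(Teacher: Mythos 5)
Your proposal has two essential gaps. The first is circularity: the measure bound $|\mathcal{B}_n| < e^{-n^{\gamma_2}}$ that you invoke to control the exceptional set is precisely Lemma~\ref{lem:un_LDT}, i.e.\ \cite[Proposition 2.11]{GS2}. But in \cite{GS2} that large deviation estimate for $u_n$ is proved \emph{after}, and \emph{using}, Lemma 2.10 (the statement you are trying to prove). The John--Nirenberg-type argument for the subharmonic function $u_n$ (based on its $O(1)$ Riesz mass and shift-averaging over the Diophantine orbit) controls only the deviation of $u_n$ from its own average $\int_0^1 u_n\,d\theta$; upgrading ``$u_n$ is close to its mean'' to ``$u_n$ is close to $L_n$'' requires exactly the present lower bound $\int u_n \ge L_n - n^{-\gamma_1}$ together with the upper bound $u_n\le v_n\le L_n+C(\log n)^{C_0}/n$. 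So you may not take Lemma~\ref{lem:un_LDT} as given here.

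The second gap is the bound on the bad-set contribution. You assert it is $e^{-n^{c}}\cdot\bigl(L_n + C(\log n)^{C_0}/n\bigr)$ ``since the pointwise upper bound on $v_n$ keeps the integrand bounded,'' but you need a \emph{lower} bound on $\int_{\mathcal{B}_n} u_n$, and the uniform \emph{upper} bound $u_n\le v_n\le L_n+\dots$ controls nothing in that direction. In fact $u_n=\frac1n\log|D_n|\to -\infty$ at zeros of $D_n$, and the deviation set $\mathcal{B}_n$ clusters exactly near those zeros, so $u_n$ is genuinely unbounded below there. To bound $\int_{\mathcal{B}_n} u_n$ from below one must appeal to the Riesz representation of $u_n$ (as in Lemma~\ref{lem:Riesz}) together with an $O(n)$ bound on the number of zeros of $D_n$ in a fixed annulus, which yields the integrability estimate $\int_{\mathcal{B}_n} u_n \ge -C|\mathcal{B}_n|\log(1/|\mathcal{B}_n|) - C|\mathcal{B}_n|$; your proposal never invokes either ingredient. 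The actual argument in \cite{GS2} proceeds through the avalanche principle, the LDT for $v_n$ (not $u_n$), Cartan's estimate, and the Riesz/zero-count decomposition of $\log|D_n|$, with no appeal to a large deviation estimate for $u_n$.
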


The final tool is a large deviation estimate for $u_n$.
\begin{lemma}\label{lem:un_LDT}\cite[Proposition 2.11]{GS2}
If $L(E)>\tau>0$, there exists a positive constant $\gamma_2>0$ such that for $n>N(\tau,E,f,\eta,\alpha)$,
\begin{align*}
\left|\{\theta\in \T:\, u_n(e^{2\pi i\theta},E)<L_n(E)-n^{-\gamma_2}\}\right|<e^{-n^{\gamma_2}}.
\end{align*}
\end{lemma}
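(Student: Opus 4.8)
\emph{Proof plan.} This is the Bourgain--Goldstein--Schlag large deviation estimate for Dirichlet determinants, and I would reproduce the multiscale/avalanche-principle argument of \cite{BG,GS1,GS2}. Write $u(\theta):=u_n(e^{2\pi i\theta},E)$. The only ``soft'' input available at a single scale is subharmonicity of $u_n$ on the annulus $A_R$: combined with the uniform upper bound $\sup_\theta u(\theta)\le L_n(E)+C(\log n)^{C_0}/n$ (from \eqref{eq:un<vn} and Lemma~\ref{lem:vn_upper}) and the average lower bound $\int_0^1 u(\theta)\,d\theta\ge L_n(E)-n^{-\gamma_1}$ (Lemma~\ref{lem:un_lower}), the Riesz representation on $A_R$ gives $\|u\|_{\mathrm{BMO}(\T)}=O(1)$, and John--Nirenberg yields $|\{u<L_n(E)-\lambda\}|\le Ce^{-c\lambda}$. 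Since $D_n(\cdot,E)$ has $\asymp n$ zeros near $\mathcal C_1$, this BMO bound cannot be improved at scale $n$, so the estimate is only useful for $\lambda\gtrsim1$; the whole point is to bootstrap it down to $\lambda=n^{-\gamma_2}$ with the measure still $<e^{-n^{\gamma_2}}$.

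The mechanism would be the avalanche principle applied to a factorization into a smaller scale $m$ (eventually $m\approx n^{\rho}$ for suitable $\rho\in(0,1)$). Writing $n=km$ and $M_n(e^{2\pi i\theta},E)=\prod_{j=k-1}^{0}M_m(e^{2\pi i(\theta+jm\alpha)},E)$, on the set of $\theta$ for which every factor satisfies $\|M_m(e^{2\pi i(\theta+jm\alpha)},E)\|\ge e^{m(L(E)-m^{-\gamma})}$ and consecutive products do not cancel, the avalanche principle (and, in parallel, the determinant identity $D_{a+b}(\theta)=D_a(\theta)D_b(\theta+a\alpha)-D_{a-1}(\theta)D_{b-1}(\theta+(a+1)\alpha)$ coming from \eqref{eq:Dn_Mn}) expresses $v_n(\theta)$, hence $u_n(\theta)$, as an average over the shifts $\theta+jm\alpha$ of $\log\|M_{2m}\|$ and $\log\|M_m\|$ up to an error $O(m^{-1}e^{-cm})$; by Theorem~\ref{thm:Ln_L} and shift-invariance of Lebesgue measure (which turns the scale-$m$ estimate at $k$ translated points into a union bound) these averages lie within $O(m^{-\gamma}+m^{-1})$ of $L_n(E)$. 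This pins $u_n(\theta)$ to $L_n(E)+O(m^{-\gamma}+m^{-1})$ off a set of measure $O(k\,e^{-m^{\gamma}})$; choosing $m\approx n^{\rho}$ and iterating stabilizes the exponents and gives the claim for $n>N(\tau,E,f,\eta,\alpha)$. The Diophantine hypothesis \eqref{def:DC} on $\alpha$ enters to keep the shifted points $\{\theta+jm\alpha\}$ equidistributed enough for the union bound and for the transversality used in the no-cancellation step.

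The hard part will be precisely that no-cancellation step: showing
\[
\Big|\Big\{\theta:\ \|M_{2m}(e^{2\pi i(\theta+jm\alpha)},E)\|<\tfrac12\|M_m(e^{2\pi i(\theta+(j+1)m\alpha)},E)\|\,\|M_m(e^{2\pi i(\theta+jm\alpha)},E)\|\Big\}\Big|
\]
is exponentially small in $m$. The relevant function $\theta\mapsto\log\|M_{2m}\|-\log\|M_m(\cdot+m\alpha)\|-\log\|M_m\|$ is $\le0$ by submultiplicativity and has average $2m L_{2m}(E)-2m L_m(E)=O(1)$, but it is \emph{not} subharmonic, so BMO is unavailable; one instead bounds cancellation through the angle between the top left singular direction of $M_m(\cdot)$ and the top right singular direction of $M_m(\cdot+m\alpha)$, and controls that angle using the scale-$m$ form of the estimate together with \eqref{def:DC}---which is exactly what makes the whole scheme an induction on scales. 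This, and the analogous control (needed to pass from $v_n$ to $u_n$) of the $\theta$-set on which the singular directions of $M_n(\theta)$ nearly align with the fixed vector $e_2$, are the points carried out in \cite{GS1,GS2}; note that a direct Cartan bound on the $\asymp n$ zeros of $D_n(\cdot,E)$ at scale $n$ would only recover the useless range $\lambda\gtrsim1$, so the gain genuinely comes from combining the two scales as above.
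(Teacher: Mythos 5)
The paper does not actually prove Lemma~\ref{lem:un_LDT}; it is quoted verbatim from \cite[Proposition~2.11]{GS2}, so there is no internal argument here for your sketch to be compared against. Judged as a reconstruction of the argument in the cited source, your outline is broadly on target: the BMO/John--Nirenberg bound coming from subharmonicity plus the uniform upper bound (Lemma~\ref{lem:vn_upper} via \eqref{eq:un<vn}) and the average lower bound (Lemma~\ref{lem:un_lower}) is good only for order-one deviations, and the sub-polynomial deviation $n^{-\gamma_2}$ with sub-exponential exceptional measure is obtained by an induction on scales, combining the avalanche principle at block length $m\approx n^{\rho}$, a union bound over the $k=n/m$ shifts (where the Diophantine condition on $\alpha$ enters), and Cartan's estimate to upgrade ``large at a dense set of centers'' to ``large off a thin exceptional set.'' Two cautions, though. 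The passage from $v_n$ to $u_n$ is a genuinely separate step: one has $u_n\le v_n$ from \eqref{eq:un<vn}, but the reverse fails pointwise, and \cite{GS2} bridges the gap by writing the avalanche-principle expansion directly for the Dirichlet determinants via the relation underlying \eqref{eq:Dn_Mn}, exploiting that all four entries of $M_n$ are shifted determinants and cannot all be simultaneously much smaller than $\|M_n\|$ except on a small set; your parenthetical mention of the determinant identity gestures in the right direction but leaves this step essentially unaddressed. Also, your displayed ``no-cancellation'' condition with the constant $\tfrac12$ is a caricature: the actual avalanche-principle hypothesis compares $\|M_{2m}\|$ to $\mu^{-1/2}\|M_m\|\,\|M_m(\cdot+m\alpha)\|$ with $\mu$ the smallest factor norm, and bounding the measure of its failure set is exactly the singular-direction/angle argument you correctly identify as the technical heart of \cite{GS1,GS2}. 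In short, the route you describe is the right one, but as written it is a roadmap of the cited works rather than a proof, and the two hardest pieces (the angle/transversality estimate and the $v_n\to u_n$ transfer) are only named, not supplied.
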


\subsection{Eigenfunction expansion}
Let $\phi$ be a solution to $H_{\alpha,\theta}\phi=E\phi$, where $H_{\alpha,\theta}$ is the Schr\"odinger operator in \eqref{def:H_operator}.
Then the following eigenfunction expansion holds: let $y\in [\ell_1, \ell_2]$,
\begin{align}\label{eq:Green_exp}
\phi_y=-\frac{D_{\ell_2-y}(\theta+(y+1)\alpha,E)}{D_{\ell_2-\ell_1+1}(\theta+\ell_1\alpha,E)}\phi_{\ell_1-1}-\frac{D_{y-\ell_1}(\theta+\ell_1\alpha,E)}{D_{\ell_2-\ell_1+1}(\theta+\ell_1\alpha,E)}\phi_{\ell_2+1}.
\end{align}
This identity is a consequence of Cramer's rule.

\section{Effective Riesz representation for the annulus}\label{sec:Riesz}

This section develops some basic potential theory on annuli. We begin with the Green's  function on an annulus, which is standard. 
 
\subsection{Green's function for the annulus.}
It is worth pointing out that we never use the explicit expression of the Green's function, rather only the fact that it can be decomposed as in \eqref{eq:GRH*} below. For the sake of completeness, we state the precise Green's kernel, which can be derived  by the method of images. 
%By uniqueness it is easy to verify  that the formula does indeed give the correct Green's function. 

\begin{lemma}\label{lem:Green_AR}
The Green's function on the annulus $A_R$ is given by
\begin{align}\label{def:G}
G_R(z,w)=\frac{\log( |z|/ R) \log (|w|/ R)}{4\pi\log R} +K_R(z,w),
\end{align}
where
\begin{align}\label{eq:tildeG_prod}
K_R(z,w)=\frac{1}{2\pi}\log \left(|z/R-w/R| \cdot \frac{\prod_{k=1}^{\infty} |1-\frac{1}{R^{4k}}\frac{z}{w}| \cdot |1-\frac{1}{R^{4k}} \frac{w}{z}|}{\prod_{k=1}^{\infty} |1-\frac{1}{R^{4k-2}}w\overline{z}|\cdot |1-\frac{1}{R^{4k-2}} \frac{1}{\overline{z}w}|}\right).
\end{align}
The Green's function is symmetric and invariant under rotations: $G_R(z,w)=G_R(w,z)$ and $G_R(z,w)=G_R(e^{i\phi} z,e^{i\phi} w)$. 
\end{lemma}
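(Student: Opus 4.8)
The plan is to construct $G_R(z,w)$ directly by the method of images on the multiplicatively periodic universal-type cover, and then verify the two defining properties of the Green's function: (i) $G_R(\cdot,w)$ is harmonic on $A_R\setminus\{w\}$ with a logarithmic singularity $\frac{1}{2\pi}\log|z-w|+O(1)$ at $z=w$, and (ii) $G_R(z,w)\to 0$ as $|z|\to R$ or $|z|\to 1/R$. Since $A_R$ is an annulus with inner radius $1/R$ and outer radius $R$, it is conformally a round annulus, and the boundary is the union of the two circles $\mathcal C_R$ and $\mathcal C_{1/R}$. I would normalize so that $\log|z|$ ranges over $(-\log R,\log R)$, and work with the variable $\zeta=\log z$, under which $A_R$ becomes the strip $|\mathrm{Re}\,\zeta|<\log R$ quotiented by $2\pi i\Z$; the group of reflections in the two vertical lines $\mathrm{Re}\,\zeta=\pm\log R$ generates a discrete group whose orbit of a source point produces the classical image series.

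The key computational step is to organize the image charges. Reflecting $w$ across the outer circle $|z|=R$ sends $w\mapsto R^2/\overline{w}$ with a charge of opposite sign; reflecting across the inner circle $|z|=1/R$ sends $w\mapsto 1/(R^2\overline w)$, again with opposite sign. Iterating, one generates two families: same-sign images at the points $R^{4k}w$ and $R^{-4k}w$ (for $k\ge1$, obtained by an even number of reflections, which are the dilations $z\mapsto R^{\pm 4k}z$), and opposite-sign images at the points $R^{4k-2}/\overline w$ and $R^{-(4k-2)}/\overline w$ (an odd number of reflections). Summing the corresponding fundamental solutions $\frac{1}{2\pi}\log|z-\text{(image)}|$ with the appropriate signs and regrouping the geometric factors—pulling out powers of $R$ to make each factor of the form $|1-R^{-m}(\cdot)|$—gives exactly the product in \eqref{eq:tildeG_prod}, with the convergence of the infinite products guaranteed by $R>1$ so that $R^{-4k}\to 0$ geometrically. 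The leftover discrepancy, coming from the fact that the pure image sum does not vanish identically on the boundary circles but is instead constant on each (with the two constants generally different), is absorbed into the single explicit harmonic correction term $\frac{\log(|z|/R)\log(|w|/R)}{4\pi\log R}$; one checks this term vanishes on $|z|=R$, equals $\frac{\log(|w|/R)}{2\pi}$ on $|z|=1/R$ after using $\log(1/R)=-\log R$ — wait, more precisely it is the unique harmonic function with the correct boundary constants needed to cancel the residual boundary values of $K_R$, and it contributes no new singularity in $A_R$.

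Finally I would record symmetry and rotation-invariance. Rotation invariance $G_R(e^{i\phi}z,e^{i\phi}w)=G_R(z,w)$ is immediate since $|z|,|w|$ are unchanged and every factor in $K_R$ depends only on $z/w$, $w\overline z$, and $1/(\overline z w)$, all rotation-invariant; the explicit term depends only on $|z|,|w|$. Symmetry $G_R(z,w)=G_R(w,z)$ follows either abstractly from the symmetry of the Green's function of a self-adjoint elliptic problem, or by direct inspection: the explicit term is manifestly symmetric, and in $K_R$ the swap $z\leftrightarrow w$ exchanges $z/w\leftrightarrow w/z$ within the numerator product (leaving it invariant) and sends $|1-R^{-(4k-2)}w\overline z|\leftrightarrow |1-R^{-(4k-2)}z\overline w|=|1-R^{-(4k-2)}w\overline z|$ (equal because the two are complex conjugates, hence have the same modulus), and likewise for the $1/(\overline z w)$ factor, so $K_R$ is symmetric. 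The main obstacle is purely bookkeeping: getting the signs of the image charges and the exact exponents $4k$ versus $4k-2$ right so that the regrouped product matches \eqref{eq:tildeG_prod} on the nose, and confirming that the single rank-one correction term is precisely what is needed to restore the boundary condition rather than merely up to an additive constant. Since the lemma states that only the decomposition \eqref{eq:GRH*} (a smooth bounded part plus the logarithmic part) is used later, even a less explicit derivation via standard annulus Green's function theory would suffice, but the image construction above gives the stated closed form.
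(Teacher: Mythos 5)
Your proposal is correct in outline but takes a genuinely different route from the paper. You derive the formula constructively by the method of images: reflecting the source across the two boundary circles, summing (with alternating signs) the resulting fundamental solutions at the image points $R^{\pm4k}w$ and $R^{\pm(4k-2)}/\overline w$, regrouping factors to get the convergent products in \eqref{eq:tildeG_prod}, and then adding the explicit bilinear-in-$\log$ harmonic term to repair the residual constant boundary values. The paper, by contrast, does not derive the formula at all — it simply \emph{verifies} it: symmetry and rotation-invariance are read off; the decomposition \eqref{eq:GRH*} into $\frac{1}{2\pi}\log|z-w|$ plus a $z$-harmonic remainder is observed by inspection (so $\Delta_z G=\delta_{w}$); the boundary condition is checked by evaluating $G_R(z,w)$ at the single point $w=R$ (where the first term drops and the two infinite products telescope against each other and against the prefactor $|z/R-1|$ to give exactly $1$), then at $w=1/R$, and rotational invariance extends this to all of $\partial A_R$; uniqueness of the Green's function finishes the argument. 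Your derivation is more illuminating about \emph{where} the formula comes from (and indeed the paper's preamble says it ``can be derived by the method of images''), but it carries a real bookkeeping burden — conditional convergence of the raw image series, the exact regrouping into $|1-R^{-m}(\cdot)|$ factors, and pinning down the correction term precisely rather than up to a constant — which you flag but do not fully carry out. The paper's verification-plus-uniqueness route sidesteps all of that and is shorter; it only needs one concrete computation at a single boundary point. Both are sound, and both correctly note (as does the paper's remark preceding the lemma) that all later arguments use only the decomposition \eqref{eq:GRH*}, not the closed form.
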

\begin{proof}
The symmetry properties are evident from the formula. Second, by inspection 
\begin{align} 
G_R(z,w) &= \frac{1}{2\pi}\log |z-w| + H_R(z,w), \quad z\in A_R, w\in \overline{A_R} \label{eq:GRH*} \\
\Delta_z H_R (z,w) &= 0 \nonumber
\end{align}
whence $\Delta_z  G(z,w) =\delta_0(z-w)$. 
Third, if $w=R$, then $$G_R(z,w)=   K_R(z,w) =     \frac{1}{2\pi}\log \left(\frac{|z/R-1| }{  |1-\frac{1}{R}\overline{z}|}\right) =0 $$
This holds for all $|w|=R$ by rotational invariance of $G_R$. If $|w|=R^{-1}$, then
\[
G_R(z,1/R) =- \frac{1}{2\pi}\log (|z|/R) +  \frac{1}{2\pi}\log\left( \frac{|z/R-1/R^2|}{|1-1/(R\bar{z})|}\right) =0
\]
and this again holds for all $|w|=R^{-1}$ by rotational invariance. The lemma follows from the uniqueness of the Green's function. 
\end{proof}

We evaluate the integral of the Green's function along the circle $z\in \mathcal{C}_r$ with $1/R\leq r\leq R$.
These integrals will be used later in Section \ref{sec:Riesz_un} to estimate the number of zeros of $D_n(z,E)$, and the Riesz mass of $v_n(z,E)$ in Section \ref{sec:Riesz_vn}.

\begin{lemma}\label{lem:int_Green}
For $1/R\leq r\leq R$ and $w\in A_R$, we have
\begin{align}\label{eq:int_Green}
I(\log r, \log R, w) &:=2\pi \int_0^{1}  G_R (re^{ 2\pi i\theta}, w)\, \mathrm{d}\theta \\
& =(2\log R)^{-1}
\begin{cases}
 \log (rR) \log|w/R|, \text{ if } |w|\geq r\\
\\
\log( r/ R) \log |w R|, \text{ if } |w|<r.
\end{cases}
\end{align}
\end{lemma}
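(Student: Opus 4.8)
\textbf{Proof proposal for Lemma~\ref{lem:int_Green}.}

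The plan is to compute the angular average of $G_R(re^{2\pi i\theta}, w)$ by splitting $G_R$ into the two pieces furnished by \eqref{def:G}, namely the explicit ``product term'' $\frac{\log(|z|/R)\log(|w|/R)}{4\pi\log R}$ and the kernel $K_R(z,w)$, and integrating each separately over the circle $\mathcal{C}_r$. The first piece is trivial: since $|z|=r$ is constant on $\mathcal{C}_r$, its average is simply $\frac{\log(r/R)\log(|w|/R)}{4\pi\log R}$, contributing $\frac{\log(r/R)\log(|w|/R)}{2\log R}$ to $I(\log r,\log R,w)$ after the factor $2\pi$. The real work is in averaging $K_R$, and here I would exploit that $K_R(\cdot,w)$ is harmonic in $z$ on $A_R$ away from $z=w$ (this is the content of \eqref{eq:GRH*}, since $G_R - K_R$ is itself harmonic in $z$), so that its circle average $\theta\mapsto 2\pi\int_0^1 K_R(re^{2\pi i\theta},w)\,d\theta$ is, as a function of $\log r$ on $(-\log R,\log R)$, piecewise linear with a single corner at $\log r = \log|w|$: affine on $(-\log R,\log|w|)$ and affine on $(\log|w|,\log R)$, with the jump in slope at $\log|w|$ equal to $1$ (the mass of the logarithmic singularity $\frac{1}{2\pi}\log|z-w|$ picked up when the circle $\mathcal{C}_r$ crosses $w$). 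This is just the standard mean-value property for the subharmonic function $\log|z-w|$ combined with harmonicity of $H_R$.

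To pin down the two affine pieces I would use two boundary evaluations together with one matching condition. From Lemma~\ref{lem:Green_AR} we know $G_R(z,w)=0$ whenever $|z|=R$ or $|z|=1/R$; since $G_R = K_R + (\text{explicit term})$ and the explicit term vanishes when $|z|=R$ and equals $-\frac{\log(|w|/R)}{2\pi}$ when $|z|=1/R$, we get the boundary values of the $K_R$-average at $\log r = \pm\log R$, hence the boundary values of the full average $I$: namely $I(\log R,\log R,w)=0$ and $I(-\log R,\log R,w)=0$ directly from $G_R$ vanishing on both components of $\partial A_R$. So in fact it is cleanest to work with the full average $I(\log r,\log R,w)$ from the start: it is piecewise affine in $\log r$ on $[-\log R,\log R]$ with a corner at $\log r=\log|w|$, vanishes at both endpoints, and the slope jump at the corner is again $1$ (coming entirely from $K_R$, since the explicit term is affine in $\log r$ with no corner). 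Writing $I = a_-(\log r + \log R)$ for $\log r \le \log|w|$ and $I = a_+(\log r - \log R)$ for $\log r \ge \log|w|$, the endpoint conditions are automatic, continuity at the corner gives $a_-(\log|w|+\log R) = a_+(\log|w|-\log R)$, and the slope-jump condition gives $a_+ - a_- = 1$. Solving this $2\times 2$ linear system yields $a_- = \frac{\log|w|-\log R}{-2\log R} = \frac{\log(R/|w|)}{2\log R}$ and $a_+ = \frac{\log|w|+\log R}{-2\log R} = -\frac{\log(|w|R)}{2\log R}$... let me instead just record that solving gives exactly the two expressions $(2\log R)^{-1}\log(rR)\log|w/R|$ for $|w|\ge r$ and $(2\log R)^{-1}\log(r/R)\log|wR|$ for $|w|<r$, which one then verifies satisfy all three constraints (a direct check: at $r=R$ both vanish appropriately, at $r=1/R$ the first gives $0$, continuity at $r=|w|$ holds since both equal $(2\log R)^{-1}\log(|w|R)\log(|w|/R)$, and the derivatives in $\log r$ differ by $\frac{\log|w/R| - \log|wR|}{2\log R}\cdot(-1)$... $=1$).

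The main obstacle, such as it is, is bookkeeping with signs and with the two regimes $|w|\ge r$ versus $|w|<r$; there is no analytic difficulty, since everything reduces to (i) harmonicity of $H_R$ in $z$, (ii) the jump in the circle-average of $\log|z-w|$ across $|z|=|w|$, and (iii) the known boundary values of $G_R$ from Lemma~\ref{lem:Green_AR}. An alternative, fully computational route would be to insert the infinite-product formula \eqref{eq:tildeG_prod} for $K_R$ and integrate term by term using the classical identity $\int_0^1 \log|1-\rho e^{2\pi i\theta}|\,d\theta = \max(0,\log\rho)$ for each factor; this works but is messier, and I would relegate it to a remark or omit it, since the piecewise-affine argument is both shorter and more transparent. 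Either way the factor $2\pi$ in the definition of $I$ is arranged precisely so that the slope jump is the clean value $1$, which is what makes the final formula come out with the stated normalization $(2\log R)^{-1}$.
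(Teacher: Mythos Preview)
Your argument is correct, but it takes a genuinely different route from the paper's. The paper uses the decomposition \eqref{eq:GRH*}, $G_R(z,w)=\frac{1}{2\pi}\log|z-w|+H_R(z,w)$, and then treats the circle average as a function of $w$: the average of the logarithmic kernel is the elementary $\max(\log r,\log|w|)$, while the average $J_2(w)$ of $H_R$ is harmonic and radial in $w\in A_R$, hence of the form $a\log|w|+b$, with $a,b$ determined by the vanishing of $G_R$ on $\partial A_R$. You instead fix $w$ and view $I$ as a function of $\log r$: harmonicity of $G_R(\cdot,w)$ away from $w$ makes the circle average affine in $\log r$ on each side of $\log|w|$, the logarithmic singularity forces a slope jump of $1$ at the corner, and the Dirichlet condition on $\partial A_R$ pins down both endpoints. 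Both approaches reduce to a $2\times 2$ linear system and are of comparable length; the paper's version is perhaps slightly more direct since it avoids tracking the corner location, while yours has the minor advantage of producing the two cases $|w|\gtrless r$ in a single stroke rather than by adding $J_1$ and $J_2$. Your initial detour through the splitting $G_R=\text{(explicit term)}+K_R$ from \eqref{def:G} is unnecessary (as you yourself note), since $K_R$ carries the singularity and the decomposition \eqref{eq:GRH*} is the natural one here.
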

\begin{proof}
\eqref{eq:GRH*} yields
\begin{align}
2\pi \int_0^{1}  G_R (re^{ 2\pi i\theta}, w)\, \mathrm{d}\theta &= \int_0^{2\pi} \frac{1}{2\pi}\log |re^{ i\theta} -w|\, d\theta + \int_0^{2\pi}   H_R(re^{ i\theta},w) \, \mathrm{d}\theta  \\
&=: J_1(w)+J_2(w)   \nonumber
\end{align}
where $J_1(w)=\log|w|$ if $|w|\ge r$ and $J_1(w)=\log r$ if $|w|\le r$.  $J_2(w)$ is harmonic in $w\in A_R$, radial,  and continuous on~$\overline{A_R}$. Thus, $J_2(w)=a\log|w|+b$. Setting $|w|=R$, respectively $|w|=R^{-1}$ shows that 
\[
J_2(w) = \frac{\log(r/R)}{2\log  R} \log(|w|/R) - \log R
\]
which implies \eqref{eq:int_Green}.  
\end{proof}

\subsection{Effective Riesz representation}

We now turn to the basic Riesz representation of subharmonic functions. In contrast to~\cite{GS1,GS2}, which analyzed supercricial cocycles by means of potential theory on small disks, here we conduct this analysis globally on annuli. 

\begin{lemma}\label{lem:Riesz}
Let $v$ be a subharmonic function in a neighborhood of $\overline{A_R}$, and assume $v|_{\partial A_R}$ is a continuous function.
Let $G_R$ be the Green's function for $A_R$, as  in \eqref{def:G}.
There exists a positive finite measure $\mu$ on $A_R$, and a harmonic function $h_R$ on $A_R$, such that
\begin{align*}
v(w)=\int_{A_R} 2\pi G_R(z,w)\, \mu(\mathrm{d}z)+h_R(w),
\end{align*}
where 
\begin{align}\label{eq:Poisson}
h_R(w)=\int_{\partial A_R} v(z)\,\nu(w,A_R)( \mathrm{d}z),
\end{align}
where $\nu(w,A_R)$ is the harmonic measure of $A_R$ with pole at $w$.
In particular, 
\begin{align}\label{eq:h=v}
h_R(z)=v(z), \text{ for } z\in \partial A_R.
\end{align}
\end{lemma}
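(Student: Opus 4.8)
The plan is to mimic the classical Riesz decomposition theorem on a bounded domain, using the explicit Green's function $G_R$ from Lemma~\ref{lem:Green_AR} and the fact — recorded in \eqref{eq:GRH*} — that $2\pi G_R(z,w) = \log|z-w| + 2\pi H_R(z,w)$ with $H_R(\cdot,w)$ harmonic on $A_R$. First I would let $\mu := \frac{1}{2\pi}\Delta v$ in the distributional sense; since $v$ is subharmonic in a neighborhood of $\overline{A_R}$, the Riesz measure $\mu$ is a nonnegative Radon measure, and it is finite on $\overline{A_R}$ (and in particular on $A_R$) because $v$ is bounded above there. Then I would define
\begin{align*}
p(w) := \int_{A_R} 2\pi G_R(z,w)\, \mu(\mathrm{d}z)
\end{align*}
and verify that $p$ is a well-defined subharmonic function on $A_R$ with $\Delta p = 2\pi\mu = \Delta v$ in $\mathcal{D}'(A_R)$: the logarithmic part $\int \log|z-w|\,\mu(\mathrm{d}z)$ contributes exactly $2\pi\mu$, while $\int 2\pi H_R(z,w)\,\mu(\mathrm{d}z)$ is harmonic in $w$ (one may differentiate under the integral sign / use Fubini against a test function, using local integrability of $z\mapsto\log|z-w|$ against the finite measure $\mu$). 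Consequently $h_R := v - p$ is harmonic on $A_R$.

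Next I would identify the boundary behavior and the Poisson representation \eqref{eq:Poisson}. Since $G_R(z,w) \to 0$ as $w \to \partial A_R$ for each fixed $z \in A_R$ (this is exactly the defining property of the Green's function, verified in the proof of Lemma~\ref{lem:Green_AR}), a dominated-convergence argument — using that $G_R(z,\cdot)$ is dominated near $\partial A_R$ uniformly in $z$ away from $w$, together with the finiteness of $\mu$ — gives that $p$ extends continuously to $\overline{A_R}$ with $p|_{\partial A_R} = 0$. Hence $h_R$ extends continuously to $\overline{A_R}$ with $h_R|_{\partial A_R} = v|_{\partial A_R}$, which is \eqref{eq:h=v}. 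Being harmonic on $A_R$ and continuous up to the boundary with prescribed continuous boundary values, $h_R$ is the unique solution of the Dirichlet problem, which is represented by integration against harmonic measure,
\begin{align*}
h_R(w) = \int_{\partial A_R} v(z)\, \nu(w, A_R)(\mathrm{d}z),
\end{align*}
giving \eqref{eq:Poisson}. Finally, combining $v = p + h_R$ with the definitions yields the stated formula.

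The main obstacle is the passage to the boundary for $p(w)$: one needs a uniform (in $z$) integrable majorant for $w \mapsto 2\pi G_R(z,w)$ as $w$ approaches $\partial A_R$, so that dominated convergence applies and gives $p|_{\partial A_R}=0$ despite the logarithmic singularity of $G_R$ on the diagonal. This is handled by splitting the integral over $\{z : |z-w| < \rho\}$ and its complement: on the far part $G_R(z,\cdot)$ is bounded and vanishes on the boundary by the method-of-images formula; on the near part one uses that $\mu$ gives small mass to a small neighborhood of any boundary point (again finiteness of $\mu$) together with the elementary bound $|G_R(z,w)| \lesssim |\log|z-w|| + C$ there. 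A minor secondary point is justifying $\Delta_w \int H_R(z,w)\,\mu(\mathrm{d}z) = 0$, which follows from the mean value property of $H_R(z,\cdot)$ integrated against $\mu$ via Fubini. Everything else is the standard Riesz-decomposition bookkeeping, with the explicit $G_R$ of Lemma~\ref{lem:Green_AR} and its boundary-vanishing property doing the essential work.
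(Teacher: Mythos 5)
Your proposal is correct in substance but takes a genuinely different route from the paper. The paper's proof first reduces to the case of \emph{smooth} $v$ by convolving with a radial mollifier (monotone convergence then recovers the general statement), and in the smooth case simply invokes Green's second identity
\[
v(w)-\int_{A_R} G_R(z,w)\,\Delta v(z)\,m(\mathrm{d}z)=\int_{\partial A_R} v(z)\,\frac{\partial G_R}{\partial n_z}(z,w)\,\sigma(\mathrm{d}z),
\]
which produces the decomposition and the Poisson formula in a single stroke: the Riesz measure is $2\pi\mu=\Delta v\,m$, and the boundary integral \emph{is} the harmonic-measure integral because $\partial_n G_R$ is the Poisson kernel of $A_R$. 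Your argument instead works directly with the distributional Laplacian: you set $\mu=\frac{1}{2\pi}\Delta v$, form the Green potential $p(w)=\int_{A_R}2\pi G_R(z,w)\,\mu(\mathrm{d}z)$, show $\Delta p=\Delta v$ so that $h_R:=v-p$ is harmonic, and then identify $h_R$ with the solution of the Dirichlet problem by proving $p\to 0$ at $\partial A_R$.

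Both routes are standard, but they trade difficulties differently. The paper's mollification avoids the boundary-limit analysis entirely: for smooth $v$ everything in Green's identity is classical. Your direct approach has to handle two delicate points that mollification sidesteps. First, the dominated-convergence argument for $p|_{\partial A_R}=0$ requires controlling $\int_{|z-w|<\rho}\log\frac{1}{|z-w|}\,\mu(\mathrm{d}z)$ uniformly as $w$ tends to a boundary point $w_0$; the estimate you sketch (small mass near $w_0$ plus a local $|\log|z-w||$ bound) is the right idea but needs the observation that $\mu$ restricted to $A_R$ places vanishing mass on shrinking neighborhoods of $w_0\in\partial A_R$, which one should state explicitly. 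Second, your final step ("$h_R$ extends continuously to $\overline{A_R}$ with boundary values $v|_{\partial A_R}$, hence is the Dirichlet solution") implicitly uses that $v$ itself approaches $v|_{\partial A_R}$ continuously from inside $A_R$; the hypotheses only give that $v$ is upper semicontinuous and that the \emph{restriction} $v|_{\partial A_R}$ is continuous, so this non-tangential boundary continuity of $v$ deserves a word of justification (it does hold for subharmonic $v$ with continuous boundary trace, but it is not automatic and is exactly what mollification lets one ignore). With those two points tightened, your proof is a valid alternative to the paper's.
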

\begin{proof}
Without loss of generality, we may assume that $v$ is smooth. If this is not the case, we convolve $v$ with
a radial nonnegative mollifier. The submean property then guarantees monotone convergence. We skip these technical details.

By Green's second identify, with the Green's function $G_R$ defined in \eqref{def:G}, 
\begin{align*}
v(w)-\int_{A_R} G_R(z,w) \Delta v(z)\, m(\mathrm{d}z)=\int_{\partial A_R} v(z) \frac{\partial G_R}{\partial n_z}(z,w)\, \sigma(\mathrm{d}z),
\end{align*}
where $m$ is Lebesgue measure and $\sigma$ is the (unnormalized) arclength measure on $\partial A_R$. 
Since $v$ is smooth and subharmonic, $\Delta v$ is a non-negative, continuous function, and defines a positive measure $2\pi \mu=\Delta v\, m$. Therefore
\begin{align}
v(w)=\int_{A_R} 2\pi G_R(z,w)\, \mu(\mathrm{d}z)+h_R(w),
\end{align}
where 
\begin{align}
h_R(w):=&\int_{\partial A_R} v(z) \frac{\partial G_R}{\partial n_z}(z,w)\, \sigma(\mathrm{d}z)\\
=&\int_{\partial A_R} v(z)\,\nu(w,A_R)(\mathrm{d}z)
\end{align}
is the harmonic part.
\end{proof}

\begin{remark}
By the maximum principle, we have
\begin{align}\label{eq:max_h}
\sup_{w\in A_R} h_R(w)\leq \max_{z\in \partial A_R} v(z).
\end{align}
The  harmonic measure of the annulus $A_R$ can be computed explicitly from $G_R$ above, but we have no need for that. 
We only require a basic well-known bound on the density of the harmonic measure, namely that it is controlled by the inverse distance to $\partial A_R$. 
\end{remark}

\section{Riesz representation for $u_n$}\label{sec:Riesz_un}
Recall that $D_n(z,E)$ is a holomorphic function in $A_{e^{2\pi \eta}}$. 
For $0\leq \varepsilon<\eta$, let $$N_n(E,\varepsilon):=\#\{z\in \overline{A_{e^{2\pi \varepsilon}}}:\, D_n(z,E)=0\},$$ in particular $$N_n(E,0)=\#\{z\in \mathcal{C}_1:\, D_n(z,E)=0\}.$$

\subsection{Avila's acceleration via a zero count}
We establish the following more general form of Theorem \ref{thm:acc=zeros}.
\begin{theorem}\label{thm:zero_count}
For some energy $E\in \R$ assume that $L(E,0)\geq \tau>0$. 
Assume further that $0< \varepsilon_0<\eta$ is such that there exists $\varepsilon_1>0$, $[\varepsilon_0-\varepsilon_1,\varepsilon_0+\varepsilon_1]\subset [0,\eta)$, $D_n(z,E)$ is zero-free on $\partial A_{e^{2\pi i (\varepsilon_0+\varepsilon_1)}}$, and
\begin{align}\label{eq:zero_count_asp}
\kappa(E,\varepsilon_0+\varepsilon)\equiv \kappa,
\end{align}
for $|\varepsilon|<\varepsilon_1$.
Then for the constant $\gamma_1>0$ in Lemma \ref{lem:un_lower}, for $n>N(\tau,E,f,\eta,\alpha,\gamma_1)$ and some absolute constant $C_1>0$, 
\begin{align*}
\left|\frac{1}{2n} N_n(E,\varepsilon_0+\frac{1}{3}\varepsilon_1)-\kappa\right|\leq C_1\varepsilon_1^{-2} n^{-\gamma_1}.
\end{align*} 
\end{theorem}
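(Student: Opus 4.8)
The plan is to apply the effective Riesz representation of Lemma~\ref{lem:Riesz} to the subharmonic function $u_n(\cdot,E) = \tfrac{1}{n}\log|D_n(\cdot,E)|$ on a suitable annulus, and to read off $N_n(E,\varepsilon_0+\tfrac13\varepsilon_1)$ as (roughly) the Riesz mass $\mu$ of that annulus. Concretely, fix $R := e^{2\pi(\varepsilon_0+\varepsilon_1)}$ (after rescaling $z\mapsto z$ so the relevant circle is $\mathcal{C}_R$; more precisely one works on the annulus centered at radius $e^{2\pi\varepsilon_0}$ so that $\partial A_R$ in the rescaled coordinates corresponds to the zero-free circles $|z| = e^{2\pi(\varepsilon_0\pm\varepsilon_1)}$). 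Since $D_n$ is holomorphic on a neighborhood of $\overline{A_R}$ and zero-free on the outer boundary (and, by Fact~\ref{fact:zero} applied to the real symmetry, we may also arrange the inner boundary to be treated symmetrically or handled via \eqref{eq:Pn_real}), $u_n$ satisfies the hypotheses of Lemma~\ref{lem:Riesz}. The Riesz measure is $2\pi\mu = \Delta u_n = \sum_{j}\delta_{z_j}\cdot\tfrac{2\pi}{n}$ over the zeros $z_j$ of $D_n$ in $A_R$ (with multiplicity), by the Poisson–Jensen / Riesz formula for $\log|D_n|$. Hence $\mu(A_R) = \tfrac1n\#\{\text{zeros in }A_R\}$, which is the quantity $\tfrac1n N_n(E,\varepsilon_0+\varepsilon_1)$ up to zeros on the boundary circles (none, by hypothesis, on the outer one).

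Next I would integrate the Riesz representation over the circle $\mathcal{C}_r$ for appropriate radii $r$ and use Lemma~\ref{lem:int_Green}, which computes $\int_0^1 G_R(re^{2\pi i\theta},w)\,d\theta$ as an explicit piecewise-linear function of $\log|w|$ — the slope changing by $(2\log R)^{-1}\cdot 2\log(rR)$ vs. $(2\log R)^{-1}\cdot 2\log(r/R)$ depending on whether $|w|\gtrless r$. Taking $r$ at the two ends (or at two well-chosen radii straddling $\mathcal{C}_{e^{2\pi(\varepsilon_0+\frac13\varepsilon_1)}}$) and differencing the corresponding averages $\int_0^1 u_n(re^{2\pi i\theta},E)\,d\theta$, the harmonic part $h_R$ contributes a term linear in $\log r$ — which, by \eqref{eq:h=v}, matches $\int u_n$ on $\partial A_R$, i.e. essentially $L_n(E,\varepsilon_0\pm\varepsilon_1)$ up to errors $O(n^{-\gamma_1})$ from Lemma~\ref{lem:un_lower} and Lemma~\ref{lem:vn_upper} (upper bound for $u_n \le v_n$) plus Theorem~\ref{thm:Ln_L}. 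The Green-function part contributes exactly the Riesz mass of the sub-annulus $\{|z| < r\}\cap A_R$ weighted by the slope jump. Solving the resulting linear relation for $\mu$ of the sub-annulus out to radius $e^{2\pi(\varepsilon_0+\frac13\varepsilon_1)}$, and noting that by the real symmetry \eqref{eq:Pn_real} the zeros come in pairs $w, 1/\bar w$ so $\mu$ of the annulus $\overline{A_{e^{2\pi(\varepsilon_0+\frac13\varepsilon_1)}}}$ (appropriately centered) is what we want, one identifies $\tfrac{1}{2n}N_n$ with a difference quotient of $L_n(E,\cdot)/(2\pi)$ over an interval of length $\sim\varepsilon_1$ inside $(\varepsilon_0-\varepsilon_1,\varepsilon_0+\varepsilon_1)$.

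Finally, on that interval $L(E,\cdot)$ is affine with slope $2\pi\kappa$ by the hypothesis \eqref{eq:zero_count_asp} that $\kappa(E,\varepsilon_0+\varepsilon)\equiv\kappa$ there (the acceleration being the right-derivative of $L/(2\pi)$, and $L$ convex). So the difference quotient of $L_n/(2\pi)$ equals $\kappa$ up to: (i) the replacement of $L_n$ by $L$, controlled by Theorem~\ref{thm:Ln_L} giving $O(1/n)$; (ii) the replacement of boundary integrals of $u_n$ by $L_n$, controlled by Lemma~\ref{lem:un_lower} and $u_n\le v_n$ with Lemma~\ref{lem:vn_upper}, giving $O(n^{-\gamma_1})$; (iii) division by the interval length, which is $\sim\varepsilon_1$, and more precisely the slope-jump factor from Lemma~\ref{lem:int_Green} produces a denominator $\sim \varepsilon_1\log R$, so the accumulated error is $O(\varepsilon_1^{-2}n^{-\gamma_1})$ after accounting for both the $1/(\varepsilon_1)$ from the difference quotient and a second $1/\varepsilon_1$-type loss from the geometry of $A_R$ (e.g. the harmonic-measure density bound $\lesssim \operatorname{dist}(\cdot,\partial A_R)^{-1}$ near the boundary). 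The main obstacle I anticipate is bookkeeping the $\varepsilon_1$-dependence carefully: one must avoid any error that blows up faster than $\varepsilon_1^{-2}$, which forces the evaluation radii to be chosen well inside $A_R$ (distance $\gtrsim\varepsilon_1$ from both boundary circles), and requires the harmonic-measure/Poisson-kernel estimates on the annulus to be made quantitative with explicit $\varepsilon_1$ and $R$ dependence — this is exactly where the ``effective'' nature of the Riesz representation in Section~\ref{sec:Riesz} is needed, rather than a soft existence statement. A secondary subtlety is handling zeros that may lie very close to (but not on) the boundary circle $\mathcal{C}_{e^{2\pi(\varepsilon_0+\frac13\varepsilon_1)}}$ at which we count; since $\varepsilon_0+\tfrac13\varepsilon_1$ is a fixed interior radius and the count is of zeros in the closed annulus, this is benign, but one should note that $N_n$ as a function of the radius can jump, so the identity is really between $\tfrac{1}{2n}N_n$ and the Riesz mass up to an $o(1/n)$-negligible discrepancy at that single radius.
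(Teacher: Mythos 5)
Your proposal follows essentially the same route as the paper: apply the Riesz representation (Lemma~\ref{lem:Riesz}) on a symmetric annulus, take circle averages via Lemma~\ref{lem:int_Green}, exploit the $w\mapsto 1/\bar w$ zero symmetry of Fact~\ref{fact:zero}, control errors through Lemma~\ref{lem:un_lower}, Lemma~\ref{lem:vn_upper} combined with $u_n\leq v_n$, the harmonic-measure density bound, and Theorem~\ref{thm:Ln_L}, and then use linearity of $L(E,\cdot)$ from~\eqref{eq:zero_count_asp} to identify the slope $\kappa$. Two small corrections to your execution: (i) there is no re-centering of the annulus at $e^{2\pi\varepsilon_0}$ --- the paper works directly on $A_{\tilde R}$, $\tilde R=e^{2\pi(\varepsilon_0+\varepsilon_1)}$, symmetric about the unit circle, which is precisely what makes Fact~\ref{fact:zero} and the explicit Green's function of Lemma~\ref{lem:Green_AR} applicable and collapses the Green part of the circle average to $-\frac{\pi}{n}\int_{\log r/(2\pi)}^{\log\tilde R/(2\pi)}N_n(\varepsilon)\,\mathrm{d}\varepsilon$; (ii) this produces an \emph{integral} of $N_n$, not its value at one radius, so rather than invoking an ``$o(1/n)$ discrepancy at a single radius'' one squeezes the constant $N_n(\varepsilon_0+\tfrac13\varepsilon_1)$ between two such averaged quantities using monotonicity of $\varepsilon\mapsto N_n(\varepsilon)$, with the radius pairs $\bigl(e^{2\pi\varepsilon_0},\,e^{2\pi(\varepsilon_0+\varepsilon_1/3)}\bigr)$ and $\bigl(e^{2\pi(\varepsilon_0+\varepsilon_1/3)},\,e^{2\pi(\varepsilon_0+2\varepsilon_1/3)}\bigr)$.
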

\begin{remark}
Suppose that for  $\varepsilon_0\in (0,\eta)$  there exists $\varepsilon_1>0$ such that $(\varepsilon_0-\varepsilon_1, \varepsilon_0+\varepsilon_1)\subset [0,\eta)$ and for $0<\varepsilon\leq \varepsilon_1$,
\begin{align*}
\kappa(E,\varepsilon_0-\varepsilon)\equiv \kappa_1< \kappa_2\equiv \kappa(E,\varepsilon_0+\varepsilon).
\end{align*}
Then Theorem \ref{thm:zero_count} yields for $n>N(\tau,E,f,\eta,\alpha,\gamma_1)$,
\begin{align*}
\left|\frac{1}{2n} \left(N_n(E,\varepsilon_0+\frac{1}{3}\varepsilon_1)-N_n(E,\varepsilon_0-\frac{1}{3}\varepsilon_1)\right)-(\kappa_2-\kappa_1)\right|\leq C_1\varepsilon_1^{-2} n^{-\gamma_1}.
\end{align*}
Hence one can also characterize the change of slopes of the piece-wise linear function $L(E,\varepsilon)$ in terms of zero counts of $D_n(z,E)$.
\end{remark}
\begin{proof}
In the proof we shall omit the dependence of various parameters on $E$ for simplicity.
By the assumption \eqref{eq:zero_count_asp} and Theorem \ref{thm:Ln_L}, one has for $0\leq |\varepsilon|\leq \varepsilon_1$,
\begin{align}\label{eq:zero_count_LE}
L(\varepsilon_0)+2\pi \kappa \varepsilon\leq  L(\varepsilon_0+\varepsilon)\leq  L_n(\varepsilon_0+\varepsilon)\leq L(\varepsilon_0+\varepsilon)+\frac{C}{n}=L(\varepsilon_0)+2\pi\kappa\varepsilon+\frac{C}{n}.
\end{align}
Let $\tilde{R}:=e^{2\pi(\varepsilon_0+\varepsilon_1)}$ and $\tilde{N}:=N_n(\varepsilon_0+\varepsilon_1)$.
Denote by $w_1,w_2,...,w_{\tilde{N}}$ the zeros of $D_n(z)$ in $A_{\tilde{R}}$.
Define
$$G_{\tilde{R},n}(z):=\frac{1}{n}\sum_{k=1}^{\tilde{N}} G_{\tilde{R}}(z,w_k),$$
where $G_R(\cdot,\cdot)$ is the Green's function in \eqref{def:G}.
Lemma~\ref{lem:Riesz}, with $R=\tilde{R}$, applied to $u_n$ yields 
\begin{align}\label{eq:thm_zeros}
u_n(z)=2\pi G_{\tilde{R},n}(z)+h_{\tilde{R},n}(z),
\end{align}
First, we estimate the harmonic part.
\begin{lemma}\label{lem:h_un}
With the constant $\gamma_1$ in Lemma \ref{lem:un_lower} and the constant $C_0$ in Lemma \ref{lem:vn_upper},  for $z\in A_{e^{2\pi \varepsilon}}$, $0\leq \varepsilon< \varepsilon_0+\varepsilon_1$, and $n>N(\tau,E,f,\eta,\alpha)$ 
\begin{align}\label{eq:h_constant_un}
L_n(E,\varepsilon_0+\varepsilon_1)-\frac{C}{\tilde{R}-e^{2\pi \varepsilon}} n^{-\gamma_1}\leq  h_{\tilde{R},n}(z,E)\leq L_n(E,\varepsilon_0+\varepsilon_1)+C\frac{(\log n)^{C_0}}{n},
\end{align}
\end{lemma}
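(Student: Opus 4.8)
The plan is to exploit the decomposition \eqref{eq:thm_zeros} together with the two-sided control of $u_n$ and $v_n$ on the boundary circles $\partial A_{\tilde R}$, and then propagate those boundary bounds into the interior of the annulus using harmonicity and the bound on the harmonic measure density. First I would write $h_{\tilde R,n} = u_n - 2\pi G_{\tilde R,n}$. Since $G_{\tilde R}(\cdot, w_k) = 0$ on $\partial A_{\tilde R}$ (this is the defining boundary property of the Green's function, established in Lemma \ref{lem:Green_AR}), we have $h_{\tilde R,n}(z) = u_n(z)$ for $z\in \partial A_{\tilde R}$, i.e.\ on the two circles $\mathcal C_{\tilde R}$ and $\mathcal C_{1/\tilde R}$. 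By \eqref{eq:Pn_real}, $u_n$ takes the same values on these two circles, so it suffices to bound $u_n$ on $\mathcal C_{\tilde R}$, equivalently on the line $\mathrm{Im}\,\theta = \varepsilon_0+\varepsilon_1$.

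For the \emph{upper bound}: by \eqref{eq:un<vn}, $u_n \le v_n$, and Lemma \ref{lem:vn_upper} gives $\sup_{\theta} v_n(e^{2\pi i(\theta + i(\varepsilon_0+\varepsilon_1))}, E) \le L_n(E,\varepsilon_0+\varepsilon_1) + C(\log n)^{C_0}/n$ (here $\varepsilon_0+\varepsilon_1 < \eta$, so the hypothesis $|\varepsilon| < \eta/2$ of that lemma would actually need $\varepsilon_0+\varepsilon_1 < \eta/2$ — a point I will revisit below). Since $h_{\tilde R,n}$ is harmonic on $A_{\tilde R}$ and agrees with $u_n$ on the boundary, the maximum principle as in \eqref{eq:max_h} gives $h_{\tilde R,n}(z) \le L_n(E,\varepsilon_0+\varepsilon_1) + C(\log n)^{C_0}/n$ for all $z\in A_{\tilde R}$, in particular for $z \in A_{e^{2\pi\varepsilon}}$. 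This yields the right-hand inequality in \eqref{eq:h_constant_un}.

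For the \emph{lower bound}: I would integrate \eqref{eq:thm_zeros} over a circle $\mathcal C_r$ with $r = e^{2\pi\varepsilon}$ and $\varepsilon < \varepsilon_0 + \varepsilon_1$. The integral $\int_0^1 G_{\tilde R,n}(re^{2\pi i\theta})\,d\theta$ is given explicitly by Lemma \ref{lem:int_Green} — it is $\tfrac1n\sum_k I(\log r, \log\tilde R, w_k)/(2\pi)$, a nonpositive quantity whose size is $O(1)$ times $\tilde N/n$ and, crucially, whose magnitude is controlled by $(\tilde R - r)^{-1}$ times the (already $O(1)$) normalized zero count — this is where the factor $(\tilde R - e^{2\pi\varepsilon})^{-1}$ enters. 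Combining with Lemma \ref{lem:un_lower}, which says $\int_0^1 u_n(re^{2\pi i\theta},E)\,d\theta > L_n(E,\varepsilon) - n^{-\gamma_1}$, and with \eqref{eq:zero_count_LE} to replace $L_n(E,\varepsilon)$ by $L_n(E,\varepsilon_0+\varepsilon_1)$ up to $O(1/n)$ errors, I get $\int_0^1 h_{\tilde R,n}(re^{2\pi i\theta})\,d\theta \ge L_n(E,\varepsilon_0+\varepsilon_1) - C(\tilde R - r)^{-1} n^{-\gamma_1}$. But $h_{\tilde R,n}$ is harmonic and the annulus $A_{\tilde R}$ is rotation-invariant, so $\int_0^1 h_{\tilde R,n}(re^{2\pi i\theta})\,d\theta = a\log r + b$ for constants $a,b$; combined with the pointwise upper bound already proved and the two boundary values, this pins down $h_{\tilde R,n}$ on each circle $\mathcal C_r$ up to the stated errors, giving the pointwise lower bound since a harmonic function that is radially affine in $\log|z|$ after averaging, together with a matching upper envelope, cannot dip below its circular average by more than the harmonic-measure-weighted boundary oscillation — which is again absorbed into the $(\tilde R - e^{2\pi\varepsilon})^{-1} n^{-\gamma_1}$ term.

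The main obstacle I anticipate is \emph{bookkeeping the interior-vs-boundary passage for the lower bound}: the upper bound follows cleanly from the maximum principle, but the lower bound requires turning a bound on the \emph{circular average} of $h_{\tilde R,n}$ into a \emph{pointwise} bound, and this genuinely uses that $h_{\tilde R,n}$ is trapped from above (so it cannot be large on part of a circle and very negative elsewhere) together with the explicit harmonic-measure / Poisson-kernel estimate on the density near $\partial A_{\tilde R}$ — the inverse-distance bound mentioned in the remark after Lemma \ref{lem:Riesz}. Tracking the dependence of the constant on $(\tilde R - e^{2\pi\varepsilon})^{-1}$ through both the Green's-function integral in Lemma \ref{lem:int_Green} and the harmonic-measure density, and checking that the regime hypothesis of Lemma \ref{lem:vn_upper} is met (possibly by first reducing to $\varepsilon_0 + \varepsilon_1 < \eta/2$, which one may assume after shrinking $\varepsilon_1$ since only an arbitrarily small such window is needed), is the fiddly part; the conceptual content is entirely contained in Lemmas \ref{lem:vn_upper}, \ref{lem:un_lower}, \ref{lem:int_Green} and the maximum principle.
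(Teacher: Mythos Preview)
Your upper bound argument is correct and matches the paper exactly: $h_{\tilde R,n}=u_n$ on $\partial A_{\tilde R}$, then $u_n\le v_n$ together with Lemma~\ref{lem:vn_upper} and the maximum principle.

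For the lower bound, your detour through the Green's function integral on an \emph{interior} circle $\mathcal{C}_r$ is unnecessary and introduces an apparent circularity: controlling $\int_0^1 G_{\tilde R,n}(re^{2\pi i\theta})\,d\theta$ via Lemma~\ref{lem:int_Green} brings in the zero count $N_n(\varepsilon)$, which you treat as ``already $O(1)$'' when normalized by~$n$---but that is precisely what Theorem~\ref{thm:zero_count} is in the process of establishing, and Lemma~\ref{lem:h_un} is a step inside that proof. Likewise, invoking \eqref{eq:zero_count_LE} to compare $L_n(E,\varepsilon)$ with $L_n(E,\varepsilon_0+\varepsilon_1)$ only covers $\varepsilon\in[\varepsilon_0-\varepsilon_1,\varepsilon_0+\varepsilon_1]$, not the full range $[0,\varepsilon_0+\varepsilon_1)$ of the lemma.

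The paper avoids all of this by working on the \emph{boundary} circle $\mathcal{C}_{\tilde R}$, where $h_{\tilde R,n}=u_n$ exactly and the Green's function contribution vanishes. Lemma~\ref{lem:un_lower} applied at $\varepsilon=\varepsilon_0+\varepsilon_1$ then gives $\int_0^1 h_{\tilde R,n}(\tilde R e^{2\pi i\theta})\,d\theta \ge L_n(\varepsilon_0+\varepsilon_1)-n^{-\gamma_1}$ directly, with no zero count and no comparison of Lyapunov exponents at different heights. One then sets $\tilde h_{\tilde R,n}:=L_n(\varepsilon_0+\varepsilon_1)+C(\log n)^{C_0}/n - h_{\tilde R,n}\ge 0$, a nonnegative harmonic function whose boundary integral is at most $2n^{-\gamma_1}$, and the Poisson representation~\eqref{eq:Poisson} with the inverse-distance bound on the harmonic measure density yields $0\le\tilde h_{\tilde R,n}(z)\le C(\tilde R - e^{2\pi\varepsilon})^{-1}n^{-\gamma_1}$ for $z\in A_{e^{2\pi\varepsilon}}$. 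You do reach this Poisson/harmonic-measure mechanism in your final paragraph, so the conceptual endpoint is right; the fix is simply to obtain the average lower bound on the boundary rather than in the interior.
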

\begin{proof}
In the proof we shall omit the dependence on $E$ for simplicity.
Note that the harmonic part satisfies $h_{\tilde{R},n}=u_n$ on $\partial A_{\tilde{R}}$, due to \eqref{eq:un_sym}.
By Lemma \ref{lem:vn_upper} and \eqref{eq:un<vn}, one has that for $r=\tilde{R}$ or $1/\tilde{R}$ and $n$ large, uniformly in $\theta$,
\begin{align*}
h_{\tilde{R},n}(re^{2\pi i\theta})=u_n(re^{2\pi i\theta})\leq v_n(re^{2\pi i\theta})\leq L_n(\varepsilon_0+\varepsilon_1)+C\frac{(\log n)^{C_0}}{n}.
\end{align*}
Hence by the maximum principle \eqref{eq:max_h},  
\begin{align}\label{eq:upper_h}
h_{\tilde{R},n}(z)\leq L_n(\varepsilon_0+\varepsilon_1)+C\frac{(\log n)^{C_0}}{n}, \text{ for } z\in \overline{A_{\tilde{R}}}.
\end{align}
We also have by Lemma \ref{lem:un_lower} that for $n$ large enough,
\begin{align}\label{eq:int_h_lower}
\int_0^1 h_{\tilde{R},n}(\tilde{R}e^{2\pi i\theta})\, \mathrm{d}\theta=\int_0^{1} u_n(\tilde{R}e^{2\pi i\theta})\, \mathrm{d}\theta\geq L_n(\varepsilon_0+\varepsilon_1)-\frac{1}{n^{\gamma_1}}.
\end{align}
Let 
\begin{align}\label{def:tilde_h}
\tilde{h}_{\tilde{R},n}(z):=L_n(\varepsilon_0+\varepsilon_1)+C\frac{(\log n)^{C_0}}{n}-h_{\tilde{R},n}(z)\geq 0,
\end{align} 
where we invoked \eqref{eq:upper_h}.
In view of \eqref{eq:int_h_lower},  for $n$ large,
\begin{align}\label{eq:int_th_lower}
\int_0^{1} \tilde{h}_{\tilde{R},n}(\tilde{R}e^{2\pi i\theta})\, \mathrm{d}\theta\leq \frac{2}{n^{\gamma_1}}.
\end{align}
By \eqref{eq:Poisson} and \eqref{eq:int_th_lower}, and the well-known estimate on the harmonic measure
$$0\leq \frac{\mathrm{d}\nu(w, A_R)(z)}{\mathrm{d}\sigma(z)}\leq C(\mathrm{dist}(w,\partial A_R))^{-1},$$ 
with arclength measure $\sigma$, 
one has that for $z\in A_{e^{2\pi \varepsilon}}$, with $0\leq \varepsilon<\varepsilon_0+\varepsilon_1$,
\begin{align*}
0\leq \tilde{h}_{\tilde{R},n}(z)\leq \frac{C}{\tilde{R}-e^{2\pi\varepsilon}} \int_{0}^{1} \tilde{h}_{\tilde{R},n}(\tilde{R}e^{2\pi i\theta})\, \mathrm{d}\theta\leq \frac{C}{\tilde{R}-e^{2\pi\varepsilon}} n^{-\gamma_1}.
\end{align*}
This combined with \eqref{eq:upper_h} yields the claimed result.
\end{proof}

Next, we evaluate the integrals of $G_{\tilde{R},n}(z)$ along circles.
For $1\leq r\leq \tilde{R}$,  
\begin{align*}
I_n(\log r, \log \tilde{R}):=&\int_0^{1} 2\pi G_{\tilde{R},n}(r e^{2\pi i\theta})\, d\theta\\
=&\frac{1}{n}\sum_{k=1}^{\tilde{N}} I(\log r, \log \tilde{R}, w_k),
\end{align*}
where $I(\log r,\log \tilde{R}, w)$ is defined as in~\eqref{eq:int_Green}.
By \eqref{eq:int_Green},
\begin{align*}
I_n(\log r, \log \tilde{R})
=&\frac{1}{n}\sum_{|w_k|>r}\frac{\log r+\log \tilde{R}}{2\log \tilde{R}} \log \frac{|w_k|}{\tilde{R}}\\
&+\frac{1}{n}\sum_{|w_k|<1/r}\frac{\log r-\log \tilde{R}}{2\log \tilde{R}} \log |w_k \tilde{R}|
+\frac{1}{n}\sum_{1/r\leq |w_k|\leq r} \frac{\log r-\log \tilde{R}}{2\log \tilde{R}} \log |w_k \tilde{R}|.
\end{align*}
In view of Fact \ref{fact:zero}, 
\begin{align}\label{eq:int_1}
&\sum_{|w_k|>r}\frac{\log r+\log \tilde{R}}{2\log \tilde{R}} \log \frac{|w_k|}{\tilde{R}}+\sum_{|w_k|<1/r}\frac{\log r-\log \tilde{R}}{2\log \tilde{R}} \log |w_k \tilde{R}|\notag\\
=&\sum_{|w_k|>r}\left(\frac{\log r+\log \tilde{R}}{2\log \tilde{R}} \log \frac{|w_k|}{\tilde{R}}+\frac{\log r-\log \tilde{R}}{2\log \tilde{R}} \log \frac{\tilde{R}}{|w_k|}\right)\notag\\
=&\sum_{|w_k|>r} \log \frac{|w_k|}{\tilde{R}}
=-\sum_k \int_r^{\tilde{R}} \chi_{[|w_k|, \tilde{R})}(x)\, \frac{\mathrm{d}x}{x} \notag\\
=&-\int_r^{\tilde{R}} \#\{w_k:\, r<|w_k|\leq x\}\frac{\mathrm{d}x}{x}
\end{align}
where $\chi$ is the characteristic function.
By Fact \ref{fact:zero}, 
\begin{align}\#\{w_k:\, r<|w_k|\leq x\}=\frac{1}{2}\big[ N_n(\frac{\log x}{2\pi})-N_n(\frac{\log r}{2\pi})\big].
\end{align}
Plugging this into \eqref{eq:int_1} yields
\begin{align}\label{eq:int_1'}
&\sum_{|w_k|>r}\frac{\log r+\log \tilde{R}}{2\log \tilde{R}} \log \frac{|w_k|}{\tilde{R}}+\sum_{|w_k|<1/r}\frac{\log r-\log \tilde{R}}{2\log \tilde{R}} \log |w_k \tilde{R}|\notag\\
=&-\frac{1}{2}\int_r^{\tilde{R}} (N_n(\frac{\log x}{2\pi})-N_n(\frac{\log r}{2\pi}))\, \frac{\mathrm{d}x}{x}\notag\\
=&-\pi\int_{\frac{\log r}{2\pi}}^{\frac{\log \tilde{R}}{2\pi}} (N_n(\varepsilon)-N_n(\frac{\log r}{2\pi}))\, \mathrm{d}\varepsilon.
\end{align}
Similarly,
\begin{align}\label{eq:int_2}
&\sum_{1/r\leq |w_k|\leq r} \frac{\log r-\log \tilde{R}}{2\log \tilde{R}} \log |w_k \tilde{R}|\notag\\
=&\sum_{1<|w_k|\leq r}  \frac{\log r-\log \tilde{R}}{2\log \tilde{R}} \log |w_k \tilde{R}|
+\sum_{1/r\leq |w_k|<1}  \frac{\log r-\log \tilde{R}}{2\log \tilde{R}} \log |w_k \tilde{R}|\notag\\
&+\sum_{|w_k|=1} \frac{\log r-\log \tilde{R}}{2\log \tilde{R}} \log \tilde{R}\notag\\
=&\sum_{1<|w_k|\leq r}  (\log r-\log \tilde{R})+\frac{N_n(0)}{2} (\log r-\log \tilde{R})\notag\\
=&\frac{1}{2} (\log r-\log \tilde{R})N_n(\frac{\log r}{2\pi}).
\end{align}
Therefore, combing \eqref{eq:int_1'} and \eqref{eq:int_2} yields
\begin{align}\label{eq:int_final}
I_n(\log r, \log \tilde{R})=-\frac{\pi}{n} \int_{\frac{\log r}{2\pi}}^{\frac{\log \tilde{R}}{2\pi}} N_n(\varepsilon)\, \mathrm{d}\varepsilon.
\end{align}
Integrating \eqref{eq:thm_zeros} along $z\in \mathcal{C}_{r_j}$, $1\leq r_1<r_2\leq \tilde{R}$, and combining with \eqref{eq:int_final}, one obtains 
\begin{align*}
\int_0^{1} u_n(r_j e^{2\pi i\theta})\, \mathrm{d}\theta=-\frac{\pi}{n}\int_{\frac{\log r_j}{2\pi}}^{\frac{\log \tilde{R}}{2\pi}} N_n(\varepsilon)\, \mathrm{d}\varepsilon+\int_0^{1} h_{\tilde{R},n}(r_je^{2\pi i\theta})\, \mathrm{d}\theta.
\end{align*}
Taking the difference of the equations above between $r_1$ and $r_2$, we arrive at 
\begin{align}\label{eq:N_0'}
\int_0^{1} u_n(r_2 e^{2\pi i\theta})\, \mathrm{d}\theta-\int_0^{1} u_n(r_1e^{2\pi i\theta})\, \mathrm{d}\theta=&\frac{\pi}{n}\int_{\frac{\log r_1}{2\pi}}^{\frac{\log r_2}{2\pi}} N_n(\varepsilon)\, \mathrm{d}\varepsilon \notag\\
&+\int_0^1 h_{\tilde{R},n}(r_2 e^{2\pi i \theta})\, \mathrm{d}\theta-\int_0^1 h_{\tilde{R},n}(r_1 e^{2\pi i \theta})\, \mathrm{d}\theta.
\end{align}
By Lemma \ref{lem:un_lower}, we have for $n$ large,
\begin{align}\label{eq:N_1'}
\int_0^{1} u_n(r_j e^{2\pi i\theta})\, d\theta\geq L_n(\frac{\log r_j}{2\pi})-\frac{1}{n^{\gamma_1}}.
\end{align}
while it follows from  Lemma \ref{lem:vn_upper} and \eqref{eq:un<vn} that for $C=C(\tau,E,f,\eta,c,a)>0$ and $C_0=C_0(a)$,
\begin{align}\label{eq:N_2'}
u_n(r_j e^{2\pi i\theta})\leq v_n(r_j e^{2\pi i\theta})\leq L_n(\frac{\log r_j}{2\pi})+C\frac{(\log n)^{C_0}}{n}.
\end{align}
By Lemma \ref{lem:h_un}, one has for $n$ large,
\begin{align}\label{eq:N_2''}
\left|\int_0^1 h_{\tilde{R},n}(r_2 e^{2\pi i \theta})\, \mathrm{d}\theta-\int_0^1 h_{\tilde{R},n}(r_1 e^{2\pi i \theta})\, \mathrm{d}\theta\right|\leq \frac{C}{\tilde{R}-r_2} n^{-\gamma_1}.
\end{align}
Hence plugging the estimates  \eqref{eq:N_1'} and \eqref{eq:N_2'} into \eqref{eq:N_0'}, one concludes that  for $n$ large,
\begin{align}\label{eq:N_3'}
\frac{1}{2n}(\log r_2-\log r_1) N_n(\frac{\log r_1}{2\pi})\leq &\frac{\pi}{n}\int_{\frac{\log r_1}{2\pi}}^{\frac{\log r_2}{2\pi}} N_n(\varepsilon)\, \mathrm{d}\varepsilon\notag\\
\leq &L_n(\frac{\log r_2}{2\pi})-L_n(\frac{\log r_1}{2\pi})+\frac{C}{\tilde{R}-r_2} n^{-\gamma_1},
\end{align}
and
\begin{align}\label{eq:N_4'}
\frac{1}{2n}(\log r_2-\log r_1) N_n(\frac{\log r_2}{2\pi})\geq &\frac{\pi}{n}\int_{\frac{\log r_1}{2\pi}}^{\frac{\log r_2}{2\pi}} N_n(\varepsilon)\, \mathrm{d}\varepsilon \notag\\
\geq &L_n(\frac{\log r_2}{2\pi})-L_n(\frac{\log r_1}{2\pi})-\frac{C}{\tilde{R}-r_2}n^{-\gamma_1}.
\end{align}
Taking $r_1=\exp(2\pi(\varepsilon_0+\frac{1}{3}\varepsilon_1))$ and $r_2=\exp(2\pi(\varepsilon_0+\frac{2\varepsilon_1}{3}))$ in \eqref{eq:N_3'} yields
\begin{align}\label{eq:N_5'}
\frac{\pi \varepsilon_1}{3n} N_n(\varepsilon_0+\frac{1}{3}\varepsilon_1)\leq L_n(\varepsilon_0+\frac{2}{3}\varepsilon_1)-L_n(\varepsilon_0+\frac{1}{3}\varepsilon_1)+C \varepsilon_1^{-1} n^{-\gamma_1}.
\end{align}
Setting $r_1=\exp(2\pi\varepsilon_0)$ and $r_2=\exp(2\pi(\varepsilon_0+\frac{\varepsilon_1}{3}))$ in \eqref{eq:N_4'} yields
\begin{align}\label{eq:N_6'}
\frac{\pi\varepsilon_1}{3n} N_n(\varepsilon_0+\frac{1}{3}\varepsilon_1)\geq L_n(\varepsilon_0+\frac{\varepsilon_1}{3})-L_n(\varepsilon_0)-C\varepsilon_1^{-1} n^{-\gamma_1}.
\end{align}
Combining \eqref{eq:N_5'}, \eqref{eq:N_6'} with \eqref{eq:zero_count_LE}, we infer that for some absolute constant $C_1>0$ and $n>N(\tau,E,f,\eta,\alpha,\gamma_1)$,
\begin{align}\label{eq:N_7'}
\left| \frac{1}{2n} N_n(\varepsilon_0+\frac{1}{3}\varepsilon_1)-\kappa\right| \leq C_1\varepsilon_1^{-2} n^{-\gamma_1}.
\end{align}
This proves the claimed result.
\end{proof}

\subsection{Riesz representation for $u_n$ via zeros}

We can now draw the following conclusions about the structure of the logarithms of $|D_n|$ where $D_n$ are  the determinants in finite volume. 

\begin{theorem}\label{thm:Riesz_un}
Let $E\in \R$ be such that $L(E,0)\geq \tau>0$. 
Suppose  $\varepsilon_2\in (0,\eta)$ satisfies 
\begin{align}\label{eq:L_linear}
L(E,\varepsilon)=L(E,0)+2\pi \kappa(E,0)\varepsilon,
\end{align}
for all $0\leq \varepsilon\leq \varepsilon_2$, and $D_n(z.E)$ is zero-free on $\partial A_{e^{2\pi \varepsilon_2}}$.
Let $R=e^{2\pi\varepsilon_2}$ and $w_1,...,w_{N_n(E,\varepsilon_2)}$ be the zeros of $D_n(z,E)$ in $A_{R}$
and  define 
\begin{align*}
G_{R,n}(z,E)=\frac{1}{n}\sum_{k=1}^{N_n(E,\varepsilon_2)} G_R(z, w_k),
\end{align*}
where $G_R$ is the Green's function in \eqref{def:G}. Then 
\begin{align*}
u_n(z,E)=2\pi G_{R,n}(z,E)+h_{R,n}(z,E),
\end{align*}
where the harmonic part satisfies $h_{R,n}=u_n$ on $\partial A_R$.
Furthermore, 
\begin{itemize}
\item with the constant $\gamma_1$ in Lemma \ref{lem:un_lower} and the constant $C_0$ in Lemma~\ref{lem:vn_upper}, for any $z\in A_{r}$, $1\leq r<R$, and $n>N(\tau,E,f,\eta,\alpha)$ the harmonic part satisfies
\begin{align}\label{eq:h_constant_un*}
L_n(E,\varepsilon_2)-\frac{C}{R-r} \frac{1}{n^{\gamma_1}}\leq  h_{R,n}(z,E)\leq L_n(E,\varepsilon_2)+C\frac{(\log n)^{C_0}}{n},
\end{align}
for some constant $C=C(\tau,E,f,\eta,\alpha)>0$.
\item with the constant $\gamma_1$ in Lemma \ref{lem:un_lower} and some absolute constant $C>0$,   for all $0\leq \varepsilon\leq \frac{2}{3}\varepsilon_2$, and $n>N(\tau,E,f,\eta,\alpha,\gamma_1)$ 
\begin{align}\label{eq:N_est_un}
|\frac{1}{2n} N_n(E, \varepsilon )-\kappa(E,0)|\leq C\varepsilon^{-2} n^{-\gamma_1}.
\end{align}
In particular, when the acceleration $\kappa(E,0)=1$, 
\begin{align}\label{eq:N_est_un_kappa=1}
|\frac{1}{2n} N_n(E, \varepsilon)-1|\leq C\varepsilon^{-2} n^{-\gamma_1}.
\end{align}
\end{itemize}
\end{theorem}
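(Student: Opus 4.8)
The plan is to deduce all three assertions from the Riesz representation of Lemma~\ref{lem:Riesz} together with the estimates already assembled inside the proof of Theorem~\ref{thm:zero_count}; no genuinely new input is needed beyond correctly identifying the Riesz measure of $u_n$.

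First I would establish the decomposition $u_n = 2\pi G_{R,n} + h_{R,n}$. Since $R = e^{2\pi\varepsilon_2} < e^{2\pi\eta}$, the determinant $D_n(z,E)$ is holomorphic in a neighborhood of $\overline{A_R}$, and by hypothesis it is zero-free on $\partial A_R$, so $u_n(\cdot,E) = \frac{1}{n}\log|D_n(\cdot,E)|$ is subharmonic near $\overline{A_R}$ with continuous boundary values. Its distributional Laplacian on $A_R$ is $\frac{1}{n}\Delta\log|D_n| = \frac{2\pi}{n}\sum_k \delta_{w_k}$, by the standard identity $\Delta\log|g| = 2\pi\sum_{\{g=0\}}\delta$ for holomorphic $g$ counted with multiplicity. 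Feeding the measure $\mu = \frac1n\sum_k\delta_{w_k}$ into Lemma~\ref{lem:Riesz} and using the symmetry of $G_R$ yields $u_n(z,E) = \frac{2\pi}{n}\sum_k G_R(z,w_k) + h_{R,n}(z,E) = 2\pi G_{R,n}(z,E) + h_{R,n}(z,E)$, while \eqref{eq:h=v} gives $h_{R,n} = u_n$ on $\partial A_R$.

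Next, for the two-sided bound \eqref{eq:h_constant_un*} on the harmonic part I would run the argument of Lemma~\ref{lem:h_un} with $\tilde R = R$, i.e. with $\varepsilon_0 + \varepsilon_1 = \varepsilon_2$: the upper bound comes from $h_{R,n} = u_n \le v_n \le L_n(E,\varepsilon_2) + C(\log n)^{C_0}/n$ on $\partial A_R$ (Lemma~\ref{lem:vn_upper} and \eqref{eq:un<vn}) plus the maximum principle; the lower bound comes from $\int_0^1 u_n(Re^{2\pi i\theta},E)\,\mathrm{d}\theta \ge L_n(E,\varepsilon_2) - n^{-\gamma_1}$ (Lemma~\ref{lem:un_lower}), the reflection symmetry $u_n(z,E) = u_n(1/\overline z,E)$ from \eqref{eq:Pn_real} to control the inner boundary circle $\mathcal{C}_{1/R}$, and the standard bound $\lesssim 1/\mathrm{dist}(\cdot,\partial A_R)$ on the density of harmonic measure, which on $A_r$ with $1 \le r < R$ produces the factor $1/(R-r)$.

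Finally, for the zero count \eqref{eq:N_est_un} I would not invoke Theorem~\ref{thm:zero_count} directly, since its constraint $\varepsilon_0 - \varepsilon_1 \ge 0$ would not reach the whole range $\varepsilon \le \tfrac23\varepsilon_2$; instead I would reuse the two inequality chains \eqref{eq:N_3'}--\eqref{eq:N_4'} from the proof of that theorem, which are valid for every $1 \le r_1 < r_2 \le R$ once $D_n$ is zero-free on $\partial A_R$. By \eqref{eq:L_linear} and Theorem~\ref{thm:Ln_L}, $L_n(E,\tfrac{\log r_2}{2\pi}) - L_n(E,\tfrac{\log r_1}{2\pi}) = \kappa(E,0)(\log r_2 - \log r_1) + O(1/n)$ as long as $\tfrac{\log r_2}{2\pi} \le \varepsilon_2$. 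For fixed $0 < \varepsilon \le \tfrac23\varepsilon_2$, the lower bound $\tfrac{1}{2n}N_n(E,\varepsilon) \ge \kappa(E,0) - C\varepsilon^{-2}n^{-\gamma_1}$ follows from \eqref{eq:N_4'} with $r_1 = 1$, $r_2 = e^{2\pi\varepsilon}$, and the upper bound $\tfrac{1}{2n}N_n(E,\varepsilon) \le \kappa(E,0) + C\varepsilon^{-2}n^{-\gamma_1}$ follows from \eqref{eq:N_3'} with $r_1 = e^{2\pi\varepsilon}$, $r_2 = e^{\pi(\varepsilon+\varepsilon_2)}$; in both cases $\log r_2 - \log r_1$ and $R - r_2$ are $\gtrsim \varepsilon$, so dividing turns the error $C/n + Cn^{-\gamma_1}/(R-r_2) \lesssim n^{-\gamma_1}/\varepsilon$ into $C\varepsilon^{-2}n^{-\gamma_1}$, and \eqref{eq:N_est_un_kappa=1} is the case $\kappa(E,0)=1$. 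The \emph{main obstacle} in the whole proof is exactly this bookkeeping: one must keep $r_1,r_2 \in [1,R]$ while making both $\log r_2 - \log r_1$ and $R - r_2$ comparable to $\varepsilon$, which is precisely what forces the restriction $\varepsilon \le \tfrac23\varepsilon_2$.
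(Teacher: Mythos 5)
Your decomposition step and the harmonic bound \eqref{eq:h_constant_un*} follow the paper exactly: the paper simply says ``Taking $\varepsilon_0=\varepsilon_1=\frac{1}{2}\varepsilon_2$, \eqref{eq:h_constant_un} follows from Lemma \ref{lem:h_un},'' and your identification of the Riesz measure $\frac1n\sum_k\delta_{w_k}$ fills in the (implicit) first claim correctly.

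For the zero count, however, your stated reason for not invoking Theorem~\ref{thm:zero_count} is wrong. The paper does invoke it directly, with $\varepsilon_0=\varepsilon_1=\frac{3}{4}\varepsilon$, which gives $\varepsilon_0-\varepsilon_1=0\ge 0$, $\varepsilon_0+\varepsilon_1=\frac32\varepsilon\le\varepsilon_2$, and $\varepsilon_0+\frac13\varepsilon_1=\varepsilon$ exactly; the conclusion of Theorem~\ref{thm:zero_count} then reads $|\frac{1}{2n}N_n(E,\varepsilon)-\kappa|\le C_1\varepsilon_1^{-2}n^{-\gamma_1}$, which is \eqref{eq:N_est_un} with a harmless constant. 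So the constraint $\varepsilon_0-\varepsilon_1\ge 0$ \emph{does} reach every $\varepsilon\le\frac23\varepsilon_2$; you misread the theorem's hypotheses. That said, your replacement argument (re-deriving the analogues of \eqref{eq:N_3'}--\eqref{eq:N_4'} over the fixed annulus $A_R$ and choosing $r_1=1$, $r_2=e^{2\pi\varepsilon}$ for the lower bound, $r_1=e^{2\pi\varepsilon}$, $r_2=e^{\pi(\varepsilon+\varepsilon_2)}$ for the upper) is correct, and in fact has a small advantage over the paper's shortcut: Theorem~\ref{thm:zero_count} as stated demands $D_n$ zero-free on $\partial A_{e^{2\pi(\varepsilon_0+\varepsilon_1)}}=\partial A_{e^{3\pi\varepsilon}}$, which the hypotheses of Theorem~\ref{thm:Riesz_un} only guarantee for $\varepsilon=\frac23\varepsilon_2$; for other $\varepsilon$ one needs a (standard, but unspoken) perturbation of $\varepsilon_1$. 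Working with the fixed $R$ as you do avoids that technicality altogether.
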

\begin{proof}
Taking $\varepsilon_0=\varepsilon_1=\frac{3}{4}\varepsilon\leq \frac{1}{2}\varepsilon_2$, it is clear that \eqref{eq:N_est_un} follows from Theorem \ref{thm:zero_count}.
Taking $\varepsilon_0=\varepsilon_1=\frac{1}{2}\varepsilon_2$, \eqref{eq:h_constant_un} follows from Lemma \ref{lem:h_un}.
\end{proof}

\section{Riesz representation for $v_n$}\label{sec:Riesz_vn}
It is natural to ask if the acceleration can be characterized by the Riesz mass of the function $v_n$ rather than through the number of zeros of $D_n$, or equivalently,  the Riesz mass of $u_n$. For future reference, we show here that this is indeed the case.

\begin{theorem}\label{thm:Riesz_vn}
Let $E\in \R$ be such that $L(E,0)\geq \tau>0$. 
Let $\varepsilon_2\in (0,\eta)$ be such that
\begin{align}\label{eq:L_linear_vn}
L(E,\varepsilon)=L(E,0)+2\pi \kappa(E,0)\varepsilon,
\end{align}
for $0\leq \varepsilon\leq \varepsilon_2$ and set $R=e^{2\pi\varepsilon_2}$.
Then 
\begin{align}\label{eq:Riesz_vn}
v_n(w,E)=\int_{A_R} 2\pi G_R(z,w)\, \mu_{n,E}(\mathrm{d}z)+h_{R,n}(w,E),
\end{align}
where the harmonic part satisfies $h_{R,n}=v_n$ on $\partial A_R$.
Furthermore, for any $1< r<R^{1/3}$,  for some constant $C>0$, and $n$ large,
\begin{align}\label{eq:Riesz_mass}
|\mu_{n,E}(A_r)-2\kappa(E,0)|\leq \frac{C \varepsilon_2^{-1} }{\log r} \cdot \frac{(\log n)^{C_0}}{n}.
\end{align}
With the constant $C_0$ in Lemma \ref{lem:vn_upper} and some constant $C>0$,  for any $1\leq r<R$ and $n$ large enough
\begin{align}\label{eq:harmonic_vn}
L_n(E,\varepsilon_2)-\frac{C}{R-r}\frac{(\log n)^{C_0}}{n}\leq h_{R,n}(w,E)\leq L_n(E,\varepsilon_2)+C\frac{(\log n)^{C_0}}{n}
\end{align}
for all  $w\in A_r$.
\end{theorem}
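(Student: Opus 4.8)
The plan is to prove the three assertions in the order stated. For the representation \eqref{eq:Riesz_vn}: since $R=e^{2\pi\varepsilon_2}<e^{2\pi\eta}$, the function $w\mapsto v_n(w,E)$ is subharmonic in a neighborhood of $\overline{A_R}$ and continuous there (recall $\det M_n\equiv 1$, so $\|M_n\|\geq 1$ and $v_n\geq 0$ is finite and continuous). Thus Lemma~\ref{lem:Riesz} applies directly and yields $v_n=\int_{A_R}2\pi G_R(\cdot,w)\,\mu_{n,E}(\mathrm{d}w)+h_{R,n}$ with a positive finite Riesz measure $\mu_{n,E}$ on $A_R$ and $h_{R,n}=v_n$ on $\partial A_R$. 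Note that no zero-freeness of $D_n$ on $\partial A_R$ is needed here, in contrast to Theorem~\ref{thm:Riesz_un}, precisely because we work with the general subharmonic $v_n$ rather than an explicit product over zeros.

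For the harmonic part \eqref{eq:harmonic_vn} I would follow the proof of Lemma~\ref{lem:h_un}, which becomes simpler for $v_n$. By Lemma~\ref{lem:vn_upper} one has $h_{R,n}=v_n\leq L_n(E,\varepsilon_2)+C(\log n)^{C_0}/n$ uniformly on $\partial A_R=\mathcal{C}_R\cup\mathcal{C}_{1/R}$, so the maximum principle \eqref{eq:max_h} gives the upper bound on all of $A_R$. For the lower bound, observe that by \eqref{def:Ln_E} and the evenness of $L_n$ in $\varepsilon$ --- equivalently the reflection symmetry $v_n(z,E)=v_n(1/\overline z,E)$, which follows from $M_n(1/\overline z,E)=\overline{M_n(z,E)}$ since $f$ is real-valued and $\alpha$ real, exactly as \eqref{eq:Pn_real} does for $D_n$ --- the circle average of $h_{R,n}=v_n$ on $\mathcal{C}_R$ and on $\mathcal{C}_{1/R}$ equals $L_n(E,\varepsilon_2)$ \emph{exactly}. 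Hence $\tilde h_{R,n}:=L_n(E,\varepsilon_2)+C(\log n)^{C_0}/n-h_{R,n}\geq 0$ is harmonic with boundary average $C(\log n)^{C_0}/n$ on each boundary circle, and the standard bound $\mathrm{d}\nu(w,A_R)/\mathrm{d}\sigma\leq C\,\mathrm{dist}(w,\partial A_R)^{-1}$ together with $\mathrm{dist}(w,\partial A_R)\gtrsim R-r$ for $w\in A_r$ gives $0\leq\tilde h_{R,n}(w)\leq C(R-r)^{-1}(\log n)^{C_0}/n$, which is \eqref{eq:harmonic_vn}. No large-deviation input (Lemma~\ref{lem:un_lower}) is needed, which is why the error is $(\log n)^{C_0}/n$ rather than $n^{-\gamma_1}$.

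For the Riesz mass \eqref{eq:Riesz_mass} I would integrate \eqref{eq:Riesz_vn} over the circle $z\in\mathcal{C}_\rho$, $1\leq\rho\leq R$. The left side is $\int_0^1 v_n(\rho e^{2\pi i\theta},E)\,\mathrm{d}\theta=L_n(E,\tfrac{\log\rho}{2\pi})$ by \eqref{def:Ln_E} and evenness, and for the potential term the computation \eqref{eq:int_1}--\eqref{eq:int_final} from the proof of Theorem~\ref{thm:zero_count} goes through verbatim with $\mu_{n,E}$ in place of $\tfrac1n\sum_k\delta_{w_k}$ and with the symmetry $w\mapsto 1/\overline w$ of $\mu_{n,E}$ replacing Fact~\ref{fact:zero}, giving $\int_{A_R}I(\log\rho,\log R,w)\,\mu_{n,E}(\mathrm{d}w)=-\pi\int_{\log\rho/2\pi}^{\varepsilon_2}\mu_{n,E}(A_{e^{2\pi s}})\,\mathrm{d}s$. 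Subtracting the identities at $\rho_j=e^{2\pi t_j}$, $0\leq t_1<t_2\leq\tfrac23\varepsilon_2$, cancels the constant part and yields
\begin{align*}
L_n(E,t_2)-L_n(E,t_1)=\pi\int_{t_1}^{t_2}\mu_{n,E}(A_{e^{2\pi s}})\,\mathrm{d}s+\int_0^1\big(h_{R,n}(\rho_2e^{2\pi i\theta})-h_{R,n}(\rho_1e^{2\pi i\theta})\big)\,\mathrm{d}\theta.
\end{align*}
Now $L(E,\varepsilon)=L(E,0)+2\pi\kappa(E,0)\varepsilon$ on $[0,\varepsilon_2]$ together with $0\leq L_n-L\leq C/n$ (Theorem~\ref{thm:Ln_L}) makes the left side $2\pi\kappa(E,0)(t_2-t_1)+O(1/n)$, and \eqref{eq:harmonic_vn} bounds the harmonic difference by $C(R-\rho_2)^{-1}(\log n)^{C_0}/n\leq C\varepsilon_2^{-1}(\log n)^{C_0}/n$, since $\rho_2\leq R^{2/3}$ forces $R-\rho_2\geq R-R^{2/3}\geq 2\pi\varepsilon_2/3$. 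Hence $\int_{t_1}^{t_2}\mu_{n,E}(A_{e^{2\pi s}})\,\mathrm{d}s=2\kappa(E,0)(t_2-t_1)+O(\varepsilon_2^{-1}(\log n)^{C_0}/n)$. Finally $s\mapsto\mu_{n,E}(A_{e^{2\pi s}})$ is non-decreasing (the annuli increase), so for $r=e^{2\pi t}$ with $1<r<R^{1/3}$: taking $(t_1,t_2)=(t,2t)$ gives $\mu_{n,E}(A_r)\leq t^{-1}\int_t^{2t}\mu_{n,E}(A_{e^{2\pi s}})\,\mathrm{d}s\leq 2\kappa(E,0)+Ct^{-1}\varepsilon_2^{-1}(\log n)^{C_0}/n$, and $(t_1,t_2)=(0,t)$ gives $\mu_{n,E}(A_r)\geq t^{-1}\int_0^t\mu_{n,E}(A_{e^{2\pi s}})\,\mathrm{d}s\geq 2\kappa(E,0)-Ct^{-1}\varepsilon_2^{-1}(\log n)^{C_0}/n$; since $\log r=2\pi t$ this is exactly \eqref{eq:Riesz_mass}.

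The main obstacle, such as it is, is conceptual rather than technical: because $v_n$ is the logarithm of a matrix norm and not of a holomorphic function, its Riesz measure $\mu_{n,E}$ need not be discrete, so one cannot interpret $\mu_{n,E}(A_r)$ as a normalized zero count (as in \eqref{eq:N_est_un}) and must instead control the monotone profile $s\mapsto\mu_{n,E}(A_{e^{2\pi s}})$ directly; the sharp $\varepsilon_2^{-1}(\log r)^{-1}$ dependence in \eqref{eq:Riesz_mass} then comes out only because the comparison radii are chosen proportionally, namely $(t,2t)$ and $(0,t)$, rather than fixed. A minor point to check is that Lemma~\ref{lem:vn_upper} is available up to $\varepsilon=\varepsilon_2<\eta$; if one only has it for $|\varepsilon|<\eta/2$, one restricts attention to $\varepsilon_2<\eta/2$ from the outset, which costs nothing for the applications.
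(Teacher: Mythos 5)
Your proposal is correct and takes essentially the same route as the paper. The only cosmetic difference is in how the Riesz mass bound is organized: the paper integrates the representation over the two circles $\mathcal{C}_r$ and $\mathcal{C}_{r^2}$, decomposes the Green's-kernel integral into annular shells \eqref{eq:RM_2}--\eqref{eq:RM_4} using Lemma~\ref{lem:int_Green} and the reflection symmetry $\mu_n(\mathrm{d}z)=\mu_n(\mathrm{d}(1/\overline z))$, and then reads off both the upper and lower bounds on $\tfrac{\log r}{2}\mu_n(A_r)$ from the resulting squeeze (applying the argument once more with $r^2\mapsto r$ for the other direction), whereas you first condense the Green's-kernel computation into the identity $\pi\int_{t_1}^{t_2}\mu_{n,E}(A_{e^{2\pi s}})\,\mathrm{d}s = L_n(t_2)-L_n(t_1)+O(\cdots)$ and then invoke monotonicity of $s\mapsto\mu_{n,E}(A_{e^{2\pi s}})$ on the pairs $(t,2t)$ and $(0,t)$. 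These are the same estimates repackaged; your observations that the lower bound on $h_{R,n}$ needs no large-deviation input because $\int_0^1 v_n(Re^{2\pi i\theta})\,\mathrm{d}\theta = L_n(E,\varepsilon_2)$ exactly, and that no zero-freeness hypothesis on $\partial A_R$ is needed, are correct and match the paper (which simply says the harmonic-part analysis is ``similar'' and omits it).
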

\begin{proof}
We omit the dependence on $E$ in the proof for simplicity.
The analysis of the harmonic part \eqref{eq:harmonic_vn} is similar to that of \eqref{eq:h_constant_un}, and we leave the details to the reader.
Integrating \eqref{eq:Riesz_vn} along $w\in \mathcal{C}_r$ and $C_{r^2}$, $1<r\leq R^{1/3}$, and subtracting one from the other yields in analogy to~\eqref{eq:N_0'} that 
\begin{align}\label{eq:RM_0}
& L_n(\frac{2\log r}{2\pi})-L_n(\frac{\log r}{2\pi})=\int_0^{1} v_n(r^2 e^{2\pi i\theta})\, \mathrm{d}\theta-\int_0^{1} v_n(r e^{2\pi i\theta})\, \mathrm{d}\theta\\
=&\int_{A_R} \left(\int_0^{1}2\pi G_R(z,r^2 e^{2\pi i\theta})\, \mathrm{d}\theta-\int_0^{1} 2\pi G_R(z,re^{i\theta})\, \mathrm{d}\theta\right)\, \mu_{n,E}(\mathrm{d}z)\\
&+\int_0^1 h_{R,n}(r^2 e^{2\pi i\theta})\, \mathrm{d}\theta-\int_0^1 h_{R,n}(re^{2\pi i\theta})\, \mathrm{d}\theta.
\end{align}
By Theorem \ref{thm:Ln_L} and \eqref{eq:L_linear_vn}, for large $n$
\begin{align}\label{eq:RM_0'}
|L_n(\frac{2\log r}{2\pi})-L_n(\frac{\log r}{2\pi})-\kappa(0) \log r|\leq \frac{C}{n}.
\end{align}
By \eqref{eq:harmonic_vn}, 
\begin{align}\label{eq:RM_0''}
\left| \int_0^1 h_{R,n}(r^2 e^{2\pi i\theta})\, \mathrm{d}\theta-\int_0^1 h_{R,n}(re^{2\pi i\theta})\, \mathrm{d}\theta\right|\leq \frac{C}{R-r^2} \frac{(\log n)^{C_0}}{n}.
\end{align}
By Lemma \ref{lem:int_Green} and the symmetry $G_R(z,w)=G_R(w,z)$, 
\begin{align}\label{eq:RM_1}
&\int_0^{1}2\pi G_R(z,re^{2\pi i\theta})\, \mathrm{d}\theta-\int_0^{1}2\pi G_R(z,e^{2\pi i\theta})\, \mathrm{d}\theta\\
=&\begin{cases}
\frac{\log r}{2\log R} \log \frac{|z|}{R}\, \text{ if } |z|\geq r^2\\
\\
\frac{\log r-2\log R}{2\log R}\log |z|+\frac{3}{2}\log r,\, \text{ if } r\leq |z|<r^2\\
\\
\frac{\log r}{2\log R} \log (|z|R), \text{ if } |z|<r
\end{cases}
\end{align}
Since $v_n(z)=v_n(1/\overline{z})$, and $\Delta v_n=\mu_n$, the measure $\mu_n$ exhibits reflection symmetry
$$\mu_n(\mathrm{d}z)=\mu_n(\mathrm{d}(1/\overline{z})).$$
In combination with \eqref{eq:RM_1} we conclude that
\begin{align}\label{eq:RM_2}
\int_{|z|\geq r^2}\frac{\log r}{2\log R} \log \frac{|z|}{R}\, \mu_n(\mathrm{d}z)+\int_{|z|\leq r^{-2}}\frac{\log r}{2\log R}\log (|z|R)\, \mu_n(\mathrm{d}z)=0,
\end{align}
as well as
\begin{align}\label{eq:RM_3}
&\int_{r\leq |z|<r^2} \left(\frac{\log r-2\log R}{2\log R}\log |z|+\frac{3}{2}\log r\right)\, \mu_n(\mathrm{d}z) \notag\\
&+\int_{r^{-2}<|z|\leq r^{-1}} \frac{\log r}{2\log R} \log (|z|R)\, \mu_n(\mathrm{d}z) \notag\\
=&\int_{r\leq |z|<r^2} \left(\frac{\log r-2\log R}{2\log R}\log |z|+\frac{3}{2}\log r+\frac{\log r}{2\log R} \log \frac{R}{|z|}\right)\, \mu_n(\mathrm{d}z) \notag\\
=&\int_{r\leq |z|<r^2} \log \frac{r^2}{|z|}\, \mu_n(\mathrm{d}z),
\end{align}
and
\begin{align}\label{eq:RM_4}
\int_{1< |z|<r} \frac{\log r}{2\log R} \log (|z|R)\, \mu_n(\mathrm{d}z)&+\int_{r^{-1}<|z|< 1}  \frac{\log r}{2\log R} \log (|z|R)\, \mu_n(\mathrm{d}z) 
+\int_{|z|=1} \frac{\log r}{2}\, \mu_n(\mathrm{d}z)\notag\\
=&\frac{\log r}{2}\cdot \mu_n(A_r).
\end{align}
Combining \eqref{eq:RM_0'}, \eqref{eq:RM_0''}, \eqref{eq:RM_2}, \eqref{eq:RM_3} and~\eqref{eq:RM_4} with~\eqref{eq:RM_0}, one obtains
\begin{align}\label{eq:RM_5}
\kappa(0) \log r-\frac{C}{R-r^2}\frac{(\log n)^{C_0}}{n}
\leq &\frac{\log r}{2} \cdot \mu_n(A_r)+\int_{r\leq |z|<r^2} \log \frac{r^2}{|z|}\, \mu_n(\mathrm{d}z)\\
\leq &\frac{\log r}{2} \cdot \mu_n(A_r)+\log r\cdot \mu_n(r\leq |z|<r^2)\\
\leq &\frac{\log r}{2} \cdot \mu_n(A_{r^2}),
\end{align}
as well as
\begin{align}\label{eq:RM_6}
\kappa(0) \log r+\frac{C}{R-r^2}\frac{(\log n)^{C_0}}{n}
\geq &\frac{\log r}{2} \cdot \mu_n(A_r)+\int_{r\leq |z|<r^2} \log \frac{r^2}{|z|}\, \mu_n(\mathrm{d}z)\notag\\
\geq &\frac{\log r}{2} \mu_n(A_r).
\end{align}
In view of \eqref{eq:RM_5} (replacing $r^2$ with $r$) and \eqref{eq:RM_6} one has
\begin{align*}
|\mu_n(A_r)-2\kappa(0)|\leq \frac{C \varepsilon_2^{-1}}{\log r}\cdot  \frac{(\log n)^{C_0}}{n}.
\end{align*}
This proves the claimed result.
\end{proof}

\section{Anderson localization}\label{sec:AL}
In this section, we prove Theorem \ref{thm:loc}. 
Let $E\in \mathcal{S}_2^+$. 
Then $L(E,0)=\tau>0$ and there exists $\varepsilon_2\in (0,\eta)$ such that
\begin{align}\label{eq:L_linear_AL}
L(E,\varepsilon)=L(E,0)+2\pi \varepsilon,
\end{align}
for $0\leq \varepsilon\leq \varepsilon_2$ and $D_n(z,E)$ is zero-free on $\partial A_{e^{2\pi\varepsilon_2}}$.
Let $R=e^{2\pi \varepsilon_2}$.
Note that since the potential $f$ is assumed to be even, $f(\theta)=f(-\theta)$, and thus $D_n(\theta-\frac{n-1}{2}\alpha,E)=D_n(-\theta-\frac{n-1}{2}\alpha,E)$ as well as 
\begin{align}\label{eq:Dn_even}
D_n(z \cdot e^{-\pi i (n-1)\alpha},E)=D_n(e^{-\pi i (n-1)\alpha} /z,E), \text{ for } z\in A_R.
\end{align}
We have by \eqref{eq:Dn_even} the following
\begin{fact}\label{fact:even}
If $w\in A_R$ is a zero of $D_n(z,E)$, then $e^{-2\pi i(n-1)\alpha} /w$ is also a zero.
\end{fact}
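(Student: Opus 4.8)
\textbf{Proof proposal for Fact~\ref{fact:even}.} The plan is to read the claim off directly from the functional equation \eqref{eq:Dn_even}. Given a zero $w\in A_R$ of $D_n(\cdot,E)$, I would set $z:=w\,e^{\pi i(n-1)\alpha}$. Since $|e^{\pi i(n-1)\alpha}|=1$ we have $|z|=|w|\in(1/R,R)$, so $z\in A_R$ and \eqref{eq:Dn_even} is applicable. Its left-hand side is $D_n(z\,e^{-\pi i(n-1)\alpha},E)=D_n(w,E)=0$, hence the right-hand side vanishes: $D_n(e^{-\pi i(n-1)\alpha}/z,E)=0$. But $e^{-\pi i(n-1)\alpha}/z=e^{-2\pi i(n-1)\alpha}/w$, and this point again lies in $A_R$ since its modulus is $1/|w|\in(1/R,R)$. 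Therefore $e^{-2\pi i(n-1)\alpha}/w$ is a zero of $D_n(\cdot,E)$, which is exactly the assertion. Equivalently, the computation shows the identity $D_n(w,E)=D_n(e^{-2\pi i(n-1)\alpha}/w,E)$ for all $w\in A_R$, of which the zero statement is an immediate consequence.

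For completeness I would also indicate why \eqref{eq:Dn_even} holds, so that the fact can be traced back to the evenness of $f$: the matrix $(H_{\alpha,\theta}-E)|_{[0,n-1]}$ is tridiagonal with diagonal $(f(\theta+j\alpha)-E)_{j=0}^{n-1}$ and constant off-diagonal entries equal to $1$; conjugating by the flip permutation $j\mapsto n-1-j$ preserves the off-diagonal structure and reverses the diagonal, so that $D_n(\theta,E)=D_n(-\theta-(n-1)\alpha,E)$ once one uses $f(-s)=f(s)$. Substituting $\theta\mapsto\theta-\tfrac{n-1}{2}\alpha$ rewrites this as the statement that $\theta\mapsto D_n(\theta-\tfrac{n-1}{2}\alpha,E)$ is even; passing to the complex variable $z=e^{2\pi i\theta}$, and noting $e^{2\pi i(\theta-\frac{n-1}{2}\alpha)}=z\,e^{-\pi i(n-1)\alpha}$ while $e^{2\pi i(-\theta-\frac{n-1}{2}\alpha)}=e^{-\pi i(n-1)\alpha}/z$, yields precisely \eqref{eq:Dn_even}.

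There is essentially no obstacle here: the argument is a one-line substitution into \eqref{eq:Dn_even} together with a trivial modulus check to stay inside $A_R$. The only point requiring minor care is the bookkeeping of the half-integer shift $\tfrac{n-1}{2}\alpha$ and the resulting rotation factor $e^{-\pi i(n-1)\alpha}$ (and hence $e^{-2\pi i(n-1)\alpha}$ in the final reflection), but this is purely notational.
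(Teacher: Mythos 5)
Your proof is correct and takes the same approach as the paper: the paper simply notes that the fact follows from \eqref{eq:Dn_even}, and your substitution $z = w\,e^{\pi i(n-1)\alpha}$ together with the modulus check is exactly the spelled-out version of that deduction. The supplementary derivation of \eqref{eq:Dn_even} via the flip-conjugation of the tridiagonal matrix and the half-shift bookkeeping is also accurate and consistent with the paper's brief remark that evenness of $f$ gives $D_n(\theta-\tfrac{n-1}{2}\alpha,E)=D_n(-\theta-\tfrac{n-1}{2}\alpha,E)$.
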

In the following, we shall fix an energy $E$ and omit the dependence on $E$ for simplicity.
We shall also write $L_n(E,0)=L_n$ and $L(E,0)=L$.

\subsection{Geometric structure of the large deviation set}
Let $\varepsilon$ be a small constant such that $0<\varepsilon<\min(L(E,0)/20, \varepsilon_2)$, and we set $R':=e^{2\pi \varepsilon}$ and $N':=N_n(\varepsilon)$ for simplicity. 
Let $w_1,...,w_{N'}$ be the zeros of $D_n(z)$ in $A_{R'}$.
By the Riesz representation theorem, see Theorem \ref{thm:Riesz_un},  applied to $u_n$ with $R'$ instead of $R$,  
\begin{align}
u_n(z)=G_{R',n}(z)+h_{R',n}(z).
\end{align}
We need to control the complexity of the large deviation set
\begin{align}\label{def:Bn}
\mathcal{B}_n:=\{\theta\in \T:\, u_n(e^{2\pi i\theta})<L_n-n^{-\gamma_2}\},
\end{align}
where $\gamma_2>0$ is the constant in Lemma \ref{lem:un_LDT}. We are dropping $E$ from some of the notation for simplicity.

\begin{lemma}\label{lem:interval}
For $n$ large enough, there exists an integer $N''\leq \frac{N'}{2}+1$, and a collection $\mathcal{F}_n=\{U_j\}_{j=1}^{N''}$ intervals in $\T$ such that 
\begin{align*}
\mathcal{B}_n \subset \bigcup_{j=1}^{N''} (U_j \cup (-(n-1)\alpha -U_j)),
\end{align*}
where the notation $x-U:=\{x-\theta:\, \theta\in U\}$ for $x\in \T$ and $U\subseteq \T$.
Furthermore, for each $1\leq j\leq N''$,
\begin{align}\label{eq:sum_Uj}
|U_j|\leq e^{-n^{\gamma_2/2}},
\end{align}
where $\gamma_2>0$ is the constant in Lemma \ref{lem:un_LDT}.
\end{lemma}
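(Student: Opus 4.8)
The plan is to read the geometry of $\mathcal{B}_n$ off the Riesz representation of $u_n$ on the annulus $A_{R'}$, $R'=e^{2\pi\varepsilon}$. By Theorem~\ref{thm:Riesz_un} applied with $R'$ in place of $R$, $u_n(z)=2\pi G_{R',n}(z)+h_{R',n}(z)$, where $G_{R',n}=\frac1n\sum_{k=1}^{N'}G_{R'}(\cdot,w_k)\le 0$ vanishes on $\partial A_{R'}$ and is harmonic off the zeros $w_1,\dots,w_{N'}$ of $D_n(\cdot,E)$ in $A_{R'}$, with $N':=N_n(E,\varepsilon)$. Since $L(E,0)=\tau>0$ and \eqref{eq:L_linear_AL} holds on $[0,\varepsilon_2]\supset[0,\varepsilon]$, Theorem~\ref{thm:Ln_L} gives $L_n(E,\varepsilon)=L_n+2\pi\varepsilon+O(1/n)$; evaluating the $R'$-version of \eqref{eq:h_constant_un*} on the sub-annulus $A_{e^{\pi\varepsilon}}\supset\mathcal{C}_1$ yields $h_{R',n}(z)\ge L_n+2\pi\varepsilon-C(\varepsilon)\,n^{-\gamma_1}$ there, while by Lemma~\ref{lem:vn_upper} and \eqref{eq:un<vn} one has $u_n(z)\le L_n+C(\log n)^{C_0}/n$ on $\mathcal{C}_1$. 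Hence, assuming (as we may, after shrinking $\gamma_2$) that $\gamma_2<\gamma_1$, on $\mathcal{B}_n$ one gets for $n$ large
\[
2\pi G_{R',n}(z)=u_n(z)-h_{R',n}(z)<-2\pi\varepsilon-\tfrac12 n^{-\gamma_2},
\]
whereas on $\mathcal{C}_1\setminus\mathcal{B}_n$ the same two bounds give $2\pi G_{R',n}(z)>-2\pi\varepsilon-2 n^{-\gamma_2}$. Thus $\mathcal{B}_n$ is, up to an $O(n^{-\gamma_2})$ choice of level, the trace on $\mathcal{C}_1$ of a sublevel set of $G_{R',n}$ sitting strictly below its typical value $-\varepsilon$.

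The length bound \eqref{eq:sum_Uj} is then essentially free: Lemma~\ref{lem:un_LDT} gives $|\mathcal{B}_n|<e^{-n^{\gamma_2}}$, so every connected component of $\mathcal{B}_n$ has length $<e^{-n^{\gamma_2}}\le e^{-n^{\gamma_2/2}}$, and the $U_j$ can be taken to be these components (slightly thickened). The whole content of the lemma therefore lies in the count: $\mathcal{B}_n$ has at most $N'+O(1)$ connected components, which group into at most $N'/2+1$ reflected pairs.

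For the count I would combine a Cartan-type ``each valley carries a zero'' argument with the symmetries of the problem. The Cartan ingredient is that a connected component of a sublevel set $\{G_{R',n}<c\}$ with $c<0$ must contain at least one $w_k$ — otherwise $G_{R',n}$ would be harmonic there, equal to $c$ on its boundary and $<c$ inside, contradicting the maximum principle — so such a set has at most $N'$ components in $A_{R'}$, and $N'=2n+O(n^{1-\gamma_1})$ by the $\kappa=1$ case \eqref{eq:N_est_un_kappa=1} of Theorem~\ref{thm:acc=zeros}. To descend from planar components to arcs of $\mathcal{B}_n\subset\mathcal{C}_1$ one uses that $G_{R',n}$ is invariant under $z\mapsto 1/\bar z$ by \eqref{eq:Pn_real}, so only the reflection-symmetric planar components meet $\mathcal{C}_1$; and that the two evaluations of $\int_0^1(-G_{R',n}(e^{2\pi i\theta}))\,d\theta$ — namely $\tfrac1{2\pi}\int_0^1(h_{R',n}-u_n)(e^{2\pi i\theta})\,d\theta=\varepsilon+o(1)$ by the bounds above together with Lemma~\ref{lem:un_lower}, and $\tfrac{N'}{n}\cdot\tfrac\varepsilon2-\tfrac1{4\pi n}\sum_k|\log|w_k||$ by the Green's-integral formula of Lemma~\ref{lem:int_Green} — force, via $N'/n=2+o(1)$, the concentration $\tfrac1n\sum_k|\log|w_k||=o(1)$. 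Consequently all but $o(n)$ of the zeros lie within any fixed distance of $\mathcal{C}_1$, the ``far'' ones contribute only $o(1)$ to $-G_{R',n}$ on $\mathcal{C}_1$, and near $\mathcal{C}_1$ the sublevel set degenerates to a union of $N'+O(1)$ small neighbourhoods of near-circle zeros, one per arc of $\mathcal{B}_n$. Finally, evenness of $f$, through \eqref{eq:Dn_even} and Fact~\ref{fact:even}, makes the zero set — hence these arcs — invariant as a set under $z\mapsto e^{-2\pi i(n-1)\alpha}/z$, which acts on $\mathcal{C}_1$ as $\theta\mapsto-(n-1)\alpha-\theta$; pairing the arcs under this reflection and keeping one arc $U_j$ from each pair gives $N''\le N'/2+1$ and $\mathcal{B}_n\subseteq\bigcup_{j=1}^{N''}(U_j\cup(-(n-1)\alpha-U_j))$.

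The main obstacle is precisely the passage from ``$\le N'$ components of the planar open set $\{G_{R',n}<c\}$'' to ``$\le N'+O(1)$ arcs of its trace on $\mathcal{C}_1$'' \emph{at the correct level}: a priori one planar component could wind and meet $\mathcal{C}_1$ many times, and the level must be chosen within $O(n^{-\gamma_2})$ of the typical value of $G_{R',n}$ so as to capture $\mathcal{B}_n$ rather than a macroscopic portion of the circle. Both difficulties are handled by the reflection symmetry of $G_{R',n}$ together with the concentration $\tfrac1n\sum_k|\log|w_k||=o(1)$ of the Riesz mass onto $\mathcal{C}_1$. An equivalent, perhaps cleaner, route is to count the components of $\mathcal{B}_n$ directly as sign changes on $\mathcal{C}_1$ of $D_n(e^{2\pi i\theta})^2-e^{2nL_n-2n^{1-\gamma_2}}$ — the boundary values of a function holomorphic on $A_{R'}$ — and to bound its zeros near $\mathcal{C}_1$ by $N'+O(1)$ through a Jensen/argument-principle estimate fed by Lemmas~\ref{lem:vn_upper}, \ref{lem:un_lower} and \ref{lem:un_LDT}; evenness then supplies the pairing as above.
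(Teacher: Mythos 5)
Your proposal is a genuinely different route from the paper's. The paper never touches the Riesz representation of $u_n$ in this lemma: it covers $\mathcal{C}_1$ by $\le 20n$ disks of radius $\pi/(6n)$ centered at non-deviant phases, rescales $D_n$ on each disk, and applies Cartan's estimate (Lemma~\ref{lem:Cartan}). The point of its Lemma~\ref{lem:Car_zero} is that whenever $u_n(e^{2\pi i\theta_0})$ is small, the corresponding rescaled point must lie in one of the Cartan disks, and a maximum-principle argument forces $D_n$ to have a zero in that Cartan disk; hence each $\theta_0\in\mathcal{B}_n$ is within $\frac{4\pi}{n}e^{-n^{\gamma_2/2}}$ of some zero $w_\ell$. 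This produces a cover of $\mathcal{B}_n$ by at most $N'$ arcs, one centered at each zero, with the length bound built in, without ever having to count components of $\mathcal{B}_n$ or of a sublevel set. The evenness pairing at the end is the same as yours.

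Your approach instead tries to realize $\mathcal{B}_n$ (up to an $O(n^{-\gamma_2})$ level shift) as the trace on $\mathcal{C}_1$ of a sublevel set $\{G_{R',n}<c\}$ and to count its \emph{planar} components via the maximum principle. That step is sound, but it is not the quantity you need: what must be bounded is the number of \emph{arcs} $\mathcal{B}_n\cap\mathcal{C}_1$, and you explicitly flag that a single planar component could meet $\mathcal{C}_1$ in many arcs. The remedies you sketch do not close this gap. The Riesz-mass identity $\frac1n\sum_k\bigl|\log|w_k|\bigr|=o(1)$ does show most zeros concentrate near $\mathcal{C}_1$, but it gives no control on the oscillation of $G_{R',n}$ along $\mathcal{C}_1$: with $\sim 2n$ logarithmic spikes of depth $\sim\frac1n\log(\cdot)$ sitting at unknown distances from the circle, and the harmonic part $h_{R',n}$ known only to precision $n^{-\gamma_1}$ (a priori incomparable to the LDT scale $n^{-\gamma_2}$), it is not justified that each near-circle zero creates exactly one dip below the threshold, which is what ``one per arc of $\mathcal{B}_n$'' asserts. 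The alternative sign-change count for $D_n(z)D_n(1/z)-C$ would require boundary estimates on both circles $\partial A_{R'}$ to feed the argument principle, and these are not supplied. In short, the component-to-arc step is the crux, it is left at the level of a sketch, and it is precisely the difficulty that the paper's Cartan argument is designed to avoid. If you want a self-contained proof, I would recommend following the paper's strategy: apply Cartan to the rescaled $\varphi_j(z)=D_n(\frac{2\pi}{n}(z-z_{0,j})+z_{0,j})$ on unit disks covering $\mathcal{C}_1$, with the upper bound from Lemma~\ref{lem:vn_upper} and the lower bound $\theta_j\notin\mathcal{B}_n$, and then run the maximum-principle step inside each exceptional Cartan disk to exhibit a zero there. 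That yields a zero within distance $O(n^{-1}e^{-n^{\gamma_2/2}})$ of every point of $\mathcal{B}_n$, after which Theorem~\ref{thm:acc=zeros} ($\kappa=1$) and Fact~\ref{fact:even} finish the count and pairing exactly as you describe.
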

\begin{proof}
This essentially follows from combining Theorem \ref{thm:acc=zeros} and \cite[Lemma 2.17]{GS2}.
We first recall  Cartan's estimate~\cite[Theorem 4, Page 79]{Levin} as it appears in~\cite[Lemma 2.15]{GS2}

\begin{definition}[Cartan set]
For an arbitrary subset $\mathcal{P}\subset \mathcal{D}(z_0,1)\subset \C$, where $\mathcal{D}(z_0,1)$ is the disk, we say that $\mathcal{P}\in \mathrm{Car}(H,K)$ if $\mathcal{P}\subset \cup_{k=1}^{k_0} \mathcal{D}(z_k, r_k)$ with $k_0\leq K$,  and
\begin{align}\label{eq:Hr}
\sum_{j} r_j<e^{-H}.
\end{align}
\end{definition}

By Wiener's covering lemma  we can assume that $\mathcal{D}(z_k,r_k)$ are pairwise disjoint,
 at the expense of a factor of~$3$ in~\eqref{eq:Hr}. 

\begin{lemma}\label{lem:Cartan}
Let $\varphi$ be an analytic function defined in a disk $\mathcal{D}:=\mathcal{D}(z_0,1)$. Let 
$M\geq \sup_{z\in \mathcal{D}} \log |\varphi(z)|$, $m\leq \log |\varphi(z_0)|$. Given $H\gg 1$, there exists a set $\mathcal{P}\subset \mathcal{D}$, $\mathcal{P}\in \mathrm{Car}(H,K)$, $K=CH(M-m)$ for some absolute constant $C>0$, such that 
\begin{align}
\log |\varphi(z)|>M-CH(M-m),
\end{align}
for any $z\in \mathcal{D}(z_0, 1/6)\setminus \mathcal{P}$.
\end{lemma}
\begin{proof}
See \cite[Lemma 2.15]{GS2}.
\end{proof}

By Lemma \ref{lem:un_LDT}, for $n$ large enough, we have
\begin{align*}
|\mathcal{B}_n|\leq e^{-n^{\gamma_2}}.
\end{align*}
Hence we can find $\{\theta_j\}_{j=1}^{j_n}$ such that for any $j$, $\theta_j\notin \mathcal{B}_n$, and also $j_n\leq 20n$, $\T=\bigcup_{j} (\theta_j-\frac{1}{12n}, \theta_j+\frac{1}{12n})$, and thus
\begin{align}\label{eq:Car_C1_cover}
\mathcal{C}_1\subset \bigcup_{j} \mathcal{D}(e^{2\pi i \theta_j}, \frac{\pi}{6n}).
\end{align}
Let $z_{0,j}:=e^{2\pi i\theta_j}$. 
Consider $\varphi_j(z):=D_n(\frac{2\pi}{n}(z-z_{0,j})+z_{0,j},E)$ as an analytic function on $\mathcal{D}_j:=\mathcal{D}(z_{0,j},1)$.
Since $\theta_j\notin \mathcal{B}_n$, we have
\begin{align}\label{eq:Car_low}
\log |\varphi_j(z_{0,j})|=n\cdot u_n(e^{2\pi i \theta_j})\geq nL_n-n^{1-\gamma_2}.
\end{align}
Also since for $z\in \mathcal{D}_j$, we have 
$\left| \frac{2\pi}{n}(z-z_{0,j})+z_{0,j}\right| \in [1-\frac{2\pi}{n}, 1+\frac{2\pi}{n}]$, hence
\begin{align}
\frac{2\pi}{n}(z-z_{0,j})+z_{0,j}\in A_{e^{4\pi/n}}.
\end{align}
Hence by \eqref{eq:un<vn}, and Lemma \ref{lem:vn_upper}, we have for $z\in \mathcal{D}_j$,
\begin{align}\label{eq:Car_1}
\log |\varphi_j(z)|\leq \sup_{w\in A_{e^{4\pi/n}}} n\cdot v_n(w)\leq \sup_{|\varepsilon|\leq 2/n} n L_n(\varepsilon)+C(\log n)^{C_0}.
\end{align}
By Theorem \ref{thm:Ln_L} and \eqref{eq:L_linear_AL}, we have
\begin{align}\label{eq:Car_2}
\sup_{|\varepsilon|\leq 2/n} L_n(\varepsilon)\leq L_n+\frac{C}{n}.
\end{align}
Hence for $z\in \mathcal{D}_j$, by \eqref{eq:Car_1} and \eqref{eq:Car_2}, we have
\begin{align}\label{eq:Car_upp}
\log |\varphi_j(z)|\leq nL_n+C(\log n)^{C_0}.
\end{align}
By Lemma \ref{lem:Cartan} with $H=n^{\gamma_2/2}$ and the estimates in \eqref{eq:Car_low} and \eqref{eq:Car_upp}, there exists $\mathcal{P}_j\subset \mathcal{D}_j$, $\mathcal{P}_j\in \mathrm{Car}(H,K)$, $K=CH(M-m)$ such that
\begin{align}\label{eq:Car_3}
\log |\varphi_j(z)|>nL_n+C(\log n)^{C_0}-CH(C(\log n)^{C_0}+n^{1-\gamma_2})>nL_n-C_1 n^{1-\frac{1}{2}\gamma_2},
\end{align}
for some constant $C_1$ and any $z\in \mathcal{D}(z_{0,j}, 1/6)\setminus \mathcal{P}_j$.
Next we show
\begin{lemma}\label{lem:Car_zero}
For any $\theta_0$, such that 
\begin{align}\label{eq:Car_t0}
u_n(e^{2\pi i\theta_0})<L_n-C_1 n^{-\frac{1}{2}\gamma_2},
\end{align}
there exists a zero $w_{\ell}$ of $D_n(z)$, such that 
\begin{align}\label{eq:Car_goal}
|e^{2\pi i\theta_0}-w_{\ell}|\leq \frac{4\pi}{n}e^{-n^{\gamma_2/2}}.
\end{align}
\end{lemma}
\begin{proof}
By \eqref{eq:Car_C1_cover}, there exists $\theta_{j_*}$, $1\leq j_*\leq j_n$, such that 
$$|e^{2\pi i\theta_0}-e^{2\pi i\theta_{j_*}}|\leq \frac{\pi}{6n},$$
which implies
\begin{align}\label{eq:Car_t0_in}
z_{0,j_*}+\frac{n}{2\pi}(e^{2\pi i\theta_0}-z_{0,j_*})\in \mathcal{D}_{j_*}(z_{0,j_*}, \frac{1}{12}).
\end{align}
Then by \eqref{eq:Car_t0} and that 
$$\log |\varphi_{j_*}(z_{0,j_*}+\frac{n}{2\pi}(e^{2\pi i\theta_0}-z_{0,j_*}))|=n\cdot u_n(e^{2\pi i \theta_0}),$$ 
whence
\begin{align}\label{eq:Car_t0'}
\log |\varphi_{j_*}(z_{0,j_*}+\frac{n}{2\pi}(e^{2\pi i\theta_0}-z_{0,j_*}))|<nL_n-C_1 n^{1-\frac{1}{2}\gamma_2}.
\end{align}
By \eqref{eq:Car_3}, it is necessary that 
$$z_{0,j_*}+\frac{n}{2\pi}(e^{2\pi i\theta_0}-z_{0,j_*})\in \mathcal{P}_{j_*}\subset \bigcup_{k=1}^{k_*} \mathcal{D}(z_{k,j_*}, r_{k,j_*}).$$
Let $k_0$ be such that
\begin{align}\label{eq:Car_5}
z_{0,j_*}+\frac{n}{2\pi}(e^{2\pi i\theta_0}-z_{0,j_*})\in \mathcal{D}_{j_*}(z_{0,j_*}, \frac{1}{12})\cap \mathcal{D}(z_{k_0,j_*}, r_{k_0,j_*}).
\end{align}
Then since $r_{k_0,j_*}\ll 1/6$,
\begin{align}
\mathcal{D}(z_{k_0,j_*}, r_{k_0,j_*})\subset \mathcal{D}_{j_*}(z_{0,j_*}, \frac{1}{6}).
\end{align}
Hence by \eqref{eq:Car_3}, we have
\begin{align}\label{eq:Car_4}
\log |\varphi_j(z)|\geq nL_n-C_1 n^{1-\frac{1}{2}\gamma_2}, \text{ for } z\in \partial \mathcal{D}(z_{k_0,j_*}, r_{k_0,j_*}).
\end{align}
Assume $\phi_j(z)\neq 0$ for $z\in \mathcal{D}_{j_*}(z_{0,j_*}, \frac{1}{6})$. Then,  by the maximal principle for harmonic functions, it follows from~\eqref{eq:Car_4} that
\begin{align}
\log |\varphi_j(z)|\geq nL_n-C_1 n^{1-\frac{1}{2}\gamma_2}, \text{ for } z\in \mathcal{D}(z_{k_0,j_*}, r_{k_0,j_*})
\end{align}
But this leads to a contradiction with \eqref{eq:Car_t0_in} and \eqref{eq:Car_t0'}.
Hence 
$$\varphi_j(\tilde{z})=D_n\big(\frac{2\pi}{n}(\tilde{z}-z_{0,j_*})+z_{0,j_*},E\big)=0$$ for some 
$\tilde{z}\in \mathcal{D}(z_{k_0,j_*}, r_{k_0,j_*})$. 
This implies
\begin{align}\label{eq:Car_6}
\mathcal{D}(z_{k_0,j_*}, r_{k_0,j_*})\ni \tilde{z}=z_{0,j*}+\frac{n}{2\pi}(w_{\ell}-z_{0,j_*}), \text{ for some } \ell.
\end{align}
Combining \eqref{eq:Car_5} with \eqref{eq:Car_6}, we have
\begin{align*}
\frac{n}{2\pi}\left| w_{\ell}-e^{2\pi i\theta_0}\right|\leq 2r_{k_0, j_*}\leq 2e^{-H}=2e^{-n^{\gamma_2/2}}.
\end{align*}
This proves \eqref{eq:Car_goal}. 
\end{proof}
By Lemma \ref{lem:Car_zero} and Theorem \ref{thm:acc=zeros}, we have
\begin{align}
\mathcal{B}_n\subset \bigcup_{\ell=1}^{N'} \{\theta\in \T:\, |e^{2\pi i \theta}-w_{\ell}|\leq e^{-n^{\gamma_2/2}}\}.
\end{align}
By Fact \ref{fact:even}, 
\begin{align}
\bigcup_{\ell=1}^{N'} \{\theta\in \T:\, |e^{2\pi i \theta}-w_{\ell}|\leq e^{-n^{\gamma_2/2}}\}=\bigcup_{j=1}^{N''}(U_j\cup (-(n-1)\alpha-U_j)),
\end{align}
for some $N''\leq \frac{N'}{2}+1$, as claimed by Lemma \ref{lem:interval}.
\end{proof}

\subsection{Proof of Anderson localization}
To prove Anderson localization, by Shnol's theorem \cites{Be,Simon,S}, it suffices to show that any generalized eigenfunction $\phi$ with the property that 
\begin{align}\label{eq:Schnol}
\max(|\phi_0|, |\phi_{-1}|)=1, \text{ and } |\phi_y|\leq C|y|,
\end{align}
decays exponentially. 
In the following, let $\phi$ be a solution  of $H_{\alpha,\theta}\phi=E\phi$, satisfying \eqref{eq:Schnol}.
Combining Theorem \ref{thm:Riesz_un} (with $\kappa=1$) with Lemma~\ref{lem:interval} yields 
\begin{align*} 
\mathcal{B}_n\subset \bigcup_{j=1}^{N''} (U_j\cup (-(n-1)\alpha-U_j)),
\end{align*}
with $N''\leq n+C\varepsilon_2^{-1} n^{1-\gamma_1}$. 
Furthermore, for each $j$, one has the measure estimate
\begin{align}\label{eq:length_Ij}
|U_j|\leq e^{-n^{\gamma_2}}.
\end{align}
For $x\in \R$, let $[x]$ be the integer part of $x$.
\begin{lemma}\label{lem:I1_I2}
For any $n$ large enough, and any $y\in \Z$ such that\footnote{The proof for negative $y$ is analogous by symmetry.} $n<y<10n$, 
let 
\begin{align}
I_1:=&[-[\frac{7}{8}n], -[\frac{1}{8}n]]\\
I_2:=&[y-[\frac{7}{8}n], y-[\frac{1}{8}n]]
\end{align}
There exists $\ell\in I_1\cup I_2$ such that 
\begin{align*}
\theta+\ell \alpha\notin \bigcup_{j=1}^{N''} (U_j \cup ((n-1)\alpha-U_j)).
\end{align*}
\end{lemma}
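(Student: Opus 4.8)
The plan is to exploit the sparse structure of $\mathcal{B}_n$ together with the arithmetic conditions on $\alpha,\theta$. First I would count: the set $\bigcup_j (U_j \cup ((n-1)\alpha - U_j))$ consists of at most $2N'' \le N' + 2 \le 2n + C\varepsilon_2^{-1}n^{1-\gamma_1}$ intervals, each of length at most $e^{-n^{\gamma_2}}$. The indices $\ell$ under consideration, $\ell\in I_1\cup I_2$, number $2(\lfloor \tfrac78 n\rfloor - \lfloor\tfrac18 n\rfloor) + 2 \approx \tfrac32 n$, so there are roughly $\tfrac32 n$ candidate points $\theta+\ell\alpha$ on the circle. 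Since we are trying to show at least one of them avoids a union of $\approx 2n$ tiny intervals, a naive pigeonhole does not immediately work — this is the main obstacle and the place where evenness of $f$ must enter.

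The key idea (signalled in the introduction) is that evenness pairs the intervals: by Fact~\ref{fact:even}, the $N'$ zeros $w_\ell$ of $D_n$ come in pairs $w \leftrightarrow e^{-2\pi i(n-1)\alpha}/w$, so the $N'$ arcs around the zeros organize into $N'' \le N'/2 + 1$ pairs of the form $U_j$ and $-(n-1)\alpha - U_j$. The plan is to argue that the arithmetic conditions $\alpha\in\mathrm{DC}_{c,a}$ and $\theta\notin\Theta_{c',b}$ prevent two distinct indices $\ell,\ell'$ in the index set from landing in the \emph{same pair}. Concretely, if $\theta+\ell\alpha \in U_j$ and $\theta+\ell'\alpha \in -(n-1)\alpha - U_j$ for the same $j$, then adding gives $2\theta + (\ell+\ell'+n-1)\alpha \in U_j + (-U_j)$, a set of diameter $\le 2e^{-n^{\gamma_2}}$; with $m:=\ell+\ell'+n-1$ of size $O(n)$, the condition $\theta\notin\Theta_{c',b}$ forces $\|2\theta+m\alpha\|_\T \ge c'|m|^{-b} \gg e^{-n^{\gamma_2}}$ for $n$ large, a contradiction. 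Similarly, if both $\theta+\ell\alpha$ and $\theta+\ell'\alpha$ lie in the same $U_j$ (or in the same translate), then $\|(\ell-\ell')\alpha\|_\T \le 2e^{-n^{\gamma_2}}$, contradicting $\mathrm{DC}_{c,a}$ since $0 < |\ell-\ell'| < 10n$. Hence \emph{at most} $2N'' \le N' + 2 \le 2n + C\varepsilon_2^{-1}n^{1-\gamma_1}$ of the candidate indices can be ``bad'' (caught in $\bigcup_j(U_j\cup((n-1)\alpha - U_j))$) — wait, one must be careful: the pairing argument shows each pair $\{U_j, -(n-1)\alpha - U_j\}$ absorbs at most \emph{one} index from $I_1\cup I_2$, so the number of bad indices is at most $N'' \le n/2 + o(n)$.

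With that bound in hand, the conclusion is immediate pigeonhole: $|I_1\cup I_2| = 2(\lfloor\tfrac78 n\rfloor - \lfloor\tfrac18 n\rfloor) + 2 \ge \tfrac32 n - 2$ candidate indices, but at most $N'' \le \tfrac12 n + C\varepsilon_2^{-1}n^{1-\gamma_1} + 1$ of them are bad, so for $n$ large there is a surplus, and one can pick $\ell \in I_1\cup I_2$ with $\theta+\ell\alpha \notin \bigcup_{j=1}^{N''}(U_j\cup((n-1)\alpha - U_j))$, as required. The statement of the lemma has $(n-1)\alpha - U_j$ rather than $-(n-1)\alpha - U_j$; I would just track signs carefully, noting that by Fact~\ref{fact:even} the reflection center is $-(n-1)\alpha$ and by $\theta\to-\theta$ symmetry (evenness) the pairing can equivalently be written with a $+(n-1)\alpha$ shift after relabeling, so the form in the lemma is the one to use. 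The main obstacle, then, is really the pairing step — establishing that the DC and $\Theta^c$ conditions guarantee injectivity of $\ell\mapsto(\text{the pair containing }\theta+\ell\alpha)$ on the relevant index set — and that is exactly where the arithmetic hypotheses and the evenness of the potential are both essential.
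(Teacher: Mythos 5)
Your proposal follows the paper's argument exactly: use the evenness pairing from Fact~\ref{fact:even}, show via the Diophantine condition on $\alpha$ (two points in the same $U_j$ or the same translate) and the condition $\theta\in(\Theta_{c',b})^c$ (one point in each member of a pair) that each pair $U_j\cup(-(n-1)\alpha-U_j)$ absorbs at most one point of $\{\theta+\ell\alpha\}_{\ell\in I_1\cup I_2}$, and conclude by comparing $\#(I_1\cup I_2)\approx\tfrac32 n$ against $N''$. One arithmetic slip: Theorem~\ref{thm:acc=zeros} with $\kappa=1$ gives $N'\approx 2n$, hence $N''\leq N'/2+1\approx n+o(n)$, not $n/2+o(n)$; the pigeonhole still closes because $\tfrac32 n - 2 > n + C\varepsilon_2^{-1}n^{1-\gamma_1}+1$ for $n$ large, which is the inequality the paper actually uses.
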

\begin{proof}
Note that the cardinality
\begin{align}
\# I_1+\# I_2\geq \frac{3}{2} n-2\geq N'',
\end{align}
for $n$ large enough.
It suffices to prove each pair $U_j\cup (-(n-1)\alpha-U_j)$ consists of at most one point in  $\{\theta+\ell\alpha\}_{\ell\in I_1\cup I_2}$. 
We argue by contradiction, suppose there exist $\ell_1, \ell_2$ such that
\begin{align*}
\theta+\ell_1\alpha\in U_j, \text{ and } \theta+\ell_2\alpha\in U_j.
\end{align*}
Then, since $\alpha\in \mathrm{DC}_{c,a}$ and that $|\ell_1-\ell_2|<11n$, 
\begin{align*}
|U_j|\geq \|\theta+\ell_1\alpha-(\theta+\ell_2\alpha)\|_{\T}=\|(\ell_1-\ell_2)\alpha\|_{\T}\geq \frac{c''}{n(\log n)^a},
\end{align*}
for some constant $c''$ depending on $c$.
But this contradicts with \eqref{eq:length_Ij}.
The case when 
\begin{align*}
\theta+\ell_1\alpha\in (-(n-1)\alpha-U_j), \text{ and } \theta+\ell_2\alpha\in (-(n-1)\alpha-U_j). 
\end{align*}
is similar.
Suppose there exist $\ell_1, \ell_2$ such that
\begin{align*}
\theta+\ell_1\alpha\in U_j, \text{ and } \theta+\ell_2\alpha\in (-(n-1)\alpha-U_j). 
\end{align*}
Since $\theta\in (\Theta_{c',b})^c$, for $k$ large enough, one has 
\begin{align*}
\|2\theta+k\alpha\|_{\T}\geq \frac{c'}{|k|^b}.
\end{align*}
Using that $n/4\leq \ell_1+\ell_2+n\leq 11n$, we infer that 
\begin{align*}
|U_j|\geq \|\theta+\ell_1\alpha-(-\theta-\ell_2\alpha-(n-1)\alpha)\|_{\T}=\|2\theta+(\ell_1+\ell_2+n-1)\alpha\|_{\T}\geq \frac{\tilde{c}}{n^{b}},
\end{align*}
for some constant $\tilde{c}$ depending on $c'$. This contradicts with \eqref{eq:length_Ij} again. 
Thus the claimed results hold.
\end{proof}

Next, we show the following.
\begin{lemma}\label{lem:I1_small}
For any $\ell\in I_1$, one has $\theta+\ell\alpha\in \bigcup_{j=1}^{N''}(U_j\cup (-(n-1)\alpha-U_j))$.
\end{lemma}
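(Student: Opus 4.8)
The plan is to show that for every $\ell\in I_1$ the point $e^{2\pi i(\theta+\ell\alpha)}$ lies within distance $e^{-n^{\gamma_2/2}}$ of one of the zeros $w_1,\dots,w_{N'}$ of $D_n(\cdot,E)$ in $A_{R'}$. By the identity established in the proof of Lemma~\ref{lem:interval},
\[
\bigcup_{k=1}^{N'}\{\theta'\in\T:\,|e^{2\pi i\theta'}-w_k|\le e^{-n^{\gamma_2/2}}\}=\bigcup_{j=1}^{N''}\big(U_j\cup(-(n-1)\alpha-U_j)\big),
\]
this is exactly the assertion of the lemma. In turn, by Lemma~\ref{lem:Car_zero} it suffices to prove, for $n$ large, that $u_n(e^{2\pi i(\theta+\ell\alpha)},E)<L_n-C_1n^{-\gamma_2/2}$. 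Since $L_n\ge L(E,0)=\tau>0$ by Theorem~\ref{thm:Ln_L} and $n^{-\gamma_2/2}\to0$, it is enough to establish $u_n(e^{2\pi i(\theta+\ell\alpha)},E)\le\tfrac{15}{16}\tau$ for all large $n$.

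The starting point is the eigenfunction expansion. For $\ell\in I_1$ and $n$ large one checks $\{0,-1\}\subset[\ell,\ell+n-1]$, so we may pick $y_0\in\{0,-1\}$ with $|\phi_{y_0}|=1$ (possible by~\eqref{eq:Schnol}) and apply~\eqref{eq:Green_exp} on the block $[\ell_1,\ell_2]=[\ell,\ell+n-1]$ at $y=y_0$; note $\ell_2-\ell_1+1=n$ and $\theta+\ell_1\alpha=\theta+\ell\alpha$. If the denominator $D_n(\theta+\ell\alpha,E)$ vanishes we are done immediately, since then $e^{2\pi i(\theta+\ell\alpha)}$ is itself one of the $w_k$. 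Otherwise, solving~\eqref{eq:Green_exp} for this denominator and using the polynomial bound $|\phi_{\ell-1}|,|\phi_{\ell+n}|\le Cn$ from~\eqref{eq:Schnol} (the indices $\ell-1,\ell+n$ have absolute value $\le n$) together with $|\phi_{y_0}|=1$, I get
\[
|D_n(\theta+\ell\alpha,E)|\le Cn\big(|D_{m_1}(\theta+(y_0+1)\alpha,E)|+|D_{m_2}(\theta+\ell\alpha,E)|\big),
\]
where $m_1=\ell_2-y_0$, $m_2=y_0-\ell$ satisfy $m_1+m_2=n-1$ and, crucially, $\tfrac18 n-2\le m_1,m_2\le\tfrac78 n+1$ because $\ell\in I_1$.

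It remains to bound the two shorter determinants from above. Combining $u_k\le v_k$ (see~\eqref{eq:un<vn}), the uniform upper bound of Lemma~\ref{lem:vn_upper} with $\varepsilon=0$, and the convergence rate of Theorem~\ref{thm:Ln_L}, one gets, uniformly in the phase and for all $1\le k\le n$,
\[
\tfrac1k\log|D_k(e^{2\pi i\theta'},E)|\le L_k(E,0)+C\tfrac{(\log k)^{C_0}}{k}\le\tau+C\tfrac{(\log n)^{C_0}}{n},
\]
i.e. $|D_k(e^{2\pi i\theta'},E)|\le\exp\!\big(k\tau+C(\log n)^{C_0}\big)$. Plugging $k=m_1,m_2$ into the previous display and using $\max(m_1,m_2)\le\tfrac78 n+1$ yields
\[
u_n(e^{2\pi i(\theta+\ell\alpha)},E)=\tfrac1n\log|D_n(\theta+\ell\alpha,E)|\le\tfrac78\tau+\frac{C(\log n)^{C_0}+\log(2Cn)+\tau}{n}\le\tfrac{15}{16}\tau
\]
for $n$ large; then $\tfrac1{16}\tau>C_1n^{-\gamma_2/2}$ for $n$ large gives $u_n(e^{2\pi i(\theta+\ell\alpha)},E)<L_n-C_1n^{-\gamma_2/2}$, and Lemma~\ref{lem:Car_zero} produces a zero $w$ of $D_n(\cdot,E)$, necessarily in $A_{R'}$ (it lies within $\tfrac{4\pi}{n}$ of the unit circle), with $|e^{2\pi i(\theta+\ell\alpha)}-w|\le\tfrac{4\pi}{n}e^{-n^{\gamma_2/2}}\le e^{-n^{\gamma_2/2}}$. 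This places $\theta+\ell\alpha$ in the union above, completing the argument.

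The only genuinely delicate step is the index bookkeeping: one must keep both blocks $[\ell,y_0]$ and $[y_0,\ell+n-1]$ appearing in~\eqref{eq:Green_exp} of length comparable to $n$ but \emph{strictly} shorter than $n$. It is precisely the resulting gain $\tfrac78\tau<\tau\le L_n$ that forces $e^{2\pi i(\theta+\ell\alpha)}$ to be resonant; the particular phases at which the two shorter determinants are evaluated are irrelevant, since the upper bound used for them is uniform in the phase.
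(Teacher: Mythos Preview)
Your proof is correct and is essentially the contrapositive of the paper's argument: the paper assumes $\theta+\ell\alpha$ lies outside the union, deduces $u_n(e^{2\pi i(\theta+\ell\alpha)})\ge L_n-n^{-\gamma_2}$, and then uses the eigenfunction expansion together with the uniform upper bound on $|D_k|$ to obtain $\max(|\phi_0|,|\phi_{-1}|)<1$, a contradiction; you run the same eigenfunction expansion and upper bound directly to force $u_n\le\tfrac{15}{16}\tau$. One minor simplification: once you have $u_n\le\tfrac{15}{16}\tau<L_n-n^{-\gamma_2}$, you already have $\theta+\ell\alpha\in\mathcal{B}_n$, and Lemma~\ref{lem:interval} immediately gives the conclusion---there is no need to pass through Lemma~\ref{lem:Car_zero} and the zero-disk description again.
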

\begin{proof}
Argue by contradiction. 
Suppose there exists $\ell_1\in I_1$ such that 
$$\theta+\ell_1\alpha\notin \bigcup_{j=1}^{N''}(U_j\cup (-(n-1)\alpha-U_j)).$$
By Lemma \ref{lem:interval}, it is necessary that $\theta+\ell_1\alpha\notin \mathcal{B}_n$, which implies
\begin{align}\label{eq:u_I1_large}
\frac{1}{n}\log |D_n(\theta+\ell_1\alpha)|=u_n(e^{2\pi i(\theta+\ell_1\alpha)})\geq L_n-n^{-\gamma_2/2}\geq L-\varepsilon,
\end{align}
where we used that $L_n\geq L$ due to Theorem \ref{thm:Ln_L}.
Let $\ell_2:=\ell_1+n-1$. 
By Lemma \ref{lem:vn_upper}, Theorem \ref{thm:Ln_L} and \eqref{eq:un<vn}, we see that for $k>k(\varepsilon)$ large enough
\begin{align}\label{eq:Dk_upper}
\frac{1}{k}\log |D_k(\theta)|=u_k(e^{i\theta})\leq v_k(e^{i\theta})\leq L_k+C\frac{(\log k)^{C_0}}{k}\leq L+\varepsilon.
\end{align}
Combining \eqref{eq:u_I1_large}, \eqref{eq:Dk_upper} with \eqref{eq:Green_exp},  
\begin{align}\label{eq:phi0}
|\phi_0|
\leq &\frac{|D_{\ell_2}(\theta+\alpha)|}{|D_n(\theta+\ell_1\alpha)|} |\phi_{\ell_1-1}|+\frac{|D_{-\ell_1}(\theta+\ell_1\alpha)|}{|D_n(\theta+\ell_1\alpha)} |\phi_{\ell_2+1}|\notag\\
\leq &e^{\ell_1 (L-20\varepsilon)} |\phi_{\ell_1-1}|+e^{-\ell_2 (L-20\varepsilon)} |\phi_{\ell_2+1}|\notag\\
\leq &C e^{\ell_1(L-20\varepsilon)} |\ell_1|+C e^{-\ell_2 (L-20\varepsilon)} |\ell_2|<\frac{1}{2}, 
\end{align}
invoking \eqref{eq:Schnol} and $\min(|\ell_1|, |\ell_2|)\geq [n/8]$. 
Similarly, one shows that $|\phi_{-1}|<1/2$.
Hence we arrive at a contradiction with the assumption that $\max(|\phi_0|, |\phi_{-1}|)=1$.
\end{proof}

Combining Lemmas \ref{lem:I1_I2}, \ref{lem:I1_small} with Lemma \ref{lem:interval} yields
\begin{corollary}\label{cor:I2_large}
There exists $\ell_3\in I_2$ such that $\theta+\ell_3\alpha\notin \mathcal{B}_n$.
\end{corollary}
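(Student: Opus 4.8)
The plan is to deduce Corollary~\ref{cor:I2_large} by combining the three preceding lemmas in a pigeonhole argument on the index set $I_1\cup I_2$. First I would recall that Lemma~\ref{lem:I1_I2} guarantees the existence of some $\ell\in I_1\cup I_2$ with
\[
\theta+\ell\alpha\notin \bigcup_{j=1}^{N''}\bigl(U_j\cup(-(n-1)\alpha-U_j)\bigr),
\]
and that by Lemma~\ref{lem:interval} this set of intervals contains $\mathcal B_n$, so in particular $\theta+\ell\alpha\notin\mathcal B_n$ for that index $\ell$. The only remaining task is to locate such an $\ell$ inside $I_2$ rather than $I_1$.

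The key step is to rule out $\ell\in I_1$. This is exactly the content of Lemma~\ref{lem:I1_small}: every $\ell\in I_1$ satisfies $\theta+\ell\alpha\in\bigcup_{j=1}^{N''}(U_j\cup(-(n-1)\alpha-U_j))$. Therefore the good index $\ell$ produced by Lemma~\ref{lem:I1_I2} cannot lie in $I_1$, and hence it must lie in $I_2$. Call it $\ell_3\in I_2$. Since $\theta+\ell_3\alpha$ avoids the union of intervals and $\mathcal B_n$ is contained in that union, we conclude $\theta+\ell_3\alpha\notin\mathcal B_n$, which is the assertion of the corollary.

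I would also note the small bookkeeping point that Lemma~\ref{lem:interval} is stated with the collection $\{U_j\}_{j=1}^{N''}$ where $N''\le \frac{N'}{2}+1$, while in the localization proof one invokes Theorem~\ref{thm:Riesz_un} with $\kappa=1$ to get the explicit bound $N'=N_n(\varepsilon)\le 2n+C\varepsilon_2^{-1}n^{1-\gamma_1}$, so that $N''\le n+C\varepsilon_2^{-1}n^{1-\gamma_1}$; this is what makes the cardinality count $\#I_1+\#I_2\ge \tfrac32 n-2\ge N''$ in Lemma~\ref{lem:I1_I2} valid for $n$ large. No new estimate is needed here beyond assembling these facts.

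The main (and essentially only) obstacle is purely logical rather than analytic: one must make sure the ``good'' index from Lemma~\ref{lem:I1_I2} and the ``bad'' behaviour on $I_1$ from Lemma~\ref{lem:I1_small} refer to the same union of intervals $\bigcup_{j=1}^{N''}(U_j\cup(-(n-1)\alpha-U_j))$, and that the containment $\mathcal B_n\subset\bigcup_{j=1}^{N''}(U_j\cup(-(n-1)\alpha-U_j))$ from Lemma~\ref{lem:interval} is used to pass from ``avoids the intervals'' to ``avoids $\mathcal B_n$''. Once those identifications are in place the corollary is immediate, so the proof is just two or three lines of combination of the cited lemmas.
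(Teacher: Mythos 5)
Your argument is exactly the combination the paper intends: Lemma~\ref{lem:I1_I2} produces a good index $\ell\in I_1\cup I_2$ outside the union of paired intervals, Lemma~\ref{lem:I1_small} forces $\ell\notin I_1$ (hence $\ell\in I_2$), and Lemma~\ref{lem:interval} gives $\mathcal{B}_n\subset\bigcup_j(U_j\cup(-(n-1)\alpha-U_j))$ so that avoiding the union implies avoiding $\mathcal{B}_n$. This matches the paper's one-line deduction, and your bookkeeping remark about $N'$, $N''$ correctly tracks the cardinality input already used in Lemma~\ref{lem:I1_I2}.
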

The proof of Anderson localization then follows from a similar argument as in the proof of Lemma~\ref{lem:I1_small}.
Indeed, let $\ell_4:=\ell_3+n-1$. Similar to \eqref{eq:phi0}, we have
\begin{align}\label{eq:phiy}
|\phi_y|\leq C e^{-(y-\ell_3)(L-20\varepsilon)} \ell_3+C e^{-(\ell_4-y)(L-20\varepsilon)}\ell_4.
\end{align}
By construction of $I_2$, 
\begin{align*}
\min(y-\ell_3, \ell_4-y)\geq \Big[\frac{1}{8}n\Big]\geq \frac{1}{90}y, \text{ and } \max(\ell_3, \ell_4)\leq y+\frac{7}{8}n<2y.
\end{align*}
Plugging the above estimates into \eqref{eq:phiy} yields
\begin{align*}
|\phi_y|\leq e^{-\frac{1}{100}(L-20\varepsilon)y}.
\end{align*}
This proves the claimed result.
\begin{remark}
It is possible to modify the proof to show the following asymptotics:
\begin{align*}
\lim_{|y|\to \infty} -\frac{\ln (|\phi_y|^2+|\phi_{y-1}|^2)}{2|y|}=-L.
\end{align*}
\end{remark}

\section{H\"older regularity of the IDS}\label{sec:IDS}

We now indicate the modifications needed of \cite[Theorem 1.1]{GS2} to prove Theorem~\ref{thm:IDS}. The main change occurs on page~848, in terms of the zero count of the Determinants $D_N$ in the annulus $A_R$. In fact, the $2k_0N$ estimate of the total number of zeros in terms of the $k_0=\deg V$ (where $V$ is the potential function), is now replaced by the sharper Theorem~\ref{thm:acc=zeros}. Thus, \cite[Corollary~14.14]{GS2} can be improved to read
\[
k(\widehat{A}_{j_0},\zeta_{j_0},r^{(2)}) \le 2 \kappa(E,0) \le 2(\ell-1)
\]
for all energies $E\in \bigcup_{j=1}^{\ell} \mathcal{S}_j^{+}$. By the proof of Theorem~1.4 on page~849 of~\cite{GS2} we conclude that that theorem holds with $k_0\le 2\kappa(E,0)$ for all $E\in \cup_{j=1}^{\ell} \mathcal{S}_j^{+}$. This in turn then improves on \cite[Lemma 17.5]{GS2}, with the same $k_0$. Finally, this improvement of Lemma~17.5 allows for the
H{o}lder exponents stated in Theorem~\ref{thm:IDS}, see  \cite[Section 18]{GS2}.

\end{document}